\documentclass[12pt]{article}

\usepackage[utf8]{inputenc}
\usepackage{tabularray}
\usepackage[export]{adjustbox}

\usepackage[sectionbib]{natbib}
\usepackage{array,epsfig,rotating}
\usepackage{hyperref}
\hypersetup{
  colorlinks,
  linkcolor={red!50!black},
  citecolor={blue},
  urlcolor={blue!80!black}
}
\usepackage{color}
\usepackage[dvipsnames]{xcolor}
\usepackage{tabularx}
\usepackage{amsmath}
\usepackage{amssymb}
\usepackage{amsfonts}
\usepackage{amsthm}
\usepackage{float}
\usepackage{verbatim}
\usepackage{subcaption}
\usepackage[ruled, vlined, lined, commentsnumbered]{algorithm2e}
\usepackage[nodisplayskipstretch]{setspace}
\usepackage{booktabs}
\usepackage{graphicx}
\usepackage{makecell}
\usepackage{multirow}
\usepackage[noabbrev,capitalize]{cleveref}
\usepackage[shortlabels]{enumitem}
\setlist{itemsep=1pt, topsep=2pt, parsep=0pt, partopsep=0pt}

\usepackage[margin=1in]{geometry}

\usepackage{forest}
\newtheorem{theorem}{Theorem}
\newtheorem{assumption}{Assumption}
\newtheorem{lemma}{Lemma}

\newtheorem{proposition}{Proposition}

\newtheorem{remark}{Remark}

\newtheorem*{theorem*}{Theorem}

\newcommand{\indep}{\perp \!\!\! \perp}
\newcommand{\R}{\mathbb R}
\newcommand{\RR}{\mathbf R}
\newcommand{\XX}{\mathbf X}
\newcommand{\ZZ}{\mathbf Z}
\newcommand{\UU}{\mathbf U}
\newcommand{\WW}{\mathbf W}
\newcommand{\GG}{\mathbf G}
\newcommand{\II}{\mathbf I}
\newcommand{\PP}{\mathbb P}
\newcommand{\bpi}{\boldsymbol{\pi}}
\newcommand{\btheta}{\boldsymbol{\theta}}
\newcommand{\bTheta}{\boldsymbol{\Theta}}
\newcommand{\bDelta}{\boldsymbol{\Delta}}
\newcommand{\bSigma}{\boldsymbol{\Sigma}}
\newcommand{\bOmega}{\boldsymbol{\Omega}}
\newcommand{\bLambda}{\boldsymbol{\Lambda}}
\newcommand{\bepsilon}{\boldsymbol{\epsilon}}

\newcommand{\sumin}{\sum_{i=1}^N}

\newcommand{\EE}{\mathbb E}

\def\sgn{\textnormal{sgn}}
\def\hat{\widehat}
\def\tilde{\widetilde}

\newcommand{\cbr}[1]{\left\{ #1 \right\}}

\DeclareMathOperator*{\argmin}{argmin}
\DeclareMathOperator*{\argmax}{argmax}

\usepackage{tikz}
\usetikzlibrary{shapes, calc, shapes, arrows, positioning}
\tikzstyle{qedge}=[->,thick,black]
\tikzstyle{pre}=[->,thick,dotted]

\definecolor{myblue}{rgb}{0.0265,    0.6137,    0.8135}
\definecolor{myyellow}{rgb}{0.9290,    0.6940,    0.1250}
\tikzstyle{neuron}=[draw,circle,minimum size=26pt,inner sep=0pt, fill=black!10]
\tikzstyle{hidden}=[draw,circle,minimum size=26pt,inner sep=0pt, fill=white]
\usetikzlibrary{arrows.meta}
\tikzset{>={Latex[width=3mm,length=2mm]}}
\tikzstyle{arr}=[->, thick, black]
\usetikzlibrary{decorations.text}
\usetikzlibrary{shadings,shapes.symbols}
\tikzset{
    double color fill/.code 2 args={
        \pgfdeclareverticalshading[%
            tikz@axis@top,tikz@axis@middle,tikz@axis@bottom%
        ]{diagonalfill}{100bp}{%
            color(0bp)=(tikz@axis@bottom);
            color(50bp)=(tikz@axis@bottom);
            color(50bp)=(tikz@axis@middle);
            color(50bp)=(tikz@axis@top);
            color(100bp)=(tikz@axis@top)
        }
        \tikzset{shade, left color=#1, right color=#2, shading=diagonalfill}
    }
}

\newcommand\blfootnote[1]{%
  \begingroup
  \renewcommand\thefootnote{}\footnote{#1}%
  \addtocounter{footnote}{-1}%
  \endgroup
}

\title{Beyond Local Independence: High-Dimensional Latent Class Graphical Models with Shared Block Structure}
\author{Seunghyun Lee and Yuqi Gu}
\date{Department of Statistics, Columbia University}

\def\spacingset#1{\renewcommand{\baselinestretch}%
{#1}\small\normalsize} 
\allowdisplaybreaks

\begin{document}
\maketitle
\blfootnote{Emails: \texttt{sl4963@columbia.edu}, \texttt{yuqi.gu@columbia.edu}}
\vspace{-5mm}
\begin{abstract}
Latent class models are central tools for multivariate categorical data from heterogeneous populations, but their standard local-independence assumption is often unrealistic in modern high-dimensional applications. We propose a high-dimensional latent class graphical model for ordinal responses with block-structured local dependence. The model retains the interpretability and parsimony of classical latent class analysis by imposing a shared block partition of variables, while allowing class-specific graphical dependence within each block. We develop a scalable three-step estimator that first recovers latent classes by spectral clustering of a flattened response matrix, then estimates class-specific latent covariance matrices and aggregates them to recover the shared block partition, and finally estimates sparse within-block precision matrices. We establish finite-sample error bounds for clustering, covariance estimation, block recovery, and precision-matrix estimation, yielding end-to-end consistency of all model components under high-dimensional scaling. Simulations demonstrate accurate recovery of latent classes, the shared block partition, and class-specific dependence graphs with scalable computation. Applications to American National Election Studies survey data and HapMap3 genotype data show that the method uncovers interpretable local dependence structures while accounting for latent heterogeneity.
\end{abstract}

\noindent
\textbf{Keywords}:
Latent class models; Local dependence; High-dimensional graphical models; Block structure; Gaussian copula

\spacingset{1.6}

\section{Introduction}

Modern social-science and biomedical studies often collect high-dimensional ordinal or categorical measurements in the presence of unobserved population heterogeneity.
Latent class models (LCMs) are widely used mixture models for such data: each individual belongs to a latent class, and the class-specific multivariate response distributions capture heterogeneity across latent subpopulations. Since their introduction in \cite{lazarsfeld1950logical}, LCMs have become standard tools for clustering and modeling multivariate categorical responses in social science, education, and related fields \citep{magidson2020latent,vermunt2002latent,wang2011latent,korpershoek2015differences,lyu2025degree,Lyu2026latent}.

A central simplifying assumption in traditional LCMs is \emph{local independence}: conditional on the latent class, the observed multivariate responses are mutually independent. This assumption makes the model parsimonious and computationally tractable, but it is often too restrictive in applications, especially for high-dimensional responses. For example, in public opinion surveys, items concerning the same political topic, using similar wording, or appearing in related questionnaire modules may remain dependent even after conditioning on a respondent's latent political subgroup.
In genetic SNP genotype data, nearby genetic markers often exhibit dependence due to \emph{linkage disequilibrium}, producing block-like correlation patterns along the genome \citep{barrett2005haploview,gabriel2002structure}. In both examples, local dependence arises from common, interpretable groupings of variables. Ignoring such dependence can distort parameter estimates and latent class assignments, as well as limit scientific interpretation %
\citep{vacek1985effect,hagenaars1988latent}.

This challenge of relaxing local independence is further amplified in modern high-dimensional data with an \emph{unknown dependence} structure. As the number of responses $J$ grows, estimating the dependence structure among the $J$ variables becomes computationally challenging. 
Existing approaches to local dependence in LCMs provide important modeling tools, but they are typically developed for lower-dimensional settings, rely on prespecified or heavily parameterized (unstructured) dependence structures, and do not provide statistical guarantees \citep[][see Supplement \ref{supp:literature} for a detailed discussion]{hagenaars1988latent,qu1996random,lee2020detecting,lee2022estimating,mazaheri2023causal}. %
Conversely, high-dimensional Gaussian-copula graphical models are designed to learn dependence among variables \citep{liu2012high,danaher2014joint,guo2015graphical,fan2017high,feng2019high}, but generally do not address unknown heterogeneous latent classes. Thus, there remains a need for principled statistical methodology that jointly learns latent heterogeneity and structured local dependence from high-dimensional ordinal responses.
Motivated by this gap, we propose a high-dimensional latent class graphical model with shared block structure. Conditional on the latent class, ordinal responses are modeled through a latent Gaussian copula, whose covariance matrix is block diagonal after a shared permutation of variables. The shared block partition captures interpretable groups of locally dependent variables, while the within-block graphical structure is allowed to vary across latent classes.
Going back to the survey example, this block structure represents item groups corresponding to different survey topics. Within each block, the dependence structure is allowed to vary flexibly across classes, capturing class-specific heterogeneity. Thus, our modeling strategy provides a middle ground between the classical local independent LCM and an unrestricted mixture of graphical models, where the latter is difficult to estimate and interpret in high dimensions.

Methodologically, we introduce a computationally efficient three-step estimation pipeline to learn the shared block structure as well as all other model components (such as the latent classes and latent covariance/precision matrices) in an unsupervised manner. First, we use the low-rank structure of the response matrix and apply \emph{spectral clustering} of the flattened data matrix to learn the latent class labels. Second, within each estimated latent class, we estimate the latent covariance matrix using pairwise \emph{polychoric correlations}, and then aggregate the resulting sparsity patterns across classes to recover the shared block partition. Finally, after the blocks have been learned, we estimate the finer conditional dependence structures within them through sparse \emph{precision-matrix estimation}.
Each step is computationally feasible for large $J$, avoiding the full likelihood optimization that would be required by an unrestricted locally dependent LCM. 
Overall, this yields an unsupervised, scalable, and theoretically justified approach for learning latent classes, shared block structure, and class-specific graphical dependence in high dimensions.
Importantly, our theory provides end-to-end guarantees: we derive non-asymptotic error bounds for each estimation stage and establish consistency of the complete pipeline under a high-dimensional scaling.

Simulation studies show that the proposed method accurately recovers latent classes, shared block structures, and class-specific precision matrices across a range of high-dimensional settings. We demonstrate the proposed methodology in two applications that motivate the study: public-opinion survey responses from the American National Election Studies and SNP genotype data from HapMap3. In the survey application, the learned blocks correspond to interpretable item groups such as political engagement and racism. In the genotype application, the method detects dependent blocks among SNPs in the presence of unknown, heterogeneous subpopulations. Together, these examples illustrate that the proposed framework can uncover interpretable block dependence structures while accounting for latent heterogeneity.

The remainder of this paper is organized as follows. Section \ref{sec:setup} proposes the block-dependent latent class model after reviewing the standard latent class model. Section \ref{sec:theory and method} details the three-step estimation methodology along with theoretical guarantees and implementation details. Section \ref{sec:sims} conducts simulations to evaluate the statistical performance and computational efficiency of the proposed method. Section \ref{sec:data} demonstrates the real-world utility of our method by applying it to two diverse real datasets, one from social science surveys and one from genetics. Section \ref{sec:discussion} concludes the paper by discussing avenues for future research. The R code for the proposed method is available at \url{https://github.com/seunghyun-stats/Block-dependent-LCM}. 

\paragraph{Notation.}
For any positive integer $K$, set $[K] := \{1, \ldots, K\}$. Let $\Delta^K$ denote the $K$-dimensional probability simplex. Given a finite set $A$, let $\mathcal{S}_A$ denote the set of permutations on $A$. 
Let $\Phi/\phi$ denote the c.d.f./p.d.f. of the standard normal, respectively. Similarly, let $\Phi_2$ denote the c.d.f. of a bivariate normal $(X_1, X_2)^\top$ where $X_1, X_2$ have standard normal marginals and correlation $\rho$, i.e., $\Phi_2(u,v,\rho) := \PP(X_1 \le u, ~X_2 \le v).$
For a $J \times J$ matrix $\mathbf M$, denote the operator norm (spectral norm) as $\|\mathbf M\|$. Denote the entry-wise maximum norm as $\|\mathbf M\|_{\max} := \max_{j, j' \in [J]} |\mathbf M_{(j,j')}|$. For a (not necessarily square) matrix $\mathbf M$, denote its $k$th largest singular value as $\sigma_k(\mathbf M)$. 
Throughout the paper, $c_1, c_2, \ldots$ will denote constants whose exact values may vary across contexts.

\section{Model Setup}\label{sec:setup}
\subsection{Latent class models}\label{subsec:lcm}
We first introduce the standard LCM framework. For $N$ individuals and $J$ ordinal items, let $R_{i,j} \in [C_j]$ denote the response of individual $i$ to item $j$, where $C_j \ge 2$ is the number of categories for item $j$. Denote the aggregate $N \times J$ response matrix as $\RR$.

Each individual has a latent class $Z_i \in [K]$, generated with proportion parameter $\bpi = (\pi_1, \ldots, \pi_K)^\top \in \Delta^K$:
\begin{equation}\label{eq:latent class Z}
    Z_i \stackrel{i.i.d.}{\sim} \textsf{Categorical}(\bpi).
\end{equation}
In other words, we assume $\PP(Z_i = k) = \pi_k$.
Conditional on the latent class $Z_i$, the item responses $R_{i,j}$ are generated independently, i.e., $R_{i,1} \indep \ldots \indep R_{i,J} \mid Z_i$:
\begin{equation}\label{eq:LCM response}
    R_{i,j} \mid (Z_i = k) \stackrel{ind.}{\sim} \textsf{Categorical}(\btheta_{j,k}), \quad \forall k \in [K],
\end{equation}
where $\btheta_{j,k} = (\theta_{j,1,k},\ldots,\theta_{j,C_j,k})^\top \in \Delta^{C_j}$ is the class-specific probability mass function for item $j$. These vectors differ across $k$, reflecting heterogeneity across latent classes.

As discussed in the introduction, despite its modeling convenience, this \emph{local independence} assumption in \eqref{eq:LCM response} is often violated in practice. Next, we describe the proposed model that allows structured conditional dependence.

\subsection{Block-dependent latent class models}\label{subsec:block-dependent LCM}

We define the model through the generative process for $\RR$. The usual LCM is generalized by viewing ordinal responses as discretizations of a latent Gaussian; see \cref{fig:data-generating} for a graphical illustration. Given a latent class $Z_i = k$ as in \eqref{eq:latent class Z}, generate a latent Gaussian vector $\XX_i = (X_{i,1},\ldots,X_{i,J})^\top$:
\begin{equation}\label{eq:X given Z}
    \XX_i \mid Z_i = k \stackrel{i.i.d.}{\sim} \mathcal{N}_J(\mathbf 0, \bSigma_k).
\end{equation}
Here, without loss of generality, assume $\text{diag}(\bSigma_k) = \mathbf 1_J$ for all $k \in [K]$. The observed ordinal responses are obtained by thresholding $X_{i,j}$ at $\bDelta_{j,k} = (\Delta_{j,1,k}, \ldots, \Delta_{j,C_j-1,k})^\top$:
\begin{align}\label{eq:R}
    R_{i,j} = \begin{cases}
        1 & \text{if} ~~ X_{i,j} < \Delta_{j,1,k}, \\
        2 & \text{if} ~~ \Delta_{j,1,k} \le X_{i,j} < \Delta_{j,2,k}, \\
        \ldots,\\
        C_j & \text{if} ~~  \Delta_{j,C_j-1,k} \le X_{i,j}.
    \end{cases}
\end{align}
Here $\Delta_{j,1,k} < \cdots < \Delta_{j,C_j-1,k}$, so larger latent Gaussian values produce larger ordinal responses. Write $\bDelta_k := (\bDelta_{1,k}^\top, \ldots, \bDelta_{J,k}^\top)^\top$ and set $\Delta_{j,0,k} = -\infty$, $\Delta_{j,C_j,k} = \infty$. This latent-threshold formulation is standard for ordinal responses \citep{muthen1984general,uebersax1999probit}; special cases include \cite{guo2015graphical}, which corresponds to $K=1$, and \cite{lee2022estimating}, which assumes class-homogeneous thresholds $\bDelta_{j,k} \equiv \bDelta_j$. When clear from context, we sometimes drop the individual-level index $i$ and write $Z, \XX, \RR$.

\begin{figure}[h!]
\centering
\resizebox{\textwidth}{!}{
  \begin{tikzpicture}[node distance={12mm},
                      empty/.style = {}
                     ] 

  \begin{scope}
  \node[neuron] (R11) {$R_1$}; 
  \node[neuron] (R12) [right of=R11]  {$R_2$}; 
  \node[neuron] (R13) [right of=R12]  {$R_3$}; 
  \node[neuron] (R14) [right of=R13]  {$\cdots$}; 
  \node[neuron] (R15) [right of=R14]  {${\cdots}$}; 
  \node[neuron] (R16) [right of=R15]  {$R_{J-1}$};
  \node[neuron] (R17) [right of=R16]  {$R_J$};

  \node[hidden] (X11) [above=1cm of R11] {$X_1$}; 
  \node[hidden] (X12) [right of=X11]  {$X_2$}; 
  \node[hidden] (X13) [right of=X12]  {$X_3$}; 
  \node[hidden] (X14) [right of=X13]  {${\cdots}$}; 
  \node[hidden] (X15) [right of=X14]  {${\cdots}$}; 
  \node[hidden] (X16) [right of=X15]  {$X_{J-1}$};
  \node[hidden] (X17) [right of=X16]  {$X_J$};
 
  \node[hidden] (Z1) [above=1cm of X14] {$Z = 1$};    

  \node[empty,red] (V11) [below=0.2cm of R12] {$V_1 = \{1,2,3\}$}; 
  \node[empty,red] (V12) [below=0.2cm of R16] {$V_L = \{J-1, J\}$}; 

  \node [draw=red,dashed,fit=(X11)(X12)(X13)(R11)(R12)(R13), inner sep=0.1cm, rounded corners] {};
  \node [draw=red,dashed,fit=(X16)(X17)(R16)(R17), inner sep=0.1cm, rounded corners] {};
  
  \draw[->] (Z1) -- (X11); \draw[->] (Z1) -- (X12); \draw[->] (Z1) -- (X13); \draw[->] (Z1) -- (X14); \draw[->] (Z1) -- (X15); \draw[->] (Z1) -- (X16); \draw[->] (Z1) -- (X17);

  \draw[->, red, pre] (X11) -- (R11); \draw[->, red, pre] (X12) -- (R12); \draw[->, red, pre] (X13) -- (R13); \draw[->, red, pre] (X14) -- (R14); \draw[->, red, pre] (X15) -- (R15); \draw[->, red, pre] (X16) -- (R16); \draw[->, red, pre] (X17) -- (R17);

  \draw[] (X11) -- (X12); \draw[] (X12) -- (X13); \draw[] (X14) -- (X15); \draw[] (X16) -- (X17);
  \end{scope}

  \begin{scope}[xshift=10cm]
  \node[neuron] (R21) {$R_1$}; 
  \node[neuron] (R22) [right of=R21]  {$R_2$}; 
  \node[neuron] (R23) [right of=R22]  {$R_3$}; 
  \node[neuron] (R24) [right of=R23]  {$\cdots$}; 
  \node[neuron] (R25) [right of=R24]  {${\cdots}$}; 
  \node[neuron] (R26) [right of=R25]  {$R_{J-1}$};
  \node[neuron] (R27) [right of=R26]  {$R_J$};

  \node[hidden] (X21) [above=1cm of R21] {$X_1$}; 
  \node[hidden] (X22) [right of=X21]  {$X_2$}; 
  \node[hidden] (X23) [right of=X22]  {$X_3$}; 
  \node[hidden] (X24) [right of=X23]  {${\cdots}$}; 
  \node[hidden] (X25) [right of=X24]  {${\cdots}$}; 
  \node[hidden] (X26) [right of=X25]  {$X_{J-1}$};
  \node[hidden] (X27) [right of=X26]  {$X_J$};
 
  \node[hidden] (Z2) [above=1cm of X24] {$Z = 2$};    

  \node[empty,red] (V21) [below=0.2cm of R22] {$V_1 = \{1,2,3\}$}; 
  \node[empty,red] (V22) [below=0.2cm of R26] {$V_L = \{J-1, J\}$}; 

  \node [draw=red,dashed,fit=(X21)(X22)(X23)(R21)(R22)(R23), inner sep=0.1cm, rounded corners] {};
  \node [draw=red,dashed,fit=(X26)(X27)(R26)(R27), inner sep=0.1cm, rounded corners] {};
  
  \draw[->] (Z2) -- (X21); \draw[->] (Z2) -- (X22); \draw[->] (Z2) -- (X23); \draw[->] (Z2) -- (X24); \draw[->] (Z2) -- (X25); \draw[->] (Z2) -- (X26); \draw[->] (Z2) -- (X27);

  \draw[->, red, pre] (X21) -- (R21); \draw[->, red, pre] (X22) -- (R22); \draw[->, red, pre] (X23) -- (R23); \draw[->, red, pre] (X24) -- (R24); \draw[->, red, pre] (X25) -- (R25); \draw[->, red, pre] (X26) -- (R26); \draw[->, red, pre] (X27) -- (R27);

  \draw[] (X21) to[bend left=18] (X23); \draw[] (X22) -- (X23); \draw[] (X24) -- (X25); \draw[] (X26) -- (X27);
  \end{scope}

  \end{tikzpicture}
}
  \caption{Graphical model illustration of the data generating process. Here, $Z$ denotes the categorical latent class, $\XX = (X_1, \ldots, X_{J})^\top$ denotes the Gaussian latent vector, and $\RR = (R_1, \ldots, R_{J})^\top$ denotes the ordinal response vector. The dotted red boxes indicate the shared block structure, while the within-block graphical structure may vary across latent classes. We omit the individual-level index $i$ for simplicity.} 
  \label{fig:data-generating}
\end{figure}

Our key modeling assumption is a \emph{shared block dependence} structure across all $K$ mixture components.  To elaborate, assume that the $J \times J$ covariance matrices $\bSigma_k = \big(\sigma_{(j,j'),k}\big)_{j, j' \in [J]}$ are block-diagonal after appropriately permuting the row/column indices:
\begin{align}\label{eq:sigma_k block structure}
\bSigma_k = \begin{pmatrix}
   \bSigma_{k}^{(1)} & 0 & \cdots & 0 \\
    0 & \bSigma_{k}^{(2)} & \cdots & 0 \\
    \vdots & \vdots & \ddots & \vdots \\
    0 & 0 & \cdots & \bSigma_{k}^{(L)}
\end{pmatrix} = \text{diag}(\bSigma_k^{(1)}, \ldots, \bSigma_k^{(L)}),\quad k=1,\ldots,K.
\end{align}
Here, each $\bSigma_k^{(\ell)}$ is positive definite with size $J_{\ell} \times J_{\ell}$, $L$ is the number of blocks, and $\sum_{\ell=1}^L J_\ell = J$. For notational convenience, define a vertex set $V = [J] = \cup_{\ell=1}^L V_\ell$, so that each $V_\ell$ denotes the vertices belonging to the $\ell$th block and $|V_\ell| = J_\ell$.

To ensure that the block structure is well-defined (in the sense that there is no finer possible partition of $[J]$), we impose the following assumption. 

\begin{assumption}\label{assump:minimal strength}
For any $\ell \in [L]$, define a graph on $V_\ell$ with edges $E_\ell = \{(j,j'): j \neq j' \in V_{\ell}, ~ \max_k |\sigma_{(j,j'),k}| > \epsilon_N\}$, where $\epsilon_N >0$ is a quantity that is allowed to converge to $0$ as $N$ grows. Assume that the graph $(V_\ell, E_\ell)$ is connected. 
\end{assumption} 
\noindent The connectivity in \cref{assump:minimal strength} %
ensures local dependence within each block. In contrast, items in different blocks are constrained to be conditionally independent.

To further understand finer conditional dependence relationships within each block, it is helpful to introduce the precision matrices $\bOmega_k = \bSigma_k^{-1} = \big(\omega_{(j,j'),k}\big)_{j, j' \in [J]}$:
$$\bOmega_k = \text{diag}(\bOmega_k^{(1)}, \ldots, \bOmega_k^{(L)})
,\quad k=1,\ldots,K.$$ 
Here, the block diagonal structure of $\bOmega_k$ follows immediately from that of $\bSigma_k$ in \eqref{eq:sigma_k block structure}. Since $\XX \mid (Z=k)$ follows a Gaussian graphical model with precision matrix $\bOmega_k$, the usual conditional independence relationships follow: for $j \neq j'$, $\omega_{(j,j'),k} = 0$ if and only if $X_j \indep X_{j'} \mid (X_{[J] \setminus \{j,j'\}}, Z=k)$. For notational convenience, denote the collections of all covariance and precision matrices as $\bSigma = (\bSigma_{1}, \ldots, \bSigma_{K})$ and $\bOmega = (\bOmega_{1}, \ldots, \bOmega_{K})$, respectively.

The classical locally independent LCM (see \cref{subsec:lcm}) is recovered by taking $L=J$ and $J_\ell = 1$ for each $\ell\in[L]$. The covariance matrices simply reduce to $\bSigma_k^{(\ell)} = 1$, and we recover the conditional independence of the responses. This is equivalent to directly parametrizing the conditional response probabilities, e.g. $\PP(R_j=a\mid Z=k)=\theta_{j,a,k}$ as in \eqref{eq:LCM response}.

\subsection{Assumptions}
We describe two general assumptions that will be imposed for all theoretical results. We first assume that the number of ordinal categories is bounded. This is a natural assumption in the context of our applications, such as surveys or ratings, for which the number of categories is fundamentally bounded. 
\begin{assumption}[Bounded number of categories]\label{assmp:bounded number of categories}
    Assume that $C_{\max} := \max_{j \in [J]} C_j < \infty$.
\end{assumption}

Next, we impose some assumptions regarding the parameter values.

\begin{assumption}[Bounded parameter space]\label{assmp:bounded parameter}
    Assume that there exist constants $\delta_1, \delta_2, \delta_3, M > 0$ such that, for every $k \in [K]$ and every $j \neq j' \in [J]$,
    $$|\sigma_{(j,j'),k}| \le 1-\delta_1, \quad \max_{a \in [C_j-1]} |\Delta_{j,a,k}| \le M, \quad \min_{2 \le a \le C_j -1} |\Delta_{j,a,k} - \Delta_{j,a-1,k}| \ge \delta_2, \quad \pi_k \ge \frac{\delta_3}{K}.$$ %
\end{assumption}

\noindent Here, the first part regarding $\bSigma_k$ rules out the degeneracy where two entries of the latent Gaussian $\XX$ are nearly collinear. %
The second and third parts require minimum separation between the thresholds $\Delta_{j,a}$, and ensure that the probability of each category is bounded away from zero. For the special case of binary observations with $C_j = 2$, the third condition can be ignored. Finally, the assumption on the proportion $\bpi$ states that all $K$ latent classes have comparable cluster sizes, and is included mainly for simpler presentation.

\section{Estimation Method with Entry-wise Guarantees}\label{sec:theory and method}
Given the $N \times J$ observation matrix $\RR$ generated from \eqref{eq:latent class Z}--\eqref{eq:R}, our goal is to estimate the model components:
\begin{enumerate}[(a)]
    \item \label{a}the latent-class vector $\ZZ = (Z_1, \ldots, Z_N)^\top$,
    \item \label{b}the number of blocks $L$ and the partition of the vertices $\{V_1, \ldots, V_L\}$,
    \item \label{c}the covariance/precision matrices $\bSigma, \bOmega$,
    \item \label{d}other continuous parameters (the latent class proportions $\bpi$ and the thresholds $\bDelta$).
\end{enumerate}
For simplicity, we will assume that the number of latent classes $K$ is known throughout this section (see Supplement \ref{supp:select-K} for details on selecting $K$). 

As mentioned in the introduction, our estimation pipeline consists of three steps, which we outline below:
\begin{enumerate}[1.]
    \item Recovering $\ZZ, \bpi, \bDelta$ via spectral clustering (see \cref{subsec:spectral}),
    \item Estimating the covariance matrices $\bSigma$ and the block structure by estimating pairwise polychoric correlations (see \cref{subsec:covariance matrix}),
    \item Estimating the precision matrices $\bOmega$ and finer graphical structures (see \cref{subsec:sparsity}).
\end{enumerate}
In addition to describing the algorithm, each of the first three subsections below provides theoretical finite-sample error bounds, whose proofs are postponed to Supplement \ref{supp:proof}. Further details regarding practical implementation choices are given in \cref{subsec:additional choices}.

\subsection{Step 1: Spectral clustering based on the flattened responses}\label{subsec:spectral}

Spectral clustering is a computationally efficient method widely used in network community detection and mixture models \citep{chen2021spectral,abbe2022l,zhang2024leave}. We leverage the approximate low-rank structure of the $N \times J$ response matrix $\RR$ to recover the latent-class vector $\ZZ$. To see the motivation, first consider the special case where all responses are binary (i.e., $C_j = 2$ for all $j$) and $R_{i,j}$ takes values $0,1$. Then, \eqref{eq:X given Z}--\eqref{eq:R} imply
\begin{equation}\label{eq:conditional expectation}
    \EE\big[R_{i,j} \mid (Z_i = k)\big] = 1 - \Phi(\Delta_{j,1,k}) := \theta_{j,k}.
\end{equation}
In matrix form, we can write $\EE\big[\RR \mid \ZZ \big] = \mathbf{Y} \bTheta^\top$, where $\mathbf{Y}$ is the $N \times K$ matrix whose $i$th row is a one-hot encoding vector of the corresponding scalar latent-class label $Z_i \in [K]$, and $\bTheta = (\theta_{j,k})$ is a $J \times K$ matrix storing the latent-class-specific conditional response probabilities. This shows that the data $\RR$ can be written as the sum of a low-rank matrix $\mathbf{Y} \bTheta^\top$ with rank at most $K$ and a mean-zero noise matrix $\RR - \mathbf{Y} \bTheta^\top$.

In the context of mixture models, spectral clustering is mainly used for real-valued data. In our general setup with potentially $C_j > 2$, one challenge arises from the ordinal nature of the responses. As the response space is non-Euclidean, it is unclear at first glance how to expose the approximate low-rank structure and use it for clustering. We therefore use the data-flattening technique of \cite{chen2024generalized} to transform the data into a \emph{flattened binary matrix}, defining an $N \times \sum_{j = 1}^J C_j$ matrix $\tilde{\RR}$ by
\begin{equation}\label{eq:flatten}
    \tilde{R}_{i, \sum_{m=1}^{j-1} C_m + a} = \mathbb{I}(R_{i,j} = a), \quad \text{for} ~ a \in [C_j].
\end{equation}
In $\tilde{\RR}$, each ordinal entry $R_{i,j} \in [C_j]$ is replaced by a one-hot encoding vector of length $C_j$. For example, when $C_j = 3$, $R_{i,j}=1$ is replaced by $(1,0,0)$, and $R_{i,j}=2$ is replaced by $(0,1,0)$.

After this transformation, the conditional expectation of the flattened data matrix $\tilde{\RR}$ can be spelled out similarly to \eqref{eq:conditional expectation}:
    \begin{align}
        \EE\big[\tilde{R}_{i,\sum_{m=0}^{j-1} C_m + a} \mid (Z_i = k) \big] &= \PP\big(R_{i,j} = a \mid (Z_i = k)\big) \label{eq:conditional expectation extended} \\
        &= \Phi(\Delta_{j,a,k})-\Phi(\Delta_{j,a-1,k}) := \theta_{j,a,k}, \quad \forall j \in [J], ~ a\in [C_j]. \notag
    \end{align}
In matrix form, $\EE[\tilde{\RR} \mid \ZZ] = \mathbf{Y} \bTheta^\top$, where $\bTheta$ is a $\left(\sum_{j = 1}^J C_j\right) \times K$ matrix. This low-rank expectation structure implies that spectral clustering can be applied to $\tilde{\RR}$. The overall procedure is outlined in \cref{algo:spectral}, where we mainly follow the implementation in \cite{zhang2024leave}.

\begin{algorithm}[h]
\SetAlgoLined
 \KwIn{Data matrix $\RR =(\RR_1^\top,\ldots,\RR_N^\top)^\top \in \mathbb{R}^{N\times J}$, number of clusters $K$}
 \KwOut{Cluster assignment vector $\hat{\ZZ} \in [K]^N$}

 \nl Flatten the $N \times J$ ordinal matrix $\RR$ into a $N \times \sum_{j = 1}^J C_j$ binary matrix $\tilde{\RR}$ as in \eqref{eq:flatten}.

 \nl Compute a top-$K$ SVD of $\tilde{\RR}$: $$\tilde{\RR} \approx \hat{\UU} \hat{\bLambda} \hat{\WW}^\top =  \sum_{i=1}^{K} \hat\sigma_i  \hat{\mathbf u}_i  \hat{\mathbf w}_i^\top,$$ where $  \hat \sigma_1 \geq   \hat \sigma_2 \geq \ldots \geq   \hat \sigma_{K} \geq 0$ and $\cbr{\hat{\mathbf u}_i}_{i=1}^{K}\in \mathbb{R}^N, \cbr{\hat{\mathbf w}_i}_{i=1}^{K}\in \mathbb{R}^{\sum_{j=1}^J C_j}$. %

 \nl Perform $k$-means clustering on the rows of $\tilde{\RR} \hat{\WW} = \hat{\UU} \hat{\bLambda}$:
 \begin{align}
 \cbr{\hat{\ZZ},\cbr{\hat{\mathbf d}_k}_{k\in[K]}} = \argmin_{\ZZ\in [K]^N , \cbr{\mathbf d_k}} \sum_{i\in[N]} \|\hat{\WW}^\top \tilde{\RR}_i - \mathbf d_{Z_i}\|^2.\label{eq:k-means}
 \end{align}
\caption{Spectral Clustering of Polytomous LCMs based on Flattening}\label{algo:spectral}
\end{algorithm}

Let $\hat{\ZZ} \in [K]^N$ denote the estimated cluster labels (see \eqref{eq:k-means}), and let $\hat{N}_k = \sum_{i=1}^N \mathbb{I}(\hat{Z}_i = k)$ denote the estimated size of the $k$th cluster. To evaluate the clustering accuracy, define the misclustering error (Hamming loss) as:
\begin{equation}\label{eq:misclustering}
    \ell(\hat{\ZZ}, \ZZ) := \min_{S \in \mathcal{S}_{[K]}} \sum_{i=1}^N \mathbb{I}(\hat{Z}_i \neq S(Z_i)).
\end{equation}
Here, the minimum taken over permutations $\mathcal{S}_{[K]}$ is necessary due to label permutation. Define $\tilde{J}_{\max}$ as the maximum length of a flattened block in $\tilde{\RR}$, and $D$ as the minimum separation between the threshold vectors $\bDelta_k$:
$$\tilde{J}_{\max} := \max_{\ell=1}^L \sum_{j \in V_\ell} C_j, \quad D := \min_{k \neq k' \in [K]} \|\bDelta_k - \bDelta_{k'}\|.$$ %
The following proposition provides exact recovery guarantees for the latent-class vector $\ZZ$ in the high-dimensional regime. Note that all dimensions $N,J,\tilde{J}_{\max},K$ are allowed to grow, while $\tilde{J}_{\max}$ and $K$ may also be of constant order.

\begin{proposition}\label{prop:clustering}
    Suppose that Assumptions \ref{assmp:bounded number of categories}, \ref{assmp:bounded parameter} hold. %
    There exist constants $c_1, c_2 > 0$ (that depend on $C_{\max}, \delta_3, M$) such that when $\min\left(D, \sqrt{K} \sigma_K(\bTheta)\right) \ge c_1 K\sqrt{\tilde{J}_{\max}}\left(1+\sqrt{J/N} \right)$ and $c_1 N \ge K^2$, we have
    \begin{equation}\label{eq:clustering accuracy}
        \EE \ell(\hat{\ZZ}, \ZZ) \le N \Big(e^{-{c_2 D^2}/{\tilde{J}_{\max}}} + Ke^{-{c_2N}/{K}}\Big) {=: p_1}.
    \end{equation}
    In particular, when $D \ge c_3 \sqrt{\tilde{J}_{\max} \log N}$ for some large enough $c_3>0$, the right hand side of \eqref{eq:clustering accuracy} is $O(N^{-1})$, so $\hat{\ZZ}$ achieves \textbf{exact recovery} with probability at least $1-O(N^{-1})$. 
\end{proposition}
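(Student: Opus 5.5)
The plan is to reduce the statement to the general entrywise/leave-one-out guarantee for spectral clustering in \cite{zhang2024leave}, applied to the flattened matrix $\tilde{\RR}$. Write $\tilde{\RR} = \mathbf{Y}\bTheta^\top + \mathbf{E}$ with $\mathbf{E} := \tilde{\RR} - \EE[\tilde{\RR}\mid\ZZ]$. Conditional on $\ZZ$, the rows $\mathbf{E}_i$ are independent and mean zero. They are \emph{not} entrywise independent, because of within-block dependence and the one-hot coupling. However, $\mathbf{E}_i$ splits into $L$ independent sub-vectors, one per block $V_\ell$, each bounded with Euclidean norm at most $\sqrt{2|V_\ell|}\le \sqrt{\tilde J_{\max}}$. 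The structure I will exploit is row independence together with block-wise independence within rows. The zhang2024leave result requires three inputs: (i) a separation/signal condition on the centers, (ii) a bound on $\|\mathbf{E}\|$, and (iii) sub-Gaussian concentration of $\langle \mathbf{E}_i, \mathbf{v}\rangle$ for fixed unit $\mathbf{v}$ with variance proxy $\sigma^2$. The target is to verify these with $\sigma^2 \asymp \tilde J_{\max}$, and then read off $\EE\ell \le N\exp(-cD_\Theta^2/\sigma^2)$ plus a term accounting for small clusters.

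\textbf{Step 1 (sub-Gaussian rows).} For fixed unit $\mathbf{v}$, write $\langle\mathbf{E}_i,\mathbf{v}\rangle=\sum_\ell \langle\mathbf{E}_i^{(\ell)},\mathbf{v}^{(\ell)}\rangle$. This is a sum of independent mean-zero terms, each bounded by $\sqrt{\tilde J_{\max}}\|\mathbf{v}^{(\ell)}\|$. By Hoeffding it is sub-Gaussian with proxy $\tilde J_{\max}\sum_\ell\|\mathbf{v}^{(\ell)}\|^2=\tilde J_{\max}$.

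\textbf{Step 2 (operator norm).} A row-wise independent matrix with sub-Gaussian rows of proxy $\tilde J_{\max}$ satisfies $\|\mathbf{E}\|\lesssim \sqrt{\tilde J_{\max}}(\sqrt N+\sqrt{J})$ with high probability. This follows from an $\epsilon$-net argument over the column space of dimension $\sum_j C_j\le C_{\max}J$. This matches the factor $\sqrt{\tilde J_{\max}}(1+\sqrt{J/N})$ after normalizing by $\sqrt N$, using that cluster sizes are $\asymp N/K$.

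\textbf{Step 3 (signal).} Convert threshold separation $D$ into separation of the center rows $\bTheta_{\cdot,k}$. Under Assumption \ref{assmp:bounded parameter}, $|\Delta|\le M$ implies $\phi\ge\phi(M)$ on the relevant range. Then $\theta_{j,a,k}-\theta_{j,a,k'}$ telescopes: the cumulative sums $\sum_{b\le a}\theta_{j,b,k}=\Phi(\Delta_{j,a,k})$ give $\|\bTheta_{\cdot,k}-\bTheta_{\cdot,k'}\|\ge c(C_{\max},M)\|\bDelta_k-\bDelta_{k'}\|\ge cD$, since the cumulative-sum map has norm at most $\sqrt{C_{\max}}$. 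For the class sizes, $N_k\ge N\delta_3/(2K)$ except on an event of probability $Ke^{-cN/K}$ by Chernoff, since $\pi_k\ge\delta_3/K$. This event contributes the term $NKe^{-c_2N/K}$, as the loss is at most $N$. The $\sigma_K(\bTheta)$ condition provides the eigen-gap $\sigma_K(\mathbf{Y}\bTheta^\top)\gtrsim\sqrt{N/K}\,\sigma_K(\bTheta)$ needed for Davis--Kahan type control.

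\textbf{Step 4 (conclusion).} Plug into the theorem to get $\EE\ell\le N e^{-c D^2/\tilde J_{\max}}+NKe^{-cN/K}$. Then set $D\ge c_3\sqrt{\tilde J_{\max}\log N}$, so that $\EE\ell=O(N^{-1})$, and apply Markov's inequality: since $\ell$ is integer-valued, $\PP(\ell\ge1)\le\EE\ell$, which gives exact recovery.

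The main obstacle is that the theorem of \cite{zhang2024leave} is stated for independent sub-Gaussian noise. One must check that its leave-one-out argument only uses row independence plus the projected sub-Gaussian property of Step 1. If instead it uses entrywise independence, for example in controlling $\|\mathbf{E}_i\hat{\WW}\|$ via leave-one-out versions, I would redo that step by replacing the entrywise bound with the block-Hoeffding bound of Step 1.
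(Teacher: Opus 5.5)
Your proposal is correct and follows essentially the same route as the paper: sub-Gaussian rows with proxy $\tilde{J}_{\max}$ from block-wise independence (your Hoeffding-lemma argument is a cleaner justification of the paper's $\psi_2$-norm claim), and center separation $D_\theta \gtrsim D$ via $\phi(M)$ and the cumulative-sum relation $d_{j,a}=\sum_{b\le a}(\theta_{j,b,k}-\theta_{j,b,k'})$. The paper gets that separation from the equivalent bound $\lambda_{\min}(\mathbf A)=2-2\cos(\pi/C_j)$; note the cumulative-sum map has norm of order $C_{\max}$, not $\sqrt{C_{\max}}$, which is harmless since the constants may depend on $C_{\max}$. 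Both arguments then use $\sigma_K(\mathbf{Y}\bTheta^\top)\ge\sigma_K(\mathbf{Y})\sigma_K(\bTheta)$ on the Chernoff event for cluster sizes and invoke Theorem 3.1 of \cite{zhang2024leave}. Your anticipated obstacle does not arise, because that theorem is already stated for independent sub-Gaussian noise vectors (a bound on the moment generating function of $\langle \mathbf a, \bepsilon_i\rangle$), which is exactly what your Step 1 supplies, so Step 2 need not be established separately.
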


We discuss the conditions required in \cref{prop:clustering}. First, the lower bound for $D$ requires all cluster means to be sufficiently separated. Note from \eqref{eq:conditional expectation extended} that the thresholds $\bDelta_k$ (which are used to define $D$) are directly related to the cluster centers $\btheta_k$. Second, the lower bound for $\sigma_K(\bTheta)$ requires a sufficiently large spectral gap, and is a standard requirement for the success of spectral methods. The condition $N \gtrsim K^2$ is a mild technical requirement that arises from \cite{zhang2024leave}.

The following simplified scaling discussion is provided only for illustration and interpretation; the additional conditions imposed in this paragraph are not required for the conclusion of \cref{prop:clustering} to hold. To further understand the scaling requirement, suppose $K = O(1)$. Also, assume homogeneous block sizes and response categories (i.e., $J_1 = \ldots = J_L \equiv J/L, C_1 = \ldots = C_J \equiv C$), so that $\tilde{J}_{\max} = O(J/L)$. As $\bDelta_k$ is an $O(J)$-dimensional parameter, we can view $D$ as $O(\sqrt{J})$ when the latent classes are distinct across all items. Then, the requirement for $D$ simplifies to assuming a \emph{growing number of blocks} so that $L \gtrsim \log N + J/N$. Taking $\sigma_K(\bTheta) \asymp \sqrt{J}$, the requirement for $\sigma_K(\bTheta)$ also reduces to $L \gtrsim 1+J/N$. Note that these requirements are relatively mild and allow both $J > N$ as well as $N > J$. In general, the requirement for $L$ becomes more stringent when different latent classes are less separated or the number of clusters $K$ increases. %

Now, given the estimated cluster labels, it is straightforward to estimate the mixture proportion vector $\bpi$ as well as the threshold vector $\bDelta$. To simplify notation, assume without loss of generality that the estimated labels are correctly permuted in the sense that $\hat{Z}_i = Z_i$. The latent class proportions can be estimated based on the latent class labels, $\hat{\pi}_k := \hat{N}_k/N.$ 
Recalling \eqref{eq:R}, the conditional probability of observing a response of $a$ or lower is $\PP(R_{j} \le a \mid Z = k) = \Phi(\Delta_{j,a,k})$. A natural moment estimator arises by replacing the left hand side $\PP(R_{j} \le a \mid Z = k)$ of this population identity with the empirical probability conditional on the $k$th latent class:
$$\hat{\Delta}_{j,a,k} := \Phi^{-1} \left( \frac{1}{\hat{N}_k} \sum_{i: \hat{Z}_i=k} \mathbb{I}(R_{i,j} \le a) \right), \quad \forall a \in [C_j-1].$$
This estimator can also be obtained from the estimated $K$-dimensional centers $\hat{d}_k$ in \eqref{eq:k-means} by mapping them back to the original $J$-dimensional scale and writing out the k-means criterion; see eq. (15) in \cite{zhang2024leave} for further details.

The following proposition gives entry-wise estimation error bounds for both $\hat{\bpi}$ and $\hat{\bDelta}$.

\begin{proposition}\label{prop:pi delta consistency}
    Assuming all conditions in \cref{prop:clustering}, there exist constants $c_1, c_2 = c_2(M,\delta_3)>0$ such that 
    \begin{align*}
        \PP\Big(\max_{k \in [K]}|\hat{\pi}_k - \pi_{k}| \ge t\Big) &\le Ke^{- c_1 N t^2} + O(N^{-1}), \\
        \PP\Big(\max_{k \in [K]} \max_{j \in [J], a \in [C_j-1]} |\hat{\Delta}_{j,a,k} - \Delta_{j,a,k}| \ge t \Big) &\le \Big(\sum_{j=1}^J C_j\Big) K e^{-c_2 N t^2/K} +O(N^{-1}),
    \end{align*}
    for all $0 < t < M$. %
\end{proposition}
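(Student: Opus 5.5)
The plan is to reduce everything to the event $\mathcal{E} := \{\ell(\hat{\ZZ},\ZZ) = 0\}$ of exact recovery. Under the conditions of \cref{prop:clustering} this event has probability at least $1-O(N^{-1})$. On $\mathcal{E}$, after the label permutation, $\hat{Z}_i = Z_i$ for all $i$. Hence $\hat N_k = N_k := \sum_i \mathbb{I}(Z_i=k)$, $\hat\pi_k = N_k/N$, and each $\hat\Delta_{j,a,k}$ equals the oracle estimator $\tilde\Delta_{j,a,k} := \Phi^{-1}\big(N_k^{-1}\sum_{i:Z_i=k}\mathbb{I}(R_{i,j}\le a)\big)$. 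For any event $A$ we then have $\PP(A) \le \PP(A \cap \mathcal{E}) + O(N^{-1})$, so it suffices to bound the corresponding oracle deviations, which involve only independent quantities.

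\textbf{Mixing proportions.} Since $N_k \sim \mathrm{Binomial}(N,\pi_k)$, Hoeffding's inequality gives $\PP(|N_k/N - \pi_k|\ge t)\le 2e^{-2Nt^2}$. A union bound over $k\in[K]$ then yields the first inequality with $c_1=2$ (the factor 2 is absorbed by adjusting constants or by restricting to $t$ in the stated range).

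\textbf{Thresholds.} I would first condition on $\ZZ$, and work on the event $\mathcal{G} := \{N_k \ge N\pi_k/2 \ \forall k\}$. By Hoeffding and $\pi_k \ge \delta_3/K$, we get $\PP(\mathcal{G}^c)\le K e^{-N\delta_3^2/(2K^2)}$. I would bound this by $O(N^{-1})$ using the condition $c_1N\ge K^2$ together with $D$ being large; if that fails, I would use a multiplicative Chernoff bound, $\PP(N_k < N\pi_k/2)\le e^{-N\pi_k/8}\le e^{-N\delta_3/(8K)}$, which can be absorbed into the $Ke^{-c_2Nt^2/K}$ term since $t<M$.

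Conditional on $\ZZ$, the variables $\{\mathbb{I}(R_{i,j}\le a)\}_{i:Z_i=k}$ are i.i.d. Bernoulli with mean $p := \Phi(\Delta_{j,a,k})$. Here the block dependence is irrelevant, since only a single coordinate $j$ is involved. Hoeffding gives $\PP(|\hat p - p|\ge s \mid \ZZ)\le 2e^{-2N_k s^2}\le 2e^{-N\delta_3 s^2/K}$ on $\mathcal{G}$.

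The main obstacle is transferring this bound through $\Phi^{-1}$, which is not globally Lipschitz. By Assumption \ref{assmp:bounded parameter}, $|\Delta_{j,a,k}|\le M$, so $p\in[\Phi(-M),\Phi(M)]$. I would restrict to $s \le s_0 := \Phi(-2M)\wedge\ldots$, chosen so that $\hat p$ stays in $[\Phi(-2M),\Phi(2M)]$. On that interval $(\Phi^{-1})' = 1/\phi(\Phi^{-1}(\cdot)) \le 1/\phi(2M) =: L_M$, so $|\hat\Delta - \Delta| \le L_M|\hat p - p|$.

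Thus $\{|\hat\Delta-\Delta|\ge t\}\subseteq\{|\hat p - p|\ge \min(t/L_M, s_0)\}$. For $t<M$ we have $\min(t/L_M,s_0)\ge c\,t$, with $c$ depending only on $M$, so the deviation probability is at most $2e^{-c_2Nt^2/K}$ with $c_2=c_2(M,\delta_3)$.

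Finally, I would take a union bound over $k\in[K]$, $j\in[J]$ and $a\in[C_j-1]$; there are at most $K\sum_j C_j$ such triples. Integrating out $\ZZ$ and adding $\PP(\mathcal{E}^c)+\PP(\mathcal{G}^c)$ gives the second inequality.
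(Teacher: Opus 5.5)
This is correct and is essentially the paper's argument. You reduce to the exact-recovery event and use Hoeffding plus a union bound for $\hat{\pi}_k$. For $\hat{\Delta}_{j,a,k}$ you lower-bound all $N_k$ by a multiplicative Chernoff bound, apply conditional Hoeffding to the within-class empirical CDF, transfer through the $1/\phi(2M)$-Lipschitz property of $\Phi^{-1}$ on $[\Phi(-2M),\Phi(2M)]$, and take a union bound over $(j,a,k)$.

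Two small fixes:
\begin{itemize}
\item Only your Chernoff fallback controls $\mathcal{G}^c$. The Hoeffding bound $Ke^{-N\delta_3^2/(2K^2)}$ is not $O(N^{-1})$ when merely $N\gtrsim K^2$.
\item The right choice is $s_0=\Phi(2M)-\Phi(M)$. In fact $t\phi(2M) < M\phi(2M) \le \Phi(2M)-\Phi(M)$ for $t<M$, so the domain restriction is automatic; this is why the paper needs no separate $s_0$.
\end{itemize}
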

\noindent In particular, both estimators for $\bpi$ and $\bDelta$ are $\sqrt{N}$-consistent, up to the log factors in $K, J$ that arise from a uniform bound.
This result suggests that the proposed spectral clustering method based on the flattened data matrix delivers an entrywise consistent estimator for the latent class proportions and the threshold values for the ordinal variables, laying the foundation for subsequent steps.

\subsection{Step 2: Estimating covariance matrix and block structures}\label{subsec:covariance matrix}
Our second step aims to recover the block dependence structure by partitioning the set of vertices $V = [J]$ into disjoint sets $V_1, \ldots, V_L$. The key idea is to separately estimate the class-specific latent covariance matrices $\bSigma_k$ for each latent class $k$, and aggregate their sparsity patterns to estimate the partition shared across latent classes.

We first consider estimating the covariance matrices $\bSigma_k$ for each cluster. If the latent Gaussian variables $X_{i,j}$ were observed, one could simply use the sample covariance matrix. However, with ordinal responses, this is not feasible. A naive solution would be to compute the MLE of $\bSigma_k$ given the responses within the $k$th cluster. %
However, computing this is challenging as $\bSigma_k$ has $O(J^2)$ entries and is required to be positive definite.
Therefore, we relax the positive definite constraint and estimate each off-diagonal entry separately. Our key observation is that $\sigma_{(j,j'),k}$ is exactly the \emph{polychoric correlation} between observations $R_j$ and $R_{j'}$, conditioned on the latent class $k$. To the statistical audience, the terminology ``tetrachoric correlation'' \citep{pearson1900mathematical} may be more familiar, which is a special case of the polychoric correlation when both $R_j$ and $R_{j'}$ are binary. Estimating pairwise polychoric correlation from ordinal responses has been a classical topic in psychometrics and categorical data analysis \citep{lancaster1964estimation,olsson1979maximum,muthen1984general,bonett2005inferential}. %
Motivated by \cite{olsson1979maximum}, we focus on the following pseudo-MLE that maximizes the likelihood of $(R_j, R_{j'})$ while treating the threshold vector $\bDelta_k$ as fixed:
\begin{align}\label{eq:sigma hat}
    \hat{\sigma}_{(j,j'),k} :&= \argmax_{|\sigma| \le 1 - \delta_1} \sum_{i: \hat{Z}_i = k} \log \PP(R_{j}^{(\sigma)} = R_{i,j}, R_{j'}^{(\sigma)} = R_{i,j'} \mid Z=k ; \hat{\bDelta}_k) \\
    &= \argmax_{|\sigma| \le 1 - \delta_1} \sum_{i: \hat{Z}_i = k} \log \PP\Big(X_{j}^{(\sigma)} \in [\hat{\Delta}_{j,R_{i,j}-1,k}, \hat{\Delta}_{j,R_{i,j},k}), X_{j'}^{(\sigma)} \in [\hat{\Delta}_{j',R_{i,j'}-1,k}, \hat{\Delta}_{j',R_{i,j'},k}) \Big). \notag
\end{align}
Here, $(X_{j}^{(\sigma)}, X_{j'}^{(\sigma)})$ is the latent Gaussian vector, distributed as a bivariate standard normal with correlation $\sigma.$ Also, $R_{j}^{(\sigma)}$ denotes the random variable discretized from $X_{j}^{(\sigma)}$ via \eqref{eq:R}, and $R_{i,j}$ is the observation from the $i$th individual.

\begin{remark}[Connection to rank-based estimators in semiparametric statistics]\label{rmk:rank}
    There are alternative estimators in the ordinal Gaussian copula literature that estimate pairwise latent correlations by inverting Kendall's-tau moment equations: \cite{fan2017high} considers the binary responses case, and \cite{feng2019high} and \cite{quan2018rank} consider the ordinal case. When all variables are binary, \eqref{eq:sigma hat} is essentially equivalent to the Kendall's tau estimator. %
    This equivalence no longer holds when at least one response has $C_j> 2$ categories, which is why \cite{feng2019high} considered an ensemble estimator that aggregates estimates based on binarized responses and \cite{quan2018rank} limits to ternary data (with $C_j=3$), both of which result in information loss for general ordinal data. In contrast, the likelihood-based estimator \eqref{eq:sigma hat} we employ uses the full $C_j\times C_{j'}$ contingency table to estimate the correlation and allows arbitrary numbers of categories per variable.
\end{remark}

\begin{remark}[Fixing the threshold estimates]
    The careful reader may have noticed that the threshold vectors $\bDelta_k$ are treated as constants in the likelihood \eqref{eq:sigma hat}. This is because $\bDelta_k$ are already estimated consistently in Step 1, and because joint maximum likelihood over $(\bDelta_k,\bSigma_k)$ is computationally prohibitive even for moderate $J$ \citep{olsson1979maximum,welz2024robust}. Without fixing $\bDelta_k$, pairwise estimation is no longer possible, since each $\bDelta_{j,k}$ appears in the likelihood for all $\{\sigma_{(j,j'),k}: j' \neq j\}.$ A similar strategy was used in the works cited in \cref{rmk:rank}.

\end{remark}

Now, %
we can ensure that each entry of the covariance matrices $\bSigma_k$ is accurately estimated, at a $\sqrt{N/K}$ rate. Here, $N/K$ is the effective within-class sample size, due to \cref{assmp:bounded parameter}. 
In fact, taking $K=1$, the resulting rate (including the log factor) matches the usual estimation error rate for Gaussian graphical models and semiparametric copulas \citep{liu2012high}, despite observing only discrete data. The proof of \cref{prop:covariance estimation} builds upon an exponential concentration inequality for the pairwise polychoric correlation estimator \eqref{eq:sigma hat}, which may be of independent interest (see \cref{lem:polychoric} in the Supplementary Material).

\begin{proposition}\label{prop:covariance estimation}
    Consider the setting of \cref{prop:clustering}.
        For any $t>0$, there exist constants $c_1,c_2,c_3>0$ (that depend on $C_{\max}, \delta_1, \delta_2, \delta_3, M$) such that the following holds:
        \begin{align}\label{eq:tail sigma}
            \PP\Big(\max_{k=1}^K \|\hat{\bSigma}_k - \bSigma_k\|_{\max} \ge t\Big) \le c_1 K J^2 \Big(e^{-c_2 N t^2/K} + e^{-c_3 N/K}\Big) + p_1.
        \end{align}
        Hence, there exists another constant $c_4>0$ such that $\max_{k=1}^K \|\hat{\bSigma}_k - \bSigma_k\|_{\max} \le \tau_1 := c_4 \sqrt{K\log (JK)/N}$ with probability greater than $1-O\left((JK)^{-1} + N^{-1} \right)$ provided that $N \gtrsim K\log(JK)$. 
\end{proposition}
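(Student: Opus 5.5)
We condition on the event $\mathcal{E}_1=\{\hat{\ZZ}=\ZZ\}$ (up to the label permutation). By \cref{prop:clustering}, $\PP(\mathcal{E}_1^c)\le \EE\ell(\hat{\ZZ},\ZZ)\le p_1$, which produces the additive $p_1$ in \eqref{eq:tail sigma}. On $\mathcal{E}_1$, the responses $\{\RR_i: \hat Z_i=k\}$ are i.i.d.\ draws from class $k$. We also intersect with $\mathcal{E}_2=\{\hat N_k\ge \pi_k N/2 \ \forall k\}$. Hoeffding together with $\pi_k\ge\delta_3/K$ gives $\PP(\mathcal{E}_2^c)\le Ke^{-cN/K^2}$. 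To get the stated $e^{-c_3N/K}$ exponent, use the multiplicative Chernoff bound instead, which gives $Ke^{-c\pi_kN}\le Ke^{-cN/K}$. This term is absorbed into $c_1KJ^2e^{-c_3N/K}$.

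Conditional on the labels, the thresholds are concentrated by the same argument as in \cref{prop:pi delta consistency}: for each $(j,a,k)$ we have $\PP(|\hat\Delta_{j,a,k}-\Delta_{j,a,k}|\ge s)\le e^{-cNs^2/K}$. The rest of the argument is a fixed pair $(j,j')$ and fixed $k$, followed by a union bound over $KJ^2$ triples.

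The core is a deterministic perturbation analysis of the pairwise pseudo-likelihood. Write $p_{ab}(\sigma,\bDelta)=\Phi_2(\Delta_{j,a},\Delta_{j',b},\sigma)-\Phi_2(\Delta_{j,a-1},\Delta_{j',b},\sigma)-\cdots$ for the cell probabilities. The normalized objective is $\hat f(\sigma)=\sum_{a,b}\hat q_{ab}\log p_{ab}(\sigma,\hat\bDelta)$, where $\hat q_{ab}$ are the empirical cell frequencies in cluster $k$. The population objective is $f(\sigma)=\sum q_{ab}\log p_{ab}(\sigma,\bDelta)$ with $q_{ab}=p_{ab}(\sigma_0,\bDelta)$. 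Under \cref{assmp:bounded parameter}, $|\Delta|\le M$, the threshold gaps are $\ge\delta_2$, and $|\sigma|\le 1-\delta_1$. Hence every $p_{ab}$ is bounded below by a constant $p_{\min}>0$, uniformly over a neighborhood of $\bDelta$ of radius $\delta_2/4$. Consequently $\log p_{ab}$ and its derivatives in $(\sigma,\bDelta)$ are uniformly bounded, using $\partial_\sigma\Phi_2=\phi_2$. Two facts follow.

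First, $f$ is strongly concave near $\sigma_0$ with $f(\sigma_0)-f(\sigma)\ge \kappa(\sigma-\sigma_0)^2$ on $[-1+\delta_1,1-\delta_1]$. This holds because $f(\sigma_0)-f(\sigma)=\mathrm{KL}(q\|p(\sigma))\ge \frac12\|q-p(\sigma)\|_1^2$. Moreover $\|p(\sigma_0)-p(\sigma)\|_1\ge|p_{11}(\sigma_0)-p_{11}(\sigma)|=\int_\sigma^{\sigma_0}\phi_2(\Delta_{j,1},\Delta_{j',1},r)\,dr\ge c|\sigma-\sigma_0|$, since $\phi_2$ is bounded below on the compact parameter set.

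Second, $\sup_\sigma|\hat f(\sigma)-f(\sigma)|\le C\big(\max_{ab}|\hat q_{ab}-q_{ab}|+\max|\hat\Delta-\Delta|\big)$. This uses Lipschitzness of $\log p_{ab}$ in $\bDelta$ and boundedness of $\log p_{ab}$, summed over at most $C_{\max}^2$ cells.

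The standard argmax argument then gives $\kappa(\hat\sigma-\sigma_0)^2\le 2\sup|\hat f-f|$. This yields only $|\hat\sigma-\sigma_0|\lesssim \epsilon^{1/2}$, which is the \textbf{main obstacle}: it would lose the square-root rate.

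To fix this, I would argue through the score. Define $\hat g(\sigma)=\hat f'(\sigma)=\sum_{ab}\hat q_{ab}\,\partial_\sigma p_{ab}(\sigma,\hat\bDelta)/p_{ab}(\sigma,\hat\bDelta)$ and $g=f'$. Then $g(\sigma_0)=\sum_{ab}\partial_\sigma p_{ab}=0$ and $g'(\sigma_0)=-I(\sigma_0)\le-\kappa$. The crude bound already localizes $\hat\sigma$ in a small neighborhood once $\epsilon$ is below a constant; the failure probability for this step is of the form $e^{-c_3N/K}$. On that neighborhood, $|\hat g'-g'|$ is also uniformly small, so $\hat g$ is strictly decreasing there. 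Hence $\hat\sigma$ is either the unique root of $\hat g$ or a boundary point, and the boundary is excluded by the strict margin. Using $|\hat g(\sigma_0)|=|\hat g(\sigma_0)-g(\sigma_0)|\le C\epsilon$, we obtain $|\hat\sigma-\sigma_0|\le (2C/\kappa)\,\epsilon$.

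Finally, $\max_{ab}|\hat q_{ab}-q_{ab}|\ge s$ has probability at most $2C_{\max}^2e^{-cNs^2/K}$ by Hoeffding, given $\hat N_k\gtrsim N/K$. The threshold deviation is handled the same way. Taking $s\asymp t$ gives the $e^{-c_2Nt^2/K}$ term. The union bound over $k,j,j'$ then yields \eqref{eq:tail sigma}. The diagonal entries equal $1$ exactly and contribute nothing. For the second claim, plug in $t=c_4\sqrt{K\log(JK)/N}$ with $c_4$ large, so that $KJ^2e^{-c_2c_4^2\log(JK)}\le (JK)^{-1}$. Under $N\gtrsim K\log(JK)$ we have $KJ^2e^{-c_3N/K}\le (JK)^{-1}$. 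The bound $p_1=O(N^{-1})$ holds under the exact-recovery regime of \cref{prop:clustering}.
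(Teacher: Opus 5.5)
Your proposal is correct and is essentially the paper's argument. On $\{\hat{\ZZ}=\ZZ\}\cap\{\min_k N_k\gtrsim N/K\}$ you replace $\hat{\bSigma}_k$ by the oracle polychoric estimator, localize $\hat\sigma$ by an argmax/identifiability step, linearize the score using the curvature supplied by the $(1,1)$ cell, and union bound over the $KJ^2$ entries. The oracle step is better phrased as comparing with the oracle estimator, as the paper does, than as literally conditioning on $\{\hat{\ZZ}=\ZZ\}$, which would destroy the i.i.d.\ structure.

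Your Pinsker-plus-$\partial_\sigma\Phi_2=\phi_2$ argument usefully makes explicit, with uniform constants, the identifiability gap that the paper's consistency step takes for granted. The one slip is that the constraint boundary $|\sigma|=1-\delta_1$ cannot be excluded when $|\sigma_0|$ is at or near $1-\delta_1$, which \cref{assmp:bounded parameter} allows. The bound still holds, though: monotonicity of $\hat g$ on the localized neighborhood together with the one-sided optimality condition $\hat g(\hat\sigma)\,\sgn(\hat\sigma-\sigma_0)\ge 0$ gives $|\hat\sigma-\sigma_0|\le 2|\hat g(\sigma_0)|/\kappa$.
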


Consequently, we can also learn the global block structure among $X_1, \ldots, X_J$, i.e., estimate $\hat{L}$ and the partition $V = [J] = \cup_{\ell=1}^{\hat{L}} \hat{V}_\ell$. Our main idea is to aggregate the sparsity information across all latent classes, as a single latent class may potentially exhibit additional sparsity and therefore a finer block structure than the shared global block structure. See the top row in \cref{fig:block recovery} for a visualization of $\bSigma_1$ and $\hat{\bSigma}_1$, where the block structure is not evident from $\hat{\bSigma}_1$ alone. To this end, 
we define a $J \times J$ binary matrix $\GG$ by setting each entry as $g_{(j,j')} := \mathbb{I}(\max_{k=1}^K |\hat{\sigma}_{(j,j'),k}| > \tau_1),$ and view it as an adjacency matrix on the set of nodes $V$. By \cref{prop:covariance estimation}, $\GG$ includes the edges $\cup_{\ell = 1}^L E_\ell$, where $E_\ell$ is as in \cref{assump:minimal strength}. Then, we can read off the block structure from $\GG$ by setting $\hat{L}$ as the number of connected components, and letting $\hat{V}_1,\ldots,\hat{V}_{\hat{L}}$ be vertices in each connected component. The following proposition establishes consistency of this estimator.

\begin{proposition}\label{prop:block structure}
    Continuing from \cref{prop:covariance estimation}, additionally suppose that the constant $\epsilon_N$ in \cref{assump:minimal strength} satisfies $\epsilon_N \ge 2 \tau_1$.
    Then, $\hat{L} = L$ and $\{\hat{V}_1,\ldots,\hat{V}_{\hat{L}}\} = \{V_1,\ldots,V_L\}$ with probability greater than $1-O\left((JK)^{-1} + N^{-1} \right)$.
\end{proposition}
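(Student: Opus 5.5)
We work on the event $\mathcal{E} := \{\max_{k} \|\hat{\bSigma}_k - \bSigma_k\|_{\max} \le \tau_1\}$, which by \cref{prop:covariance estimation} has probability at least $1-O\left((JK)^{-1}+N^{-1}\right)$. Everything below is deterministic on $\mathcal{E}$, so the probability statement follows directly. The plan is to show that the connected components of the graph with adjacency matrix $\GG$ are exactly $V_1,\ldots,V_L$. We do this with two inclusions: $\GG$ contains no edge between different blocks, and $\GG$ contains every edge in $E_\ell$ for each $\ell$.

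\emph{No cross-block edges.} Take $j\in V_\ell$ and $j'\in V_{\ell'}$ with $\ell\neq\ell'$. By the block-diagonal structure \eqref{eq:sigma_k block structure}, $\sigma_{(j,j'),k}=0$ for every $k$. On $\mathcal{E}$ this gives $|\hat{\sigma}_{(j,j'),k}|\le\tau_1$ for all $k$, hence $\max_k|\hat\sigma_{(j,j'),k}|\le\tau_1$ and $g_{(j,j')}=0$. The threshold uses a strict inequality $>\tau_1$, and this is the only place that choice matters. Consequently every connected component of $\GG$ lies inside a single $V_\ell$.

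\emph{Within-block connectivity.} Take $(j,j')\in E_\ell$, and let $k^*$ attain $|\sigma_{(j,j'),k^*}|>\epsilon_N\ge 2\tau_1$. On $\mathcal{E}$, the triangle inequality gives
\[
|\hat\sigma_{(j,j'),k^*}| \ge |\sigma_{(j,j'),k^*}| - \tau_1 > 2\tau_1-\tau_1=\tau_1 ,
\]
so $g_{(j,j')}=1$. Therefore $E_\ell\subseteq\{(j,j'):g_{(j,j')}=1\}$. By \cref{assump:minimal strength}, $(V_\ell,E_\ell)$ is connected, so $V_\ell$ lies inside a single connected component of $\GG$.

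Combining the two inclusions, the components of $\GG$ are exactly $V_1,\ldots,V_L$. Hence $\hat L=L$ and the partitions coincide up to relabeling.

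No step poses a real obstacle. The only care needed is to use a single high-probability event that is uniform over all $k$ and all pairs $(j,j')$, which \cref{prop:covariance estimation} already supplies through its max-norm bound, together with the factor $2$ margin in $\epsilon_N\ge2\tau_1$.
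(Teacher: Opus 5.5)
Your proof is correct and matches the paper's argument step for step. Both work on the uniform max-norm event from \cref{prop:covariance estimation}, both show that cross-block pairs give $g_{(j,j')}=0$ because $\sigma_{(j,j'),k}=0$ for all $k$, and both show that every edge of $E_\ell$ survives the threshold via the margin $\epsilon_N\ge 2\tau_1$, so the connectivity in \cref{assump:minimal strength} identifies the components as exactly $V_1,\ldots,V_L$.
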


Given this result, we can now fix the global block structure. For notational convenience and without loss of generality, on the event that $V$ is correctly partitioned, assume that the learned blocks are indexed correctly; i.e., $\hat{V}_\ell = V_\ell$ for all $\ell$.

\subsection{Step 3: Learning precision matrix and within-block structures}\label{subsec:sparsity}
Having estimated the ``global'' block structure shared by all latent classes, we now consider the problem of estimating the ``local'' sparsity structure encoded in each precision matrix $\bOmega_k^{(\ell)}$ for each block $\ell$ and each latent class $k$. Note that imposing the block structure greatly reduces the dimension of the precision matrices from ${J(J+1)}/2$ to $\sum_{\ell=1}^L J_{\ell}(J_{\ell}+1)/2$. Since no common structure is imposed within blocks, each $\bOmega_k^{(\ell)}$ is separately estimated. 

Here, the main idea is to use sparse precision matrix estimators from their covariance matrix counterparts $\bSigma_k^{(\ell)}$. For this purpose, for any $k \in [K], \ell \in [L]$, one can use the usual graphical lasso estimator \citep[][among others]{friedman2008sparse, yuan2007model}: %
\begin{align*}%
    \hat{\bOmega}_k^{(\ell)} = \argmax_{\bOmega_k^{(\ell)} \in \R^{|\hat{V}_\ell|\times |\hat{V}_\ell|}} \log|\bOmega_k^{(\ell)}| - \text{tr}(\hat{\bSigma}_k^{(\ell)} \bOmega_k^{(\ell)}) - \lambda \sum_{j \neq j' \in \hat{V}_\ell} |\omega_{(j,j'),k}^{(\ell)}|,
\end{align*}
or the CLIME estimator \citep{cai2011constrained}:
\begin{align}
    \tilde{\bOmega}_k^{(\ell)} &= \argmin_{\bOmega_k^{(\ell)} \in \R^{|\hat{V}_\ell|\times |\hat{V}_\ell|}} \|\bOmega_k^{(\ell)}\|_1 \quad \text{s.t.} \quad \|\hat{\bSigma}_k^{(\ell)} \bOmega_k^{(\ell)} - \II\|_{\max} \le \lambda, \label{eq:clime initial}\\
    \hat{\omega}_{(j,j'),k} &= \hat{\omega}_{(j',j),k} :=
    \begin{cases}
        \tilde{\omega}_{(j,j'),k}, & \text{if } |\tilde{\omega}_{(j,j'),k}| \le |\tilde{\omega}_{(j',j),k}|,\\
        \tilde{\omega}_{(j',j),k}, & \text{otherwise},
    \end{cases}
    \quad \forall j,j' \in [J].\label{eq:precision matrix estimation}
\end{align}
Here, the initial estimator in \eqref{eq:clime initial} may be asymmetric, so this is symmetrized in \eqref{eq:precision matrix estimation} by taking the entry with smaller magnitude. 

For a simpler presentation, here we focus on analyzing the CLIME estimator \eqref{eq:precision matrix estimation}. %
For this purpose, we require additional assumptions on the true precision matrix. %
\begin{assumption}[Precision matrix assumptions]\label{assmp:precision matrix}

\begin{enumerate}[(a)]
    \item For each $k \in [K], \ell \in [L]$, suppose that $\bOmega_{k}^{(\ell)}$ %
    satisfies $\|\bOmega_{k}^{(\ell)}\|_1 \le M_N'$ for some $M_N'>0$. %
    \item Additionally, for some constant $\epsilon_N'>0$, assume a minimal signal strength condition $$\min_{j,j' \in V_\ell: ~\omega_{(j,j'),k}^{(\ell)} \neq 0} |\omega_{(j,j'),k}^{(\ell)}| > \epsilon_N'.$$
\end{enumerate}
\end{assumption}
\noindent \cref{assmp:precision matrix} is a standard requirement in \cite{cai2011constrained} that helps theoretically understand the CLIME estimator. Part (b) in \cref{assmp:precision matrix} requires a minimal signal strength for the nonzero precision matrix entries as opposed to the analogous condition for selecting covariance matrix entries in \cref{assump:minimal strength}.

Now, we state our main theoretical result on precision matrix estimation. Again, the resulting bounds are of the same order as those for Gaussian graphical models.
\begin{proposition}\label{prop:precision matrix}
    Consider the setting in \cref{prop:block structure} and suppose \cref{assmp:precision matrix}(a) holds. Consider the precision matrix estimate $\hat{\bOmega}_k^{(\ell)}$ in \eqref{eq:precision matrix estimation} with $\lambda = M_N' \tau_1$.
    
    \begin{enumerate}[(a)]
        \item With probability $1-O\left((JK)^{-1} + N^{-1}\right)$, we have 
        $$\|\hat{\bOmega}_{k}^{(\ell)} - \bOmega_k^{(\ell)}\|_{\max} \le 4 M_N' \lambda.$$

        \item Additionally, let \cref{assmp:precision matrix}(b) hold for some $\epsilon_N' \ge 8 M_N' \lambda$. Then, for $\tau_2 = 4M_N'\lambda$, the thresholded estimator $\tilde{\bOmega}_{k}^{(\ell)}$ with
        $$\tilde{\omega}_{(j,j'),k}^{(\ell)} := \hat{\omega}_{(j,j'),k}^{(\ell)} \mathbb{I}(|\hat{\omega}_{(j,j'),k}^{(\ell)}| \ge \tau_2)$$
        is sign consistent, i.e., $\sgn(\tilde{\omega}_{(j,j'),k}^{(\ell)}) = \sgn({\omega}_{(j,j'),k}^{(\ell)})$ with probability $1-O\left((JK)^{-1} + N^{-1}\right)$.
    \end{enumerate}
\end{proposition}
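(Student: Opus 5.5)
We work on the event $\mathcal{E}$ on which the conclusions of \cref{prop:covariance estimation} and \cref{prop:block structure} both hold. That is, $\max_k \|\hat{\bSigma}_k - \bSigma_k\|_{\max} \le \tau_1$ and $\hat{V}_\ell = V_\ell$ for all $\ell$. By those propositions, $\mathcal{E}$ has probability $1-O((JK)^{-1}+N^{-1})$. On $\mathcal{E}$, the CLIME program \eqref{eq:clime initial} for block $(k,\ell)$ uses the sub-matrix $\hat{\bSigma}_k^{(\ell)}$ of $\hat{\bSigma}_k$ indexed by the true block $V_\ell$. Hence $\|\hat{\bSigma}_k^{(\ell)} - \bSigma_k^{(\ell)}\|_{\max} \le \tau_1$ for every $k,\ell$ simultaneously. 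Everything that follows is deterministic on $\mathcal{E}$, and it reproduces the argument of Theorem 6 in \cite{cai2011constrained} with their sample covariance replaced by our polychoric estimate.

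\textbf{Part (a).} First, the true $\bOmega_k^{(\ell)}$ is feasible for \eqref{eq:clime initial}. Since $\bSigma_k^{(\ell)}\bOmega_k^{(\ell)} = \II$,
\[
\|\hat{\bSigma}_k^{(\ell)}\bOmega_k^{(\ell)} - \II\|_{\max} = \|(\hat{\bSigma}_k^{(\ell)}-\bSigma_k^{(\ell)})\bOmega_k^{(\ell)}\|_{\max} \le \tau_1 \|\bOmega_k^{(\ell)}\|_{1} \le M_N'\tau_1 = \lambda,
\]
where $\|\cdot\|_1$ is the maximum absolute column sum.

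The CLIME program decouples into column-wise $\ell_1$ minimizations. Feasibility therefore gives the column-wise bound $\|\tilde{\bOmega}_{\cdot j}\|_1 \le \|\bOmega_{\cdot j}\|_1$, so $\|\tilde{\bOmega}_k^{(\ell)}\|_1 \le M_N'$. The symmetrization \eqref{eq:precision matrix estimation} keeps the smaller-magnitude entry, so it preserves this bound: $\|\hat{\bOmega}_k^{(\ell)}\|_1\le M_N'$.

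Next, we bound the residual
\[
\|\bSigma_k^{(\ell)}(\tilde{\bOmega}-\bOmega)\|_{\max} \le \|\hat{\bSigma}\tilde{\bOmega}-\II\|_{\max} + \|(\bSigma-\hat{\bSigma})\tilde{\bOmega}\|_{\max} \le \lambda + \tau_1 M_N' = 2\lambda.
\]
Multiplying on the left by $\bOmega$ gives
\[
\|\tilde{\bOmega}-\bOmega\|_{\max} \le \|\bOmega\|_{L_1}\, 2\lambda \le 2M_N'\lambda,
\]
using symmetry of $\bOmega$ so that $\|\bOmega\|_{L_1}=\|\bOmega\|_1$.

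For the symmetrized estimator, each $\hat{\omega}_{(j,j')}$ equals either $\tilde{\omega}_{(j,j')}$ or $\tilde{\omega}_{(j',j)}$. Both entries are within $2M_N'\lambda$ of $\omega_{(j,j')}=\omega_{(j',j)}$. This yields the claimed bound $4M_N'\lambda$, with slack.

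\textbf{Part (b).} Work on the same event and set $\tau_2=4M_N'\lambda$.

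If $\omega_{(j,j'),k}=0$, then $|\hat{\omega}_{(j,j'),k}|\le 4M_N'\lambda$. Under the natural reading that the entry is kept only when it strictly exceeds $\tau_2$, the entry is set to zero. If the threshold inclusion is non-strict, we note that part (a) actually delivers the strictly smaller bound $2M_N'\lambda<\tau_2$, which avoids the boundary issue.

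If $\omega_{(j,j'),k}\ne 0$, then $|\omega_{(j,j'),k}|>\epsilon_N'\ge 8M_N'\lambda$. Hence $|\hat{\omega}_{(j,j'),k}| > 8M_N'\lambda - 4M_N'\lambda \ge \tau_2$, so the entry survives thresholding. It also has the same sign as $\omega_{(j,j'),k}$, because the error is smaller than $|\omega_{(j,j'),k}|$.

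Since all of this holds uniformly over $k,\ell$ on $\mathcal{E}$, the probability statement follows.

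\textbf{Main obstacle.} The substantive work is the uniform entrywise control of $\hat{\bSigma}_k$ together with exact block recovery, and both are already supplied by \cref{prop:covariance estimation} and \cref{prop:block structure}. What remains is careful bookkeeping in two places. One is the asymmetric-then-symmetrized CLIME solution, in particular showing that its $\ell_1$ norm stays bounded by $M_N'$. The other is that $\hat{\bSigma}_k$ is not positive definite, which is harmless because CLIME's analysis never requires it.
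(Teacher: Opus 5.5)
Your proposal is correct and follows the paper's route. The paper works on the event where Propositions \ref{prop:covariance estimation} and \ref{prop:block structure} hold, cites eq.~(13) of Theorem~6 in \cite{cai2011constrained} with $\rho=0$ for part (a), which is the deterministic argument you reproduce (feasibility of $\bOmega_k^{(\ell)}$, column-wise $\ell_1$ optimality, residual bound), and treats part (b) as the same signal-strength comparison you give. Your sharper constant $2M_N'\lambda$ in place of the cited $4M_N'\lambda$ is a genuine bonus: it rules out a true zero landing exactly on $|\hat{\omega}_{(j,j'),k}^{(\ell)}| = \tau_2$ and surviving the non-strict indicator $\mathbb{I}(|\hat{\omega}_{(j,j'),k}^{(\ell)}|\ge\tau_2)$, a case the paper's ``immediate'' part (b) leaves unaddressed.
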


\begin{remark}
    Similar sign-consistency guarantees can potentially be derived for other sparsity-inducing procedures, such as the graphical lasso, by applying the so-called ``meta-theorem'' (cf. Thm 4.3 in \cite{liu2012high}) to the entry-wise guarantees for $\bSigma_k$ in \cref{prop:covariance estimation}. To this end, one would need to replace \cref{assmp:precision matrix} with the usual Lasso assumptions, such as mutual incoherence.
\end{remark} %

Finally, we combine all results in Propositions \ref{prop:clustering}--\ref{prop:precision matrix} to obtain the following theorem.
\begin{theorem}
    Under the conditions of Propositions \ref{prop:clustering}--\ref{prop:precision matrix}, which include the scaling requirement $N \gtrsim K^2 + K \log(JK)$, all model components \ref{a}--\ref{d} can be consistently estimated. %
\end{theorem}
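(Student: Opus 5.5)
The theorem is a union-bound assembly of Propositions \ref{prop:clustering}--\ref{prop:precision matrix}. We intersect the high-probability events they provide, check that every failure probability is $o(1)$ and every error bound is $o(1)$ under the stated scaling, and conclude joint consistency of \ref{a}--\ref{d}. The unit of consistency is: exact recovery for the discrete components $\ZZ$, $L$ and $\{V_\ell\}$, and vanishing entrywise error for $\bpi$, $\bDelta$, $\bSigma$ and $\bOmega$.

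\textbf{Step 1.} First define the event $\mathcal{E}_1=\{\ell(\hat\ZZ,\ZZ)=0\}$. Under the separation condition $D\ge c_3\sqrt{\tilde J_{\max}\log N}$ and $c_1N\ge K^2$, \cref{prop:clustering} gives $\PP(\mathcal{E}_1^c)=O(N^{-1})$. This settles component \ref{a}, up to label permutation.

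\textbf{Step 2.} Apply \cref{prop:pi delta consistency} with $t=t_N\to0$, for example $t_N=\sqrt{C\log(JKN)K/N}$. The tail bounds are $Ke^{-c_1Nt^2}$ and $(\sum_jC_j)Ke^{-c_2Nt^2/K}\le C_{\max}JKe^{-c_2Nt^2/K}$, and both vanish once $N\gtrsim K\log(JK)$. This gives consistency of $\hat\bpi$ and $\hat\bDelta$, which is component \ref{d}. Note that $t_N<M$ eventually, as that proposition requires.

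\textbf{Step 3.} Let $\mathcal{E}_2=\{\max_k\|\hat\bSigma_k-\bSigma_k\|_{\max}\le\tau_1\}$. Here $\tau_1=c_4\sqrt{K\log(JK)/N}\to0$ precisely because $N\gtrsim K\log(JK)$ with a growing implicit constant (or simply $K\log(JK)/N\to0$). \cref{prop:covariance estimation} gives $\PP(\mathcal{E}_2^c)=O((JK)^{-1}+N^{-1})$. On $\mathcal{E}_2$, \cref{prop:block structure} (with $\epsilon_N\ge2\tau_1$) yields $\hat L=L$ and the correct partition. This step is deterministic given $\mathcal{E}_2$: entries above $\epsilon_N$ survive the threshold because $\epsilon_N-\tau_1>\tau_1$, and cross-block entries are zero, so they do not. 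Components \ref{b} and the $\bSigma$ part of \ref{c} follow.

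\textbf{Step 4.} On $\mathcal{E}_2\cap\{\hat V_\ell=V_\ell\}$, \cref{prop:precision matrix} gives $\max_{k,\ell}\|\hat\bOmega_k^{(\ell)}-\bOmega_k^{(\ell)}\|_{\max}\le4M_N'^2\tau_1$. It also gives sign consistency after thresholding. Since $\bOmega_k$ is block diagonal with the same blocks, $\|\hat\bOmega_k-\bOmega_k\|_{\max}$ equals the block maximum. This is $o(1)$ provided $M_N'^2\tau_1\to0$, which completes \ref{c}.

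The final probability is bounded by the sum of the failure probabilities, $O((JK)^{-1}+N^{-1})+o(1)\to0$ as $N\to\infty$. For this to vanish we need $JK\to\infty$ or $N\to\infty$; the latter always holds. The main point needing care is the bookkeeping across steps, since each later proposition is stated conditionally on the correct labels and blocks. The clean route is to note that Propositions \ref{prop:covariance estimation}--\ref{prop:precision matrix} already incorporate the $p_1$ term, and to work on the single intersection event throughout. One must also note that consistency of $\hat\bOmega$ implicitly requires $M_N'$ to grow slowly enough that $M_N'^2\sqrt{K\log(JK)/N}\to0$; I would make this explicit as part of ``the conditions of \cref{prop:precision matrix}''.
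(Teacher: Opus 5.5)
Your union-bound assembly is the route the paper takes; the paper gives no argument beyond ``combine Propositions~\ref{prop:clustering}--\ref{prop:precision matrix}''. Your bookkeeping is sharper than the paper's on two points, both of which the paper leaves implicit. First, the argument needs $K\log(JK)/N\to0$, not just $N\gtrsim K\log(JK)$ with a fixed constant, since otherwise $\tau_1\not\to0$. Second, consistency of $\hat\bOmega$ needs $M_N'^2\tau_1\to0$.

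The last paragraph contains one wrong step. $O\big((JK)^{-1}+N^{-1}\big)\to0$ requires $JK\to\infty$ \emph{and} $N\to\infty$, not ``or'', so $N\to\infty$ alone does not finish the proof. This is a real failure, not slack in the bound. Suppose $J$ and $K$ stay bounded. Then $\tau_1=c_4\sqrt{K\log(JK)/N}$ is of exact order $N^{-1/2}$, which is the fluctuation scale of the polychoric estimator. For a cross-block pair with $\sigma_{(j,j'),k}=0$, $\sqrt{N}\,\hat\sigma_{(j,j'),k}$ is asymptotically a non-degenerate normal. So $|\hat\sigma_{(j,j'),k}|>\tau_1$ with probability bounded away from zero, which produces a spurious edge in $\GG$ and merges two blocks. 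Your Step 3, and Step 4 which conditions on it, can therefore fail with non-vanishing probability.

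There are two ways to repair this:
\begin{enumerate}
\item State the regime $J\to\infty$ explicitly, as the paper's high-dimensional framing tacitly assumes.
\item Raise the threshold to $c_4\sqrt{K\log(JKN)/N}$, with $\lambda$ equal to $M_N'$ times this, just as you already did for $t_N$ in Step 2. With this $t$ and $c_4$ large, the right-hand side of \eqref{eq:tail sigma} vanishes whether or not $J,K$ grow. The conditions $\epsilon_N\ge2\tau_1$ and $M_N'^2\tau_1\to0$ then hold with the extra $\log N$ factor.
\end{enumerate}
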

\noindent Thus, our three-step estimation procedure provides end-to-end consistency for estimating all model components in the block-dependent latent class model. 

\subsection{Additional choices for implementation}\label{subsec:additional choices}

\paragraph{Learning the block structures}
While the binary matrix $\GG$ (see the text just before \cref{prop:block structure}) suffices for theoretical purposes, it can lead to false discovery under limited sample sizes (i.e., some zero values in the off-block-diagonals are estimated as one). Such errors can blur the connectivity structure between the blocks, as illustrated in the left panel in the second row of \cref{fig:block recovery}. 

To provide a more robust and practical solution for learning the shared block structure, we propose using network community detection on a continuous aggregation matrix. Specifically, we aggregate the \emph{magnitudes} of the estimated polychoric correlations (as opposed to the binary matrix $\GG$) and define a $J \times J$ weighted adjacency matrix $\tilde{\GG}$ with zero diagonals:
\begin{align}\label{eq:g tilde}
    \tilde{g}_{j,j'} := 
    \begin{cases}
        \frac{1}{K} \sum_{k=1}^K |\hat{\sigma}_{(j,j'),k}|\mathbb{I}(|\hat{\sigma}_{(j,j'),k}| \ge \tau_1) & \text{if } j \neq j', \\
        0 & \text{if } j = j'.
    \end{cases}
\end{align}
Here, $\tilde{\GG}$ can be made tuning-free by not thresholding the estimates $\hat{\sigma}_{(j,j'),k}$. Treating $\tilde{\GG}$ as a weighted adjacency matrix on $V=[J]$, we can estimate the shared block structure by generic graph community detection. This is because the off-block-diagonal entries are mostly small by \cref{prop:covariance estimation}, while within-block entries form connected components.

\begin{figure}[h!]
    \centering
    \includegraphics[width=0.8\linewidth]{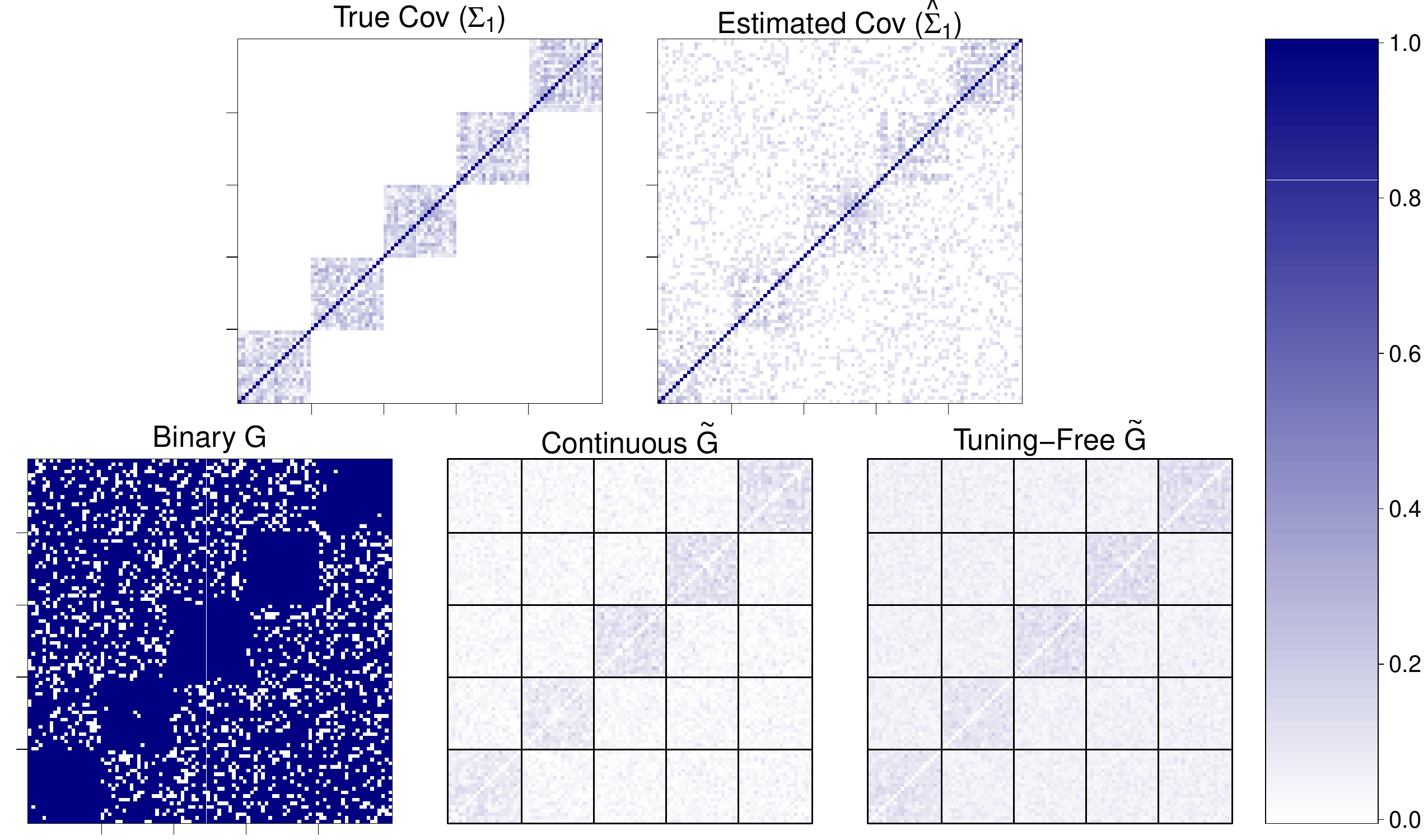}
    \caption{Covariance matrices and block-recovery for a simulated dataset (under the setting of the first row in \cref{tab:sim result}). Top: $|\bSigma_1|$ (left) and $|\hat{\bSigma}_1|$ (right). Bottom: aggregation matrices for block structure recovery; the binary matrix $\GG$ (left), the continuous matrix $\tilde{\GG}$ \eqref{eq:g tilde} (center), and the tuning-free version of $\tilde{\GG}$ (right). 
    Black lines mark the estimated blocks. The continuous matrices better capture the block signal despite the off-block noise. %
   }
    \label{fig:block recovery}
\end{figure} %

For community detection, we use the Leiden algorithm \citep{traag2019louvain} because it can estimate $L$ and is well-suited for weighted graphs. In the center/right panel of the second row in \cref{fig:block recovery}, Leiden accurately recovers the blocks under both versions of $\tilde{\GG}$.  This choice can be flexibly substituted with other methods, such as the Louvain algorithm \citep{blondel2008fast} or spectral clustering. The full procedure is summarized in \cref{algo:learn shared sparsity}.

\begin{algorithm}[h!]
     \SetAlgoLined
     \KwIn{Estimated covariance matrices $\{\hat{\bSigma}_{k}\}_{k \in [K]}$}
     \KwOut{Estimated number of blocks $\hat{L}$ and a partition $\{\hat{V}_\ell\}_{\ell \in [\hat{L}]}$ of the variables $[J]$}
     \nl Construct the continuous weighted adjacency matrix $\tilde{\GG}$ as defined in \eqref{eq:g tilde}.

     \nl Apply a community detection algorithm (e.g., the Leiden algorithm) on $\tilde{\GG}$. %
     
     \nl Let $\hat{L}$ be the automatically detected number of communities, and let $\{\hat{V}_\ell\}_{\ell \in [\hat{L}]}$ be the corresponding partition of the variables.
     \caption{Learning the shared block structure}\label{algo:learn shared sparsity}
\end{algorithm}

The Leiden algorithm also has a \emph{resolution parameter} that controls the granularity of the clustering, where a higher value produces a finer-grained partition with more clusters. In simulations, we use the default resolution parameter. In the real data analysis, we also consider a higher resolution parameter to find finer-grained blocks.

\paragraph{Selecting the tuning parameter $\lambda$}
The tuning parameter $\lambda$ in \eqref{eq:clime initial} controls the level of sparsity in the estimated precision matrix. To select an optimal $\lambda$, we employ 5-fold cross-validation, a standard choice for sparse Gaussian graphical models \citep{rothman2008sparse}. 
Information criteria such as BIC or EBIC \citep{chen2008extended,foygel2010extended} are alternatives, but in our experiments they tended to select overly sparse models when $J$ was large, say above 100.

Note that one may further fine-tune $\lambda$ for each cluster/block, by considering individual penalties $\lambda_{k}^{(\ell)}$ for each $\bOmega_k^{(\ell)}$. This could allow varying levels of sparsity across clusters/blocks.

\section{Simulation Studies}\label{sec:sims}
We empirically assess the proposed method under the following simulation settings:
\begin{itemize}
    \item \textbf{Dimensions}: We consider a fixed sample size $N = 1000$ and vary (i) the number of variables $J \in \{100, 500\}$, (ii) the number of true blocks $L \in \{5, 10\}$, and (iii) the number of latent classes $K \in \{5, 10\}$. 
    
    \item \textbf{Number of categories}: The first half of the $J$ items are binary responses (i.e., $C_j=2$), and the second half are four-category ordinal responses (i.e., $C_j=4$).
    
    \item \textbf{Graph structure}: We consider two block-diagonal graph structures with varying sparsity of (iv) $s = 0.2$ and $s = 0.6$. The graph within each block follows the Erd\H{o}s-R\'enyi random graph with probability $s$, i.e., each off-diagonal entry within a block is non-zero with probability $s$ and $0$ with probability $1-s$.
\end{itemize}
Data generation details are provided in Supplement \ref{supp:sim data generation}. Note that the range of $J$ is high-dimensional in graphical modeling, as estimating $\bOmega = (\bOmega_1, \ldots, \bOmega_K)$ entails $KJ^2$ entries.
For each setting, we run 100 replications and report the following averaged metrics:
\begin{itemize}
    \item For the latent class assignments, we report (i) $\text{Err}(\hat{\ZZ})$, the misclustering error in \eqref{eq:misclustering}.
    \item For the block recovery, we report (ii) $\text{Err}(\hat{L}) = \mathbb{I}(\hat{L} \neq L)$, i.e. the 0-1 loss for the estimated number of blocks. We also report the Adjusted Rand Index (ARI) \citep{hubert1985comparing} between the partitions $\{V_1, \ldots, V_L\}$ and $\{\hat{V}_1,\ldots,\hat{V}_{\hat{L}}\}$, and denote it as (iii) ARI$(\hat{V})$. The value of 1 indicates perfect clustering and a near-zero value indicates uninformative (random) clustering.

    \item For the precision matrices $\bOmega$, we report the (iv) support recovery (sparsity) error and (v) Frobenius error: %
    \begin{align*}
        \text{Err}_0(\hat{\bOmega}) &:= \frac{1}{K J^2} \sum_{k=1}^K \sum_{j,j'=1}^J \mathbb{I}\Big(\sgn(\hat{\omega}_{(j,j'),k}) \neq \sgn(\omega_{(j,j'),k})\Big), \\
    \text{Err}_F(\hat{\bOmega}) &:= \frac{1}{\sqrt{K J^2}}\sqrt{ \sum_{k=1}^K \|\hat{\bOmega}_{k} - \bOmega_{k}\|_F^2}.
    \end{align*}
\end{itemize}

\begin{table}[h!]
\centering
\begin{tabular}{cccc|ccccc|c}
\toprule
    $J$ & $K$ & $L$ & $s$ & Err($\hat{\ZZ}$) & Err($\hat{L}$) & 1-ARI($\hat{V}$) & $\text{Err}_0(\hat{\bOmega})$ & $\text{Err}_F(\hat{\bOmega})$ & time (s) \\
    \midrule
    \multirow{8}{*}{$100$}   & \multirow{4}{*}{$5$}  & \multirow{2}{*}{$5$}  & 0.6 & 0.00 & 0.00 & 0.00 & 0.084 & 0.103 & 2.4 \\
                            &                       &                       & 0.2 & 0.00 & 0.00 & 0.00 & 0.085 & 0.099 & 2.3 \\
                            &                       & \multirow{2}{*}{$10$} & 0.6  & 0.00 & 0.18 & 0.02 & 0.035 & 0.086 & 2.5 \\
                            &                       &                       & 0.2 & 0.00 & 0.41 & 0.06 & 0.038 & 0.084 & 2.3 \\
                            & \multirow{4}{*}{$10$} & \multirow{2}{*}{$5$}  & 0.6  & 0.05 & 0.00 & 0.00 & 0.090 & 0.113 & 4.1 \\
                            &                       &                       & 0.2 & 0.06 & 0.00 & 0.00 & 0.068 & 0.113 & 3.9 \\
                            &                       & \multirow{2}{*}{$10$} & 0.6  & 0.02 & 0.57 & 0.08 & 0.045 & 0.097 & 4.1 \\
                            &                       &                       & 0.2 & 0.04 & 0.67 & 0.09 & 0.038 & 0.097 & 4.1 \\
    \midrule
    \multirow{8}{*}{$500$}  & \multirow{4}{*}{$5$}  & \multirow{2}{*}{$5$}  & 0.6  & 0.00 & 0.00 & 0.00 & 0.102 & 0.060 & 47.0 \\
                            &                       &                       & 0.2 & 0.00 & 0.00 & 0.00 & 0.087 & 0.059 & 46.5 \\
                            &                       & \multirow{2}{*}{$10$} & 0.6  & 0.00 & 0.00 & 0.01 & 0.052 & 0.058 & 46.5 \\
                            &                       &                       & 0.2 & 0.00 & 0.00 & 0.00 & 0.027 & 0.056 & 46.2 \\
                            & \multirow{4}{*}{$10$} & \multirow{2}{*}{$5$}  & 0.6  & 0.00 & 0.00 & 0.00 & 0.106 & 0.067 & 83.7 \\
                            &                       &                       & 0.2 & 0.00 & 0.00 & 0.00 & 0.098 & 0.062 & 85.7 \\
                            &                       & \multirow{2}{*}{$10$} & 0.6  & 0.00 & 0.02 & 0.00 & 0.051 & 0.058 & 83.2 \\
                            &                       &                       & 0.2 & 0.00 & 0.00 & 0.00 & 0.033 & 0.058 & 84.3 \\
\bottomrule
\end{tabular}
\caption{Simulation results under varying dimensions $J,K,L$ and sparsity $s$. The metrics are averaged over 100 replications, where the first column takes values in $[0, N]$ with $N = 1000$, the second--fourth column takes values in $[0,1]$, and the fifth column (the scaled Frobenius norm) is unbounded. For all columns, \textbf{smaller values are better}.
\label{tab:sim result}
}
\end{table}

\cref{tab:sim result} illustrates that the proposed method consistently achieves high accuracy in all of clustering, block-structure recovery, and precision matrix estimation. Regarding clustering, exact clustering is possible except when $J = 100, K = 10$, and the clustering error is small even in this challenging setting. This aligns with theoretical insights from \cref{prop:clustering} that clustering is harder when $J$ decreases and $K$ increases. In Supplement \ref{sec:em comparison}, we additionally discuss a one-step likelihood refinement procedure to further improve clustering accuracy.

Next, for block structure recovery, our method successfully identifies the true number of blocks ($\hat{L} = L$) and achieves a high ARI for the estimated partition $\{\hat{V}_1, \ldots, \hat{V}_{\hat L}\}$ in most of settings. Even when the point estimate $\hat L$ is incorrect, the high ARI (greater than 0.9) indicates that the resulting partitions are near-correct and informative; see Supplement \ref{sec:robust} for a visualization. We also observe that when $\hat L$ is incorrect, it is typically underestimated, i.e., $\hat{L} \le L$. This may be an inherent limitation of modularity-based graph clustering algorithms when the block sizes $J/L$ are small \citep{fortunato2007resolution,traag2015detecting}, and we believe that this issue may be mitigated by fine-tuning the resolution parameter. See Supplement \ref{supp:sensitivity} for further illustration regarding the resolution parameter.

Given the near-perfect recovery of the blocks, the precision matrices $\bOmega$ are also well estimated in terms of both the support and values. Overall, all reported accuracy measures are comparable as the sparsity level $s$ changes, illustrating that our pipeline can handle unknown, varying sparsity. We will further discuss precision matrix estimation in the context of the comparison studies that will follow. Finally, the runtime reported in the last column of \cref{tab:sim result} demonstrates the scalability of the method. %

\paragraph{Comparison and ablation studies}
We compare Steps 2--3 of our pipeline with the joint graphical lasso (JGL) from \cite{danaher2014joint}. After Step 1 and estimating the covariance matrices $\hat{\bSigma}_k$ in Step 2, JGL is a natural alternative because its joint penalty (across all $k$) helps learn a common block-diagonal structure for the precision matrices, followed by within-block estimation. In fact, \cite{danaher2014joint} proposed an explicit criterion that helps characterize block-diagonal structures and used it to scale up computation.
We fix $J=100$ and use group JGL, which encourages shared sparsity across all $K$ precision matrices.

\begin{table}[h!]
\centering
\resizebox{\textwidth}{!}{
\begin{tabular}{ccc|ccc|ccc|ccc|ccc}
\toprule
\multirow{2}{*}{$K$} & \multirow{2}{*}{$L$} & \multirow{2}{*}{$s$}
& \multicolumn{3}{c|}{Err($\hat{L}$)}
& \multicolumn{3}{c|}{1-ARI($\hat{V}$)}
& \multicolumn{3}{c|}{$\text{Err}_F(\hat{\bOmega})$}
& \multicolumn{3}{c}{time (s)} \\
& & 
& Ours & JGL-O & JGL-CV
& Ours & JGL-O & JGL-CV
& Ours & JGL-O & JGL-CV
& Ours & JGL-O & JGL-CV \\
\midrule
\multirow{4}{*}{$5$}
& \multirow{2}{*}{$5$}
& 0.6 & 0.00 & 0.80 & 1.00 & 0.00 & 0.88 & 1.00 & 0.103 & 0.126 & 0.121 & 0.1 & 0.6 & 0.5 \\
& & 0.2 & 0.00 & 0.70 & 1.00 & 0.00 & 0.36 & 1.00 & 0.099 & 0.133 & 0.124 & 0.1 & 0.5 & 0.4 \\
& \multirow{2}{*}{$10$}
& 0.6 & 0.18 & 0.89 & 1.00 & 0.02 & 0.65 & 1.00 & 0.086 & 0.111 & 0.105 & 0.1 & 0.6 & 0.4 \\
& & 0.2 & 0.41 & 0.74 & 1.00 & 0.06 & 0.21 & 1.00 & 0.084 & 0.116 & 0.108 & 0.1 & 0.5 & 0.4 \\
\midrule
\multirow{4}{*}{$10$}
& \multirow{2}{*}{$5$}
& 0.6 & 0.00 & 0.84 & 1.00 & 0.00 & 1.00 & 1.00 & 0.113 & 0.132 & 0.132 & 0.1 & 1.1 & 0.9 \\
& & 0.2 & 0.00 & 0.87 & 1.00 & 0.00 & 0.86 & 1.00 & 0.113 & 0.141 & 0.140 & 0.1 & 1.0 & 0.9 \\
& \multirow{2}{*}{$10$}
& 0.6 & 0.57 & 0.89 & 1.00 & 0.08 & 0.96 & 1.00 & 0.097 & 0.117 & 0.117 & 0.1 & 1.2 & 0.9 \\
& & 0.2 & 0.67 & 0.91 & 1.00 & 0.09 & 0.55 & 1.00 & 0.097 & 0.125 & 0.124 & 0.1 & 0.9 & 1.0 \\
\bottomrule
\end{tabular}
}
\caption{Simulation results for $J=100$ while varying $K,L,s$. `Ours' denotes the proposed method, and `JGL-O'/`JGL-CV' denotes the joint graphical lasso from \cite{danaher2014joint}, where the tuning parameter is fine-tuned using the true $L$ in an oracle manner, or selected via cross-validation. For all columns, \textbf{smaller values are better}. Runtime reports the final comparison-stage fit (after estimating $\hat{\bSigma}$) and excludes initial tuning.}
\label{tab:sim comparison jgl}
\end{table}

\cref{tab:sim comparison jgl} shows that the proposed method more accurately recovers the block structure and estimates the precision matrices. The `JGL-O' column fixes $\lambda_2 = 0.001$ and chooses $\lambda_1$ from a grid with increment $0.001$ so that the average $\hat{L}$ best matches the true number of blocks $L$, which is unavailable in practice. Even with this oracle-style tuning, JGL results in inconsistent block recovery in all settings. Also, JGL-CV always selects $\hat{L}=1$ (no blocks detected), which indicates the need for careful tuning (see Supplement \ref{supp:sensitivity} for additional discussion on sensitivity). The proposed method is also computationally favorable. %

\begin{figure}
    \centering
    \includegraphics[width=\linewidth]{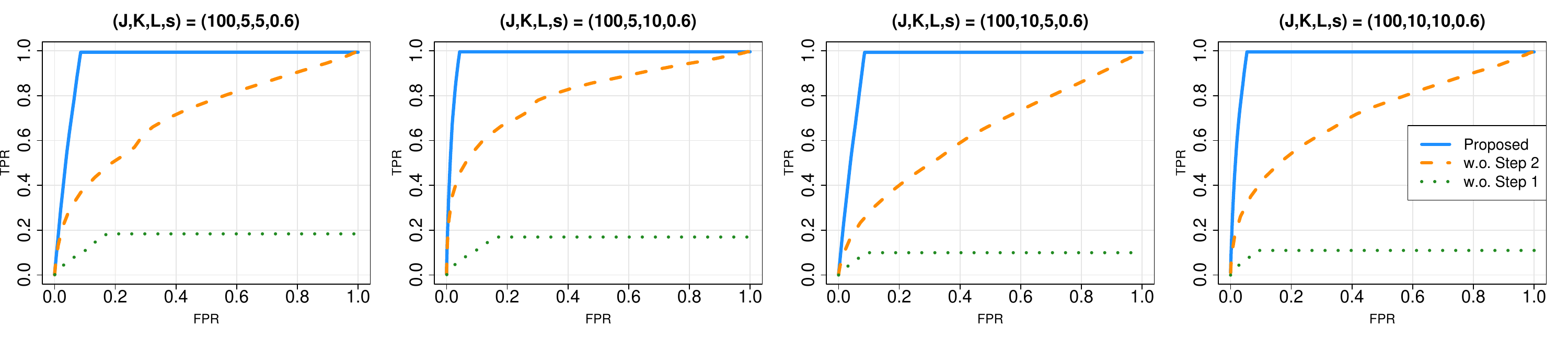}
    \includegraphics[width=\linewidth]{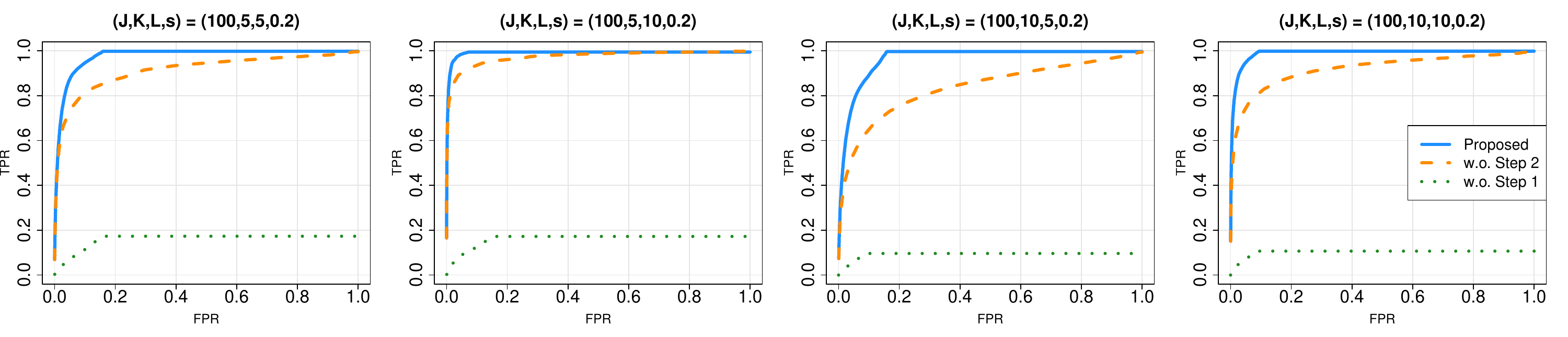}
    \caption{ROC curves (false positive rate versus true positive rate) for selecting $\bOmega$ under $J = 100$ and varying other dimensions/sparsity. The solid line corresponds to the proposed method, whereas the dashed/dotted line corresponds to variants where Steps 2/1 are omitted, respectively.}
    \label{fig:roc}
\end{figure}

We also conduct ablation studies, which reveal that each of Steps 1 and 2 in the proposed method play a crucial role. We report ROC curves regarding precision matrix estimation in \cref{fig:roc}. In particular, without clustering in Step 1, the observed dependence due to the latent class confounding invalidates the block structure estimation in Step 2 (leading to a non-informative ARI of 0.01), and results in a very low TPR. %
When the block structure estimation in Step 2 is omitted, the off-block entries of the precision matrix are subject to spurious false discovery, as indicated by the lower ROC curves.

\section{Real Data Applications}\label{sec:data}
\subsection{ANES public opinion survey}
We analyze the 2022 American National Election Studies (ANES) public opinion pilot survey \citep{anes2022pilot}. After preprocessing as in \cite{chen2024generalized}, the dataset contains ordinal responses from $N = 1511$ participants to $J = 144$ items. The response categories $C_j$ varies from 2 to 7; see \cref{fig:anes_histogram} in the Supplement for a histogram.

Public opinion surveys often violate local independence because the questionnaire design creates dependent item groups. For example, the ANES survey has four consecutive questions that ask respondents whether being White, Black, Hispanic, or Asian comes with advantages (these correspond to nodes 5--8 in \cref{fig:precision_anes}). We show that our method can learn such blocks in a fully data-driven manner. The full data analysis using the proposed pipeline took 16 seconds, with 0.1, 1.8, and 13.7 seconds for Steps 1--3, respectively, including tuning parameter selection in Step 3. We discuss the findings step by step.

\begin{figure}[h!]
    \centering
    \includegraphics[width=\linewidth]{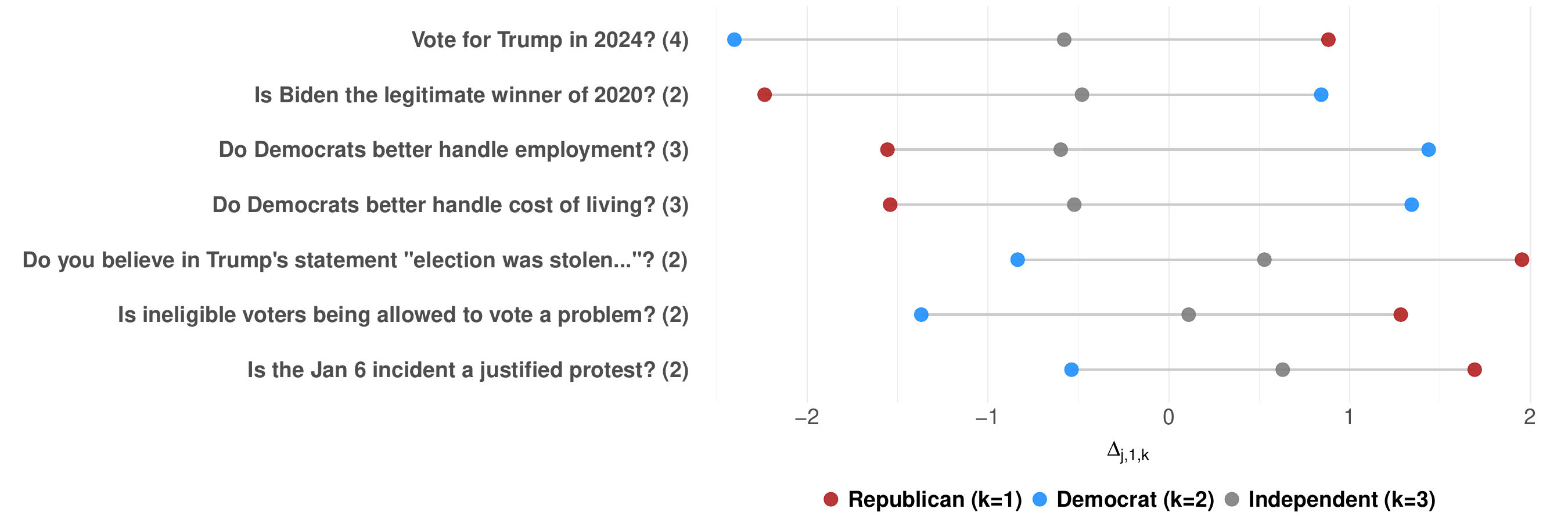}
    \caption{Visualization of the threshold parameter $\Delta_{j,1,k}$ for sample ANES items. Each row corresponds to a sample item (number of categories $C_j$ in parentheses), and each dot corresponds to each latent class $k = 1,2,3$. Note that $\PP(R_{j} = 1 \mid Z = k) = \Phi(\Delta_{j,1,k})$.}
    \label{fig:anes delta}
\end{figure}

\paragraph{Step 1: clustering}
We implemented Step 1 with $K = 3$; see the Supplement for details. By inspecting the estimated threshold parameters $\Delta_{j,1,k}$ for each cluster,
the three clusters can be interpreted as Republican ($k=1$), Democrat ($k=2$), and Independent ($k=3$) voters. Example items are visualized in \cref{fig:anes delta}, where the Republican and Democrat clusters typically form the two extremes.
This interpretation validates the existence of latent classes and shows that spectral clustering can detect them.

\begin{figure}[h!]
    \vspace{-2mm}
    \centering
    \includegraphics[width=\linewidth]{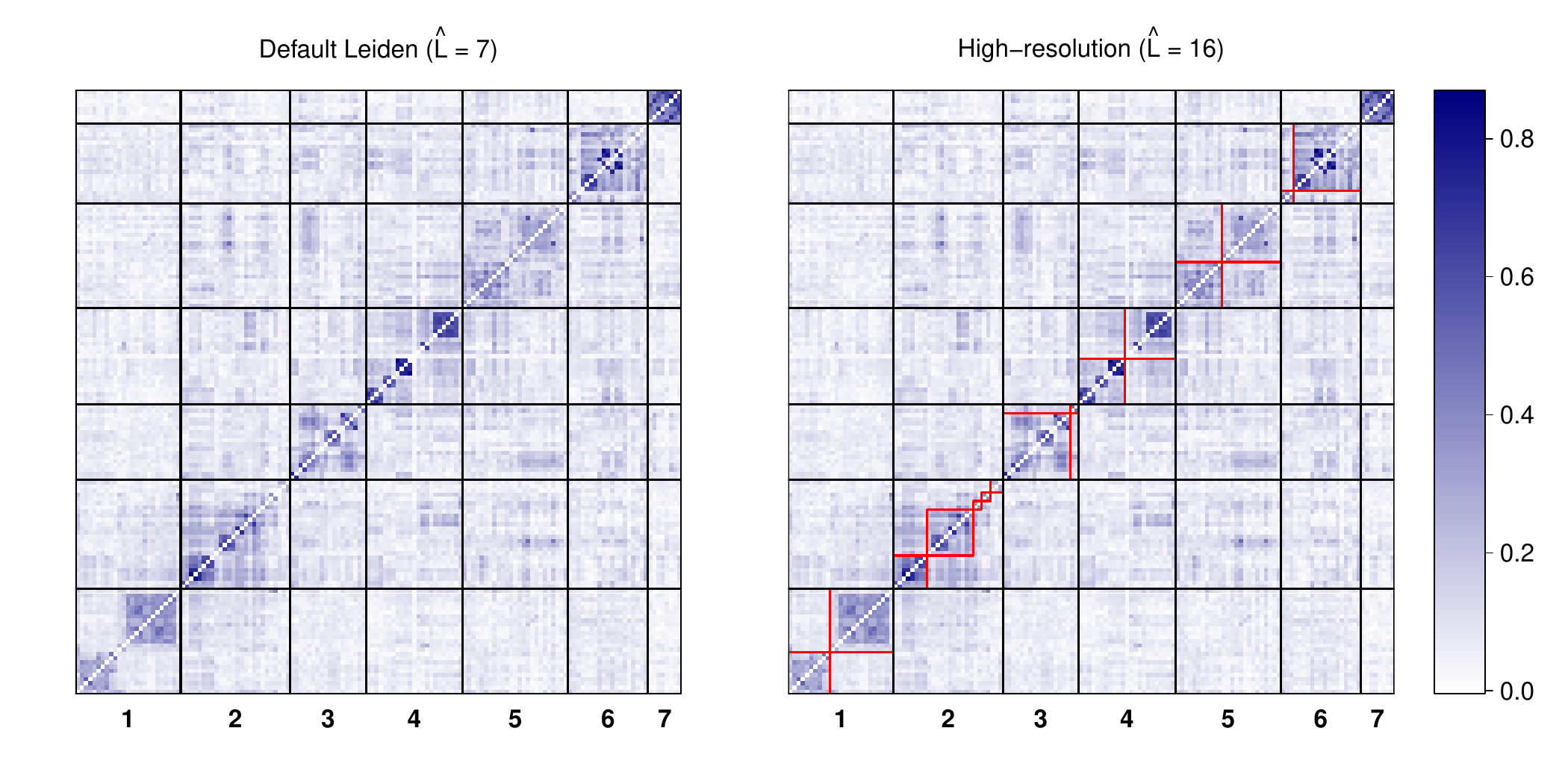}
    \caption{Learned ANES block structures at two resolutions. Both panels show the same thresholded matrix $\tilde{\GG}$ with identical row/column permutations; the left uses default Leiden, and the right uses resolution $1.5$. Red lines indicate the additional fine-grained high-resolution blocks.}
    \label{fig:ANES_block}
    \vspace{-9mm}
\end{figure}

\begin{table}[h!]
    \centering
    \small
        \begin{tabularx}{\textwidth}{@{} l l X @{}}
        \toprule
        \textbf{Default Leiden} & \textbf{High-resolution} & \textbf{Example item (paraphrased)} \\
        \midrule
        \multirow{2}{*}{1. Political Engagement} 
        & Participation & {Joined in a protest march in the past year?} \\
        & Issue ownership & Do Democrats better handle ``employment''? \\
        \midrule
        
        \multirow{6}{*}{2. Trust} 
        & President & Approve of Biden's job as president? \\
        & Experts & Do election officials try to advantage particular candidates? \\
        & Jan 6 incident & Is the Jan 6 incident a justified protest? \\
        & Elections   & Do officials try to get election results right? \\
        & Political tolerance  & Do you favor firing ``gun rights activists''? \\
        \midrule

        \multirow{2}{*}{3. Racism} 
        & Privileges/resentment  & Does being ``Asian'' come with advantages?\\ %
        & Reasons      & Differences due to ``biological reasons''? \\
        \midrule

        \multirow{2}{*}{4. Democratic Attitudes} 
        & Immigration       & Do immigrants make you ``angry''? \\ %
        & Electoral integrity & Is Biden the legitimate winner of 2020? \\
        \midrule

        \multirow{2}{*}{5. Issue Importance} 
        & Economic      & How important is the ``taxes'' issue? \\
        & Social     & How important is the ``gun'' issue? \\
        \midrule

        \multirow{4}{*}{6. Abortion} & Prosecution & Do you favor prosecuting health care professionals who perform abortions? \\
        & Emotions & Does Roe v. Wade being overturned make you feel ``angry''? \\
        \midrule
        
        7. Racial Stereotypes & --- & Are ``White'' people hard-working? \\
        \bottomrule
    \end{tabularx}
  
    \caption{Hierarchical block structure of the ANES dataset across two algorithmic resolutions. The coarse blocks (left) partition into nuanced sub-blocks (middle) under a larger resolution parameter. Example word choices among multiple similar items are indicated in quote marks. The block index $1-7$ in the first column matches that in \cref{fig:ANES_block}.}
    \label{tab:cluster_hierarchy}
    \vspace{-2mm}
\end{table}

\paragraph{Step 2: learning block structures}
We applied our Step 2 with the continuous aggregation matrix $\tilde{\GG}$. \cref{fig:ANES_block} displays the learned block structures at two different algorithmic resolutions. %
In the left panel, we implemented the default Leiden algorithm, and the items are partitioned into $\hat{L} = 7$ blocks. %
In the right panel, we increased the Leiden resolution parameter to $1.5$, which gives $\hat{L} = 16$ blocks. We see that most of the 7 blocks (in black) are partitioned into fine-grained sub-blocks (in red). 
Referring to the questionnaire allowed us to interpret each coarser and finer block (see the first/second columns of \cref{tab:cluster_hierarchy}), revealing an interesting hierarchical structure.
For instance, a coarse block on `issue importance' is split into economic and social issues (see the fifth row in \cref{tab:cluster_hierarchy}). This multi-resolution capability can be practically useful as it allows researchers to adjust the granularity of blocks easily. 

For evaluation, we compute the ARI between $\hat{V}$ and the questionnaire item groups. The questionnaire has $L=31$ groups, including $19$ short blocks (with length four or less); {see \cref{tab:anes true block} in the Supplement}. The default/high-resolution partitions in \cref{fig:ANES_block} achieve ARI of 0.31 and 0.42 
We note that the questionnaire groups are only a guideline rather than ground truth. Indeed, the seemingly mismatched entries are still highly interpretable. For example, the items ``how worried are you about black citizens being the victims of violent crime'' and ``do you think public schools focus too much on race and racism?'' are listed under the respective questionnaire categories of ``crime'' and ``role of schools''. Both items are assigned to the ``racism'' block by our method, revealing nuanced dependence not captured by the default grouping.

\paragraph{Step 3: precision matrix estimation} 
Fixing the block structure with $\hat{L}=7$, we estimate the precision matrices $\bOmega$. Partial-correlation networks ($\rho_{(j,j'),k} := - \omega_{(j,j'),k}/\sqrt{\omega_{(j,j),k} \omega_{(j',j'),k}}$) for selected nodes in the racism block $V_3$ are displayed in \cref{fig:precision_anes}.
For simpler presentation, the nodes are labeled as 1--11; see \cref{tab:anes item matching} in the Supplement for the node information.

\begin{figure}
    \centering
    \includegraphics[width=\linewidth]{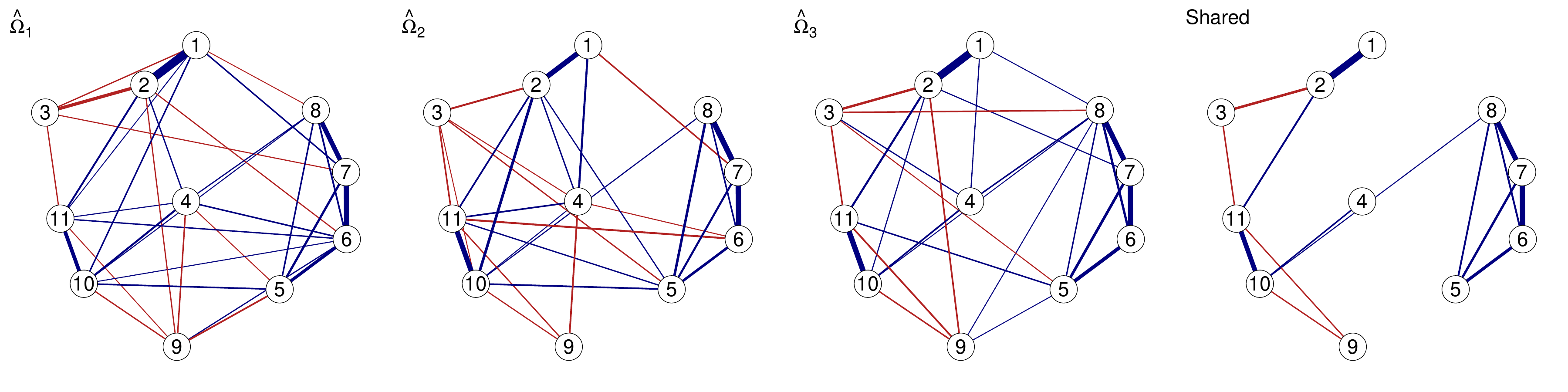}
    \caption{Estimated partial-correlation networks for selected nodes in ANES ``racism'' block. The first three panels correspond to the `Republican', `Democrat', and `Independent' latent classes, and the last panel displays the shared edges. Edge width is proportional to partial correlation, with the blue/red colors denoting positive/negative values. {Details of the 11 nodes are given in \cref{tab:anes item matching} in the Supplement.} %
    } %
    \label{fig:precision_anes}
\end{figure}

The estimated networks show both shared and class-specific dependence. Among the 27, 25, and 22 edges in the first three graphs, 15 are shared. This includes the clique on nodes 5--8 about racial-group advantages for White/Black/Hispanic/Asian people. In contrast, we also see heterogeneous edges across the three classes. A notable discrepancy occurs at the edge 6--11, where the two nodes correspond to the items ``do you think that being Black comes with advantages'' and ``over the past few years, Blacks have gotten less than they deserve''. In the more polarized latent classes (Republicans and Democrats; see the first two panels), these items exhibit conditional dependencies with different signs. However, the third latent class (Independents) exhibits conditional independence. A similar pattern appears for edge 1--7, illustrating how our method captures dependence differences across ideological clusters.

\subsection{HapMap3 genetics data}
We next analyze the HapMap3 dataset, which collects genotypes of individuals from different populations \citep{international2010integrating}.
As the full dataset is extremely high-dimensional, we focus on the $J = 500$ single nucleotide polymorphisms (SNPs) on chromosome 22 with the largest variances. The observed genotype $R_{i,j}$ is encoded as 0, 1, or 2, representing the count of a specific reference allele, so each SNP takes $C_j = 3$ values. We study $N = 479$ individuals from $K = 4$ distinct subpopulations. See Supplement \ref{supp:hapmap discussion} for additional discussion on why block structures are natural here.

Our method took 25 seconds to run, of which Step 1 took 0.1 seconds and Step 2 took 24.6 seconds. Because individual SNP-level dependencies are difficult to interpret biologically, we focus on block-structure recovery and do not report Step 3 for this dataset.

\vspace{-1mm}
\paragraph{Step 1: clustering}
We applied Step 1 with $K = 4$. Compared to the held-out subpopulation information, clustering was nearly perfect, with 11 errors among the $N=479$ individuals.
This illustrates the good performance of the proposed clustering procedure despite local dependence.

\begin{figure}
    \centering
    \includegraphics[width=\linewidth]{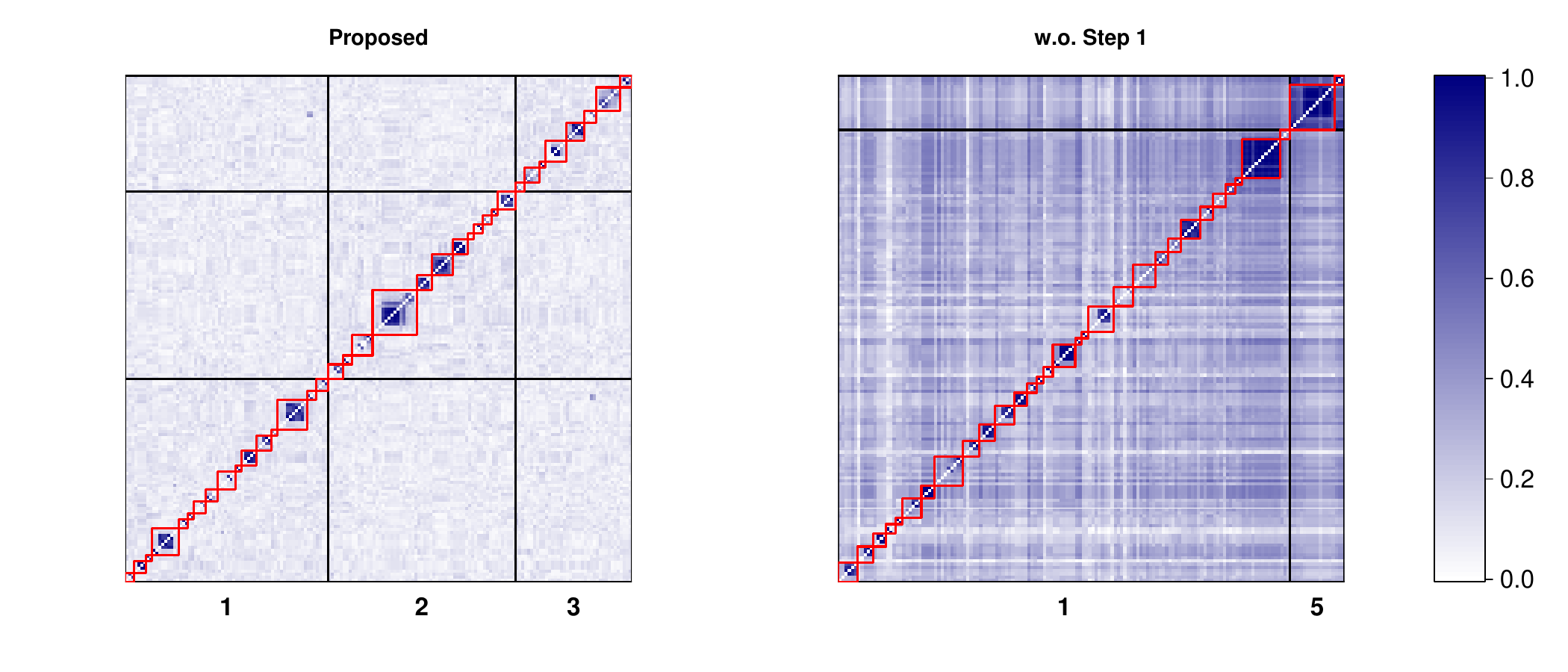}
    \caption{Estimated $\tilde{\GG}$ and community structures for the HapMap3 data. Black/red lines show the default/high-resolution blocks, respectively. Left: the proposed method (Steps 1 and 2). Right: a modified analysis that omits Step 1. The columns of each panel are permuted separately. The proposed analysis leaves much weaker dependence between different learned blocks, visible as lighter off-block regions.}
    \label{fig:hapmap block}
\end{figure}

\vspace{-1mm}
\paragraph{Step 2: learning block structures}
Next, we applied Step 2 to learn block structures. The left panel of \cref{fig:hapmap block} shows the first three large default-resolution blocks, covering 157 SNPs. The default resolution selected $\hat{L}=11$ blocks, while the high-resolution run selected $\hat{L}=76$ finer blocks. The average absolute estimated covariance over off-block SNP pairs is similar under the default and high-resolution partitions: 0.086 and 0.087, respectively. Thus, the additional splits do not increase between-block dependence, supporting the use of the high-resolution partition. The estimated blocks mostly align with the sequential SNP ordering, although some non-adjacent SNPs are assigned to the same block.

Finally, we emphasize that accounting for the population heterogeneity in Step 1 is crucial. When the clustering in Step 1 is omitted, the right panel of \cref{fig:hapmap block} shows much larger cross-block correlations. Quantitatively, the average off-block correlation is $0.287$ without Step 1, compared with the value of $0.087$ after conditioning on the estimated subpopulations.

\section{Discussion}\label{sec:discussion}
{We have proposed an interpretable and theoretically grounded methodology that relaxes local independence in modern high-dimensional latent class analysis. We propose a block-dependent LCM and developed a practical and scalable algorithm that sequentially estimates all model components. The proposed method is supported by rigorous finite-sample theoretical guarantees as well as promising illustrations in both social and biological applications.}

There are several interesting directions for further work. First, our theoretical consistency result builds upon exact recovery of latent classes and block structures. A natural extension is to investigate downstream estimation robustness under weak clustering recovery or block recovery. Methodologically, one may enhance Step 2 of our method by integrating robust polychoric correlation estimators \citep{welz2024robust}, at additional computational cost.

Second, our current framework assumes fully observed responses, whereas many block-dependent ordinal datasets suffer from missingness. For instance, movie ratings may be dependent through shared genres or actors, but each user rates only a small fraction of movies. Hence, extending our methodology to accommodate missingness would be important.

Finally, our unsupervised framework can be naturally extended to supervised settings, such as latent class regression with block-structured dependence, by incorporating covariates to the latent Gaussian mean in \eqref{eq:X given Z}. Also, we can integrate mixed data types (such as continuous) by modifying the polychoric correlation estimator in Step 2 accordingly.

\paragraph{Acknowledgements}
This research is partially supported by NSF Grant DMS-2210796. %

\spacingset{1}
\bibliographystyle{apalike}
\begingroup
\small
\bibliography{ref}
\endgroup

\newpage
\appendix

\appendix
\begin{center}
    \LARGE Supplement to ``Beyond Local Independence: High-Dimensional Latent Class Graphical Models with Shared Block Structure''
\end{center}
\vspace{5mm}

\spacingset{1.6}

The supplement is organized as follows. Section~\ref{supp:literature} discusses related work on locally dependent latent class models, Gaussian graphical and copula models, and other models. Section~\ref{supp:proof} proves all main results from \cref{sec:theory and method}, and Section~\ref{supp:lemmas} proves auxiliary lemmas. Section~\ref{supp:simulation-details} describes implementation details regarding simulations. Section~\ref{supp:comparison-visualization} presents additional comparison studies and visualizations, including a likelihood-based refinement procedure for clustering. Section~\ref{supp:real-data-details} provides additional details and discussion regarding the real data analysis.

\section{Additional Related Work}\label{supp:literature}
\paragraph{Locally dependent latent class models}
There is a steady line of methodological work from quantitative social science and medical diagnosis on relaxing the local independence assumption in LCMs. One classical approach is to directly add pairwise or higher-order interaction effects while modeling the conditional probabilities, assuming a known dependence graph or considering all second-order interactions \citep{harper1972local,hagenaars1988latent,asparouhov2015residual}.
Another classical approach %
is to introduce a unidimensional latent Gaussian random-effect to model dependence in each class \citep{qu1996random, dendukuri2001bayesian,menten2008bayesian}. This line of random-effect modeling has been especially popular in medicine to model conditionally dependent test results.

More related to our work, \cite{everitt1988finite} and \cite{uebersax1999probit} formulated the probit latent class framework (see \eqref{eq:X given Z}--\eqref{eq:R}) for binary and ordinal data. This model has served as a baseline model for developing both Bayesian and frequentist methods that accommodate local dependence \citep{asparouhov2011using,lee2020detecting,lee2022estimating}. See \cite{lee2022estimating} for additional references. In the most general framework, \cite{uebersax1999probit} allowed the threshold vector $\bDelta$ to vary across classes as in \eqref{eq:R}, but later works often considered homogeneous thresholds shared across latent classes. A more important distinction is that all cited works are methodologically driven under a fixed $J$. The Bayesian methods mainly focus on MCMC sampling and the frequentist methods are based on EM algorithms to compute the marginal maximum likelihood estimator, both of which can be concerning when $J$ is large. Our work fills in this gap by proposing a more computationally appealing method with solid theoretical guarantees regarding learning the dependence structure for high-dimensional data with increasing $J$.

\paragraph{Gaussian graphical models, copulas, and network models}
Gaussian graphical models and copulas (also called nonparanormals) have been studied extensively to model dependent continuous data \citep{yuan2007model,liu2009nonparanormal,liu2012high}. %
To allow various response types, extensions have been proposed for binary data \citep{fan2017high} as well as ordinal/mixed data \citep{guo2015graphical,feng2019high,quan2018rank} and truncated data \citep{yoon2020sparse}. %
However, the vast majority of these works do not consider \emph{heterogeneous latent sources}, as considered in this paper. Note that ignoring potential latent classes and fitting a single graphical model may lead to inconsistent estimation, and it is important to model such heterogeneous populations. Also, most methods (excluding \cite{guo2015graphical}, which proposes a variational EM algorithm) rely on Kendall's tau as discussed in \cref{rmk:rank}. A few exceptions include \cite{Chandrasekaran2012latent,frot2019robust,frot2019graphical}, which take a different approach from ours by using linear structural equations to model the relationship between continuous latent variables and observations. 

We also note another line of work that assumes knowledge of the heterogeneous sources (that is, with fully observed class labels) and posits separate Gaussian graphical models for each source \citep{guo2011joint,danaher2014joint,lee2015joint,ma2016joint}. %
In particular, \cite{guo2011joint} considers a related notion of shared sparsity across sources, but without block structures. \cite{ma2016joint} assumed prior knowledge of the structural relationship between multiple graphical models.
\cite{danaher2014joint}, \cite{mohan2014node}, and \cite{yang2015fused} used a shared block structure among the multiple precision matrix estimates to speed up computation. %
Unlike this paper, the block structure in the above studies is not an interpretable modeling assumption to capture unobserved population heterogeneity, but is a structural artifact that arises from solving a strongly penalized optimization problem, so the estimated structure crucially depends on the penalty parameter. 
Extending such Gaussian likelihood-based methods to ordinal data is non-trivial, and we instead estimate the block structure directly from the covariance matrices (instead of simultaneously learning the precision matrices). This way, we can circumvent introducing an additional joint penalty regarding the similarity of precision matrices. More recent works also considered multiple graphical models using Gaussian copulas \citep{liu2023decomposition,hermes2024copula}.

Additionally, continuous latent variable models based on non-Gaussian copulas have been proposed in the item response theory literature in psychometrics \citep{braeken2007copula,braeken2011boundary}. These works also consider similar block structures within the items, but assume that the block structure is known. Focusing exclusively on binary data, latent variable network models based on non-Gaussian graphical modeling (such as Ising models) have also been introduced \citep{chen2018robust,miao2024bayesian}, and have been recently extended to ordinal data as well \citep{marsman2025bayesian}.
However, these works assume the network dependence to be invariant with respect to the latent variables, and the estimation methods do not come with theoretical guarantees.

\paragraph{Directed graphical models}
This paper generalizes the usual latent class model by allowing \emph{undirected} dependence structures among the observations, and we compare it with work in mixture modeling and causal discovery that considers \emph{directed} dependence. While this problem has been popular for continuous data with linear structural equations \citep{Chandrasekaran2012latent,xie2024generalized}, discrete data has received less attention. Recently, \cite{gordon2023causal}, \cite{mazaheri2023causal}, and \cite{lee2026deep} considered a directed analog of the problem studied here, where the first two focus mainly on binary responses, and the third allows more complex latent structures but under a generalized linear parametrization.
This line of work treats the causal graph among the observed variables as fixed across latent classes, whereas our framework allows the within-block graph to vary. Also, their theoretical framework focuses on asymptotic population identifiability, whereas we establish non-asymptotic error bounds and allow high-dimensional responses.

\paragraph{Multi-layer networks}
Our model can also be viewed through the lens of multi-layer networks: each latent class induces a weighted graphical model with precision matrix $\bOmega_k$ on the same items $V = [J]$. The within-block edges may vary across latent classes, but the target block partition is shared across layers. This connects our block-recovery step to work on planted partitions in multi-layer networks, which has been of both theoretical interest (in the context of multiple stochastic block models) %
\citep{paul2020spectral,ma2023community,chatterjee2024detecting} and applied interest (for brain and social networks) \citep{bassett2011dynamic,greene2013producing}. One distinction is that those works typically observe binary adjacency matrices, whereas here the weighted layer-specific graphs are estimated from tabular ordinal responses.

\section{Proof of Main Results}\label{supp:proof}
\subsection{Proof of results from \cref{subsec:spectral}}

\begin{proof}[Proof of \cref{prop:clustering}]
    For simplicity, set $\tilde{J} := \sum_{j=1}^J C_j$. 
    We equivalently write \eqref{eq:conditional expectation extended} as
    $$\tilde{R}_{i,\sum_{m=0}^{j-1} C_m + a} = \theta_{j,a,k} + \epsilon_{i,\sum_{m=0}^{j-1} C_m + a},$$
    where $\bepsilon_i = (\epsilon_{i,1},\ldots,\epsilon_{i,\tilde{J}})^\top \in [-1,1]^{\tilde{J}}$ is a mean-zero noise vector that is independent across $i$. Under the block structure assumption, we can decompose $\bepsilon_i$ into $L$ independent components $\big(\bepsilon_i^{(1)},\ldots,\bepsilon_i^{(L)}\big)$. Then, denoting $\|\cdot\|_{\psi_2}$ as the sub-Gaussian norm of a random vector (see Definition 2.5.6, \cite{vershynin2018high}), we have 
    $$\|\bepsilon_i\|_{\psi_2} = \max_{\ell=1}^L \|\bepsilon_i^{(\ell)}\|_{\psi_2} \le \sqrt{\tilde{J}_{\max}},$$
    so the mean-zero noise $\bepsilon_i$ is sub-Gaussian with norm bounded by $\sqrt{\tilde{J}_{\max}}$. Here, the last inequality follows as $\bepsilon_i^{(\ell)}$ is a bounded random vector with dimension $\sum_{j \in V_\ell} C_j \le \tilde{J}_{\max}$. %

    Given the sub-Gaussian bound for the noise, we show \eqref{eq:clustering accuracy} by invoking \cite{zhang2024leave}. Set the minimum distance between the cluster centers as
    $$D_{\theta} := \min_{k \neq k' \in [K]} \|\btheta_{k} - \btheta_{k'}\|,$$
    where $\btheta_k = (\theta_{1,1,k}, \ldots, \theta_{J,C_J,k})^\top$ denotes the flattened mean vector of the $k$th cluster.
    We first show that the associated signal-to-noise ratios in \cite{zhang2024leave} are large enough:
    \begin{align}\label{eq:spectral clustering snr}
        \psi_1 := \frac{D_\theta}{K(1+\sqrt{J/N})\sqrt{\tilde{J}_{\max}}} > c, \quad \rho_1 := \frac{\sigma_K\big(\EE[\tilde{\RR} \mid \ZZ]\big)}{(\sqrt{N}+\sqrt{J})\sqrt{\tilde{J}_{\max}}} >c.
    \end{align}
    Below, we separately show each claim of \eqref{eq:spectral clustering snr} under the assumption
    $$D \gtrsim K \sqrt{\tilde{J}_{\max}} \Big(1+\sqrt{\frac{J}{N}}\Big), \quad \sigma_K(\bTheta) \gtrsim \sqrt{K \tilde{J}_{\max}} \Big(1+\sqrt{\frac{J}{N}}\Big).$$
    
    For showing the second bound in \eqref{eq:spectral clustering snr}, we work under a high probability set $E_1 := \{\min_{k\in[K]} N_k > \frac{\delta_3 N}{2K}\}$, where $N_k := \sum_{i=1}^N \mathbb{I}(Z_i=k)$ denotes the size of the $k$th latent class and $\delta_3$ is the constant from \cref{assmp:bounded parameter}. By the multiplicative Chernoff bound for the sum of bounded random variables, we get
    $$\PP\left(N_k\le \frac{N \pi_k}{2}\right) = \PP\left(N_k\le \frac{\EE[N_k]}{2}\right) \le \exp\left(-\frac{N\pi_k}{8}\right).$$
    Using $\pi_{k}\ge \delta_3/K$ from \cref{assmp:bounded parameter} and a union bound, we have
    \begin{equation}\label{eq:e_1}
        \PP(E_1^c) \le \sum_{k=1}^K \PP\left(N_k\le \frac{N\pi_k}{2}\right) \le K\exp\left(-\frac{\delta_3 N}{8K}\right).
    \end{equation}
    As $\mathbf{Y}$ is the $N \times K$ one-hot membership matrix, we have $\mathbf{Y}^\top \mathbf{Y}=\operatorname{diag}(N_1,\ldots,N_K)$. Hence, on $E_1$, $$\sigma_K(\mathbf{Y})=\sqrt{\min_{k\in[K]}N_k} \ge \sqrt{\delta_3 N / 2K}.$$
    Recalling the decomposition $\EE[\tilde{\RR} \mid \ZZ] = \mathbf{Y} \bTheta^\top,$
    we have
    $$\sigma_K\big(\EE[\tilde{\RR} \mid \ZZ]\big) \ge \sigma_K(\mathbf{Y}) \sigma_K(\bTheta) \gtrsim (\sqrt{N}+\sqrt{J}) \sqrt{\tilde{J}_{\max}}.$$
    Here, the first inequality is a standard fact regarding singular values, and the second inequality follows from invoking the assumption on $\sigma_K(\bTheta)$.

    Next, to show the first bound in \eqref{eq:spectral clustering snr}, it suffices to prove that $D_\theta \gtrsim D = \min_{k \neq k' \in [K]} \|\bDelta_{k} - \bDelta_{k'}\|$, since the separation in the proposition is defined in terms of the threshold vectors $\bDelta_k$ as opposed to the cluster-center vectors $\btheta_k$.
    Fix any $k \neq k' \in [K]$ and define
    $$d_{j,a} = d_{j,a,k,k'} := \Phi(\Delta_{j,a,k}) - \Phi(\Delta_{j,a,k'}), \quad \forall a \in [C_j-1],$$
    and $\mathbf d_j = (d_{j,1}, \ldots, d_{j,C_j-1})^\top$. For notational convenience, also let $d_{j,0} = d_{j,C_j} = 0$ so that
    $$\theta_{j,a,k} - \theta_{j,a,k'} = d_{j,a} - d_{j,a-1}.$$
    Hence, we have
    \begin{align*}
        \|\btheta_{j,k}-\btheta_{j,k'}\|^2 = \sum_{a=1}^{C_j} (d_{j,a} - d_{j,a-1})^2 = \mathbf d_j^\top \mathbf A \mathbf d_j,
    \end{align*}
    where $\mathbf A$ is a $(C_{j}-1) \times (C_j-1)$ tri-diagonal matrix with $2$ on the main diagonal and $-1$ on the off-diagonals. Noting that $\lambda_{\min}(\mathbf A) = 2 - 2 \cos({\pi}/{C_j}) > 0$ (see \cite{yueh2005eigenvalues}), we can write
    $$\|\btheta_{j,k}-\btheta_{j,k'}\|^2 \gtrsim \|\mathbf d_{j}\|^2 \gtrsim \|\bDelta_{j,k} - \bDelta_{j,k'}\|^2,$$
    so the claim holds. Here, the hidden constant in the $\gtrsim$ symbols depend on $C_{\max}$ and $M$, respectively.

    Finally, we complete the proof. By intersecting on $E_1$ and using Theorem 3.1 from \cite{zhang2024leave} (note that their loss function is with a $1/N$ scaling, also the condition $N \gtrsim K^2$ is required for this result), we have
    \begin{align*}
        \EE [\ell (\hat{\ZZ}, \ZZ) \mid \ZZ] &= \EE[ \ell (\hat{\ZZ}, \ZZ) \mathbb{I}(\ZZ \in E_1) \mid \ZZ] + \EE [\ell (\hat{\ZZ}, \ZZ) \mathbb{I}(\ZZ \in E_1^c) \mid \ZZ] \\
        &\le N \Big(e^{-(1-c_2/\psi_1 - c_2/\rho_1^2) \frac{D_\theta^2}{8 \tilde{J}_{\max}}} +e^{-\frac{N}{2}}\Big) + N \mathbb{I}(\ZZ \in E_1^c) \\
        &\le N \Big(e^{- c_3 \frac{D^2}{\tilde{J}_{\max}}} +e^{-\frac{N}{2}}\Big) + N \mathbb{I}(\ZZ \in E_1^c).
    \end{align*}
    Here, the last line used \eqref{eq:spectral clustering snr} and $D_\theta \gtrsim D$. Now, \eqref{eq:clustering accuracy} follows from taking an outer expectation over $\ZZ$, and using the bound \eqref{eq:e_1}. Here, note that the middle term is dominated by \eqref{eq:e_1}. 
\end{proof}

\begin{proof}[Proof of \cref{prop:pi delta consistency}]
    We first show the bound for $\bpi$. By intersecting with the event of exact clustering $E_2 := \{\hat{\ZZ} = \ZZ\}$ (for simplicity, assume that the labels are correctly permuted), we have 
    \begin{align*}
        \PP(\max_{k \in [K]}|\hat{\pi}_k - \pi_k| \ge t) &\le   \PP\Big(\max_{k \in [K]}\Big|\frac{1}{N} \sumin \mathbb{I}(Z_i = k) - \pi_k\Big| \ge t\Big) + p_1 \\
        &\le \sum_{k=1}^K \PP\Big(\Big|\frac{1}{N} \sumin \mathbb{I}(Z_i = k) - \pi_k\Big| \ge t\Big) + p_1  \le 2 K e^{-2 N t^2} + p_1.
    \end{align*}
    For the first inequality, we used \cref{prop:clustering} to get $\PP(E_2^c) \le p_1$ ($p_1$ is as in \eqref{eq:clustering accuracy}). The second inequality follows from a union bound, and the final inequality follows from Hoeffding's inequality (see Theorem 2.2.6 in \cite{vershynin2018high}).

    Next, we show the bound for $\bDelta$. We work under $E_1$ and $E_2$, where $E_1$ is as in the proof of \cref{prop:clustering}. %
    For any $t < M$, by intersecting with $E_1,E_2$, we have
    \begin{align*}
         \PP(\sup_{j,a,k} &|\hat{\Delta}_{j,a,k} - \Delta_{j,a,k}| > t) \le \sum_{j,a,k} \PP( |\hat{\Delta}_{j,a,k} - \Delta_{j,a,k}| > t, E_1, E_2) + \PP(E_1^c) + \PP(E_2^c) \\
        &\le \sum_{j,a,k} \PP\Big(\Big|\frac{1}{\hat{N}_k} \sum_{i: \hat{Z}_i=k} \mathbb{I}(R_{i,j} \le a) - \Phi(\Delta_{j,a,k}) \Big| > \frac{t}{M'}, E_1, E_2\Big) + K e^{-c N/K} + p_1 \\
        &\le \sum_{j,a,k} \PP\Big(\Big|\frac{1}{N_k} \sum_{i: Z_i=k} \mathbb{I}(R_{i,j} \le a) - \Phi(\Delta_{j,a,k}) \Big| > \frac{t}{M'}, E_1 \Big) + p_1 \\
        &\le \sum_{j,a,k} 2 \EE \Big[\exp \Big(- \frac{2 N_k t^2}{M'^2}\Big) \mathbb{I}(E_1) \Big]+ p_1 \\
        &\le 2 \Big(\sum_{j=1}^J C_j \Big) K e^{- \frac{2c t^2}{M'^2} \frac{N}{K}} + p_1.
    \end{align*}
    Here, the first inequality follows from a union bound. The second line uses (i) the fact that the function $\Phi^{-1}$ is $M'$-Lipschitz with $M' = 1/\phi(2M)$ under the domain $[\Phi(-2M), \Phi(2M)]$ (cf. Lemma A.1 in \cite{fan2017high}), (ii) the bounds for $E_1, E_2$ from \cref{prop:clustering}.
    The third line uses the definition of $E_2$ alongside $p_1 \gtrsim K e^{-cN/K}$, and the fourth line follows from the tower property and Hoeffding's inequality (conditioned on $\ZZ$). The final line follows from using the definition of $E_1$ to lower bound each $N_k$.  This gives the desired bound.
\end{proof}

\subsection{Proof of results from \cref{subsec:covariance matrix}}
We prove \cref{prop:covariance estimation} using the following lemma regarding finite-sample error bounds for pairwise polychoric correlations. In the statement of the lemma, we use a simplified notation by setting $j = 1, j'= 2$ and omitting the latent class index $k$.
\begin{lemma}\label{lem:polychoric}
    Assume $\begin{pmatrix}
        X_1^{(\rho)} \\ X_2^{(\rho)}
    \end{pmatrix} \sim N\left(\begin{pmatrix}
        0 \\ 0
    \end{pmatrix}, \begin{pmatrix}
        1 & \rho \\ \rho & 1
    \end{pmatrix} \right)$, where $|\rho|\le 1-\delta_1$. For $a \in [C_1], b \in [C_2]$, let
    $$\pi_{a,b}(\rho; \bDelta_1, \bDelta_2) := \PP\big(X_1^{(\rho)} \in [\Delta_{1,a-1}, \Delta_{1,a}), X_2^{(\rho)} \in [\Delta_{2,b-1}, \Delta_{2,b})\big).$$
    Define the threshold-vector estimate $\hat{\bDelta}$ as
    $$\hat{\Delta}_{j,a} := \Phi^{-1} \left( \frac{1}{{N}} \sum_{i=1}^N \mathbb{I}(R_{i,j} \le a) \right), \quad \forall a \in [C_j-1], \quad j = 1,2.$$
    For population parameters $\rho^*,\bDelta_1^*,\bDelta_2^*$, define the sample and population loss functions as
    \begin{align*}
        \ell_N(\rho) &:= \frac{1}{N} \sumin \log \pi_{R_{i,1}, R_{i,2}}(\rho; \hat{\bDelta}), \\
        \ell_\infty(\rho) &:= \EE \log \pi_{R_{i,1}, R_{i,2}}(\rho; \bDelta^*) = \sum_{a \in [C_1], b \in [C_2]} \pi_{a,b}(\rho^*; \bDelta^*) \log \pi_{a,b}(\rho; \bDelta^*).
    \end{align*}
    Then, for $\hat{\rho}_N := \argmax \ell_N(\rho)$, there exist absolute constants $c_1, c_2, c_3 > 0$ (that only depend on $C_1, C_2,\delta_1,\delta_2,M$) such that
    $$\PP(|\hat{\rho}_N - \rho^*| > t) \le c_1 (e^{-c_2 N t^2} + e^{-c_3 N}), \quad \forall t>0.$$
\end{lemma}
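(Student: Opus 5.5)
We treat $\hat\rho_N$ as an M-estimator on the compact set $[-(1-\delta_1),1-\delta_1]$ and use a standard ``uniform closeness plus curvature'' argument. The constants $c_1,c_2,c_3$ will depend only on $C_1,C_2,\delta_1,\delta_2,M$.

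\textbf{Step 1: control of thresholds and cell frequencies.} By Hoeffding's inequality and the Lipschitz property of $\Phi^{-1}$ on $[\Phi(-2M),\Phi(2M)]$, as in the proof of \cref{prop:pi delta consistency}, we obtain
\[
\PP\bigl(\|\hat{\bDelta}-\bDelta^*\|_{\max}>s\bigr)\le c\,e^{-cNs^2}\quad\text{for } s\le M.
\]
Likewise, the empirical cell frequencies $\hat p_{a,b}:=N^{-1}\sum_i\mathbb{I}(R_{i,1}=a,R_{i,2}=b)$ satisfy $|\hat p_{a,b}-\pi_{a,b}(\rho^*;\bDelta^*)|\le s$ for all $a,b$, with the same type of probability bound. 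The sample loss can then be written as
\[
\ell_N(\rho)=\sum_{a,b}\hat p_{a,b}\log\pi_{a,b}(\rho;\hat{\bDelta}).
\]

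\textbf{Step 2: regularity of the cell probabilities.} Under $|\rho|\le1-\delta_1$, $|\Delta|\le 2M$ and threshold gaps of at least $\delta_2/2$, every cell probability $\pi_{a,b}(\rho;\bDelta)$ is bounded below by a constant $p_{\min}>0$. This is a rectangle of positive width under a nondegenerate bivariate normal density, and for boundary cells the infinite sides only help. The gap condition persists for $\hat{\bDelta}$ once $s\le\delta_2/4$. The functions $\pi_{a,b}$ are smooth in $(\rho,\bDelta)$, with bounded derivatives: $\partial_\rho\Phi_2(u,v,\rho)=\phi_2(u,v,\rho)$ and $\partial_u\Phi_2$ are bounded. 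Consequently,
\[
\sup_\rho|\ell_N(\rho)-\ell_\infty(\rho)|\le C\Bigl(\max_{a,b}|\hat p_{a,b}-\pi^*_{a,b}|+\|\hat{\bDelta}-\bDelta^*\|_{\max}\Bigr),
\]
and the same bound holds for $\sup_\rho|\ell_N'(\rho)-\ell_\infty'(\rho)|$.

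\textbf{Step 3: curvature of the population loss.} This is the main obstacle. The quantity $\ell_\infty(\rho^*)-\ell_\infty(\rho)$ is the KL divergence between the two cell distributions. Its second derivative at $\rho$ near $\rho^*$ is minus the Fisher information $\sum_{a,b}(\partial_\rho\pi_{a,b})^2/\pi_{a,b}$. We must show this is bounded below uniformly, and that KL$\,\ge c(\rho-\rho^*)^2$ globally on the compact interval.

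Identifiability comes from the cell $\{X_1<\Delta_{1,1},X_2<\Delta_{2,1}\}$. Its probability $\Phi_2(\Delta_{1,1},\Delta_{2,1},\rho)$ is strictly increasing in $\rho$, with derivative $\phi_2(\Delta_{1,1},\Delta_{2,1},\rho)\ge c(M,\delta_1)>0$. Hence
\[
|\pi_{1,1}(\rho)-\pi_{1,1}(\rho^*)|\ge c|\rho-\rho^*|.
\]
Pinsker's inequality then gives
\[
\ell_\infty(\rho^*)-\ell_\infty(\rho)\ge 2\,\mathrm{TV}^2\ge c'(\rho-\rho^*)^2
\]
uniformly over the interval, with no local/global split needed.

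\textbf{Step 4: conclusion.} Since $\hat\rho_N$ maximizes $\ell_N$,
\[
c'(\hat\rho_N-\rho^*)^2\le \ell_\infty(\rho^*)-\ell_\infty(\hat\rho_N)\le 2\sup_\rho|\ell_N-\ell_\infty|.
\]
This gives only $|\hat\rho_N-\rho^*|\lesssim\sqrt{s}$, i.e.\ a tail bound $e^{-cNt^4}$, which is too weak. To reach the sharp $e^{-cNt^2}$ rate, we refine the argument via the score.

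First, on the event from Step 1 with a fixed small constant $s_0$, consistency puts $\hat\rho_N$ within a small neighborhood of $\rho^*$. Next, since $\ell_\infty'(\rho^*)=0$ and $|\ell_\infty''|\ge c$ near $\rho^*$, a mean-value argument on the first-order condition yields
\[
|\hat\rho_N-\rho^*|\le C\,\sup_\rho|\ell_N'-\ell_\infty'|\le C's.
\]
For a boundary maximizer, the sign of the derivative gives the same conclusion. Taking $s=t/C'$ for small $t$, and absorbing the constant-level events, produces the bound $c_1(e^{-c_2Nt^2}+e^{-c_3N})$. For $t$ above a constant, the $e^{-c_3N}$ term alone suffices.
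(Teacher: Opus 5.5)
Your proposal is correct and follows essentially the same route as the paper: first consistency via a uniform law plus an identifiability margin (as in \cref{lem:consistency}), then a first-order expansion of the score around $\rho^*$ with curvature bounded away from zero, with all randomness reduced to concentration of $\hat{\bDelta}$ and of bounded i.i.d.\ averages. Your variations are modest but genuine tightenings of the paper's argument: you reduce both uniform laws deterministically to the $C_1C_2$ empirical cell frequencies, you verify the separation condition through Pinsker's inequality and the monotonicity of $\pi_{1,1}$ in $\rho$ (the paper uses this condition but never checks it), and you handle a boundary maximizer, where the score equation $S_N(\hat\rho,\hat{\bDelta})=0$ need not hold.
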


\begin{proof}[Proof of \cref{prop:covariance estimation}]
    We first prove the tail bound \eqref{eq:tail sigma}. Set $E_1, E_2$ as before, so that $\PP(E_1^c) \le K e^{-cN/K}$. On $E_2$, the estimator $\hat{\bSigma}_k$ is exactly the oracle estimator $\hat{\bSigma}_k^*$ computed using the true labels. Conditional on the labels, the oracle estimator for class $k$ is based on $N_k$ observations; hence, by applying \cref{lem:polychoric} with $N=N_k$, we get
    $$\PP(\|\hat{\bSigma}_k^* - \bSigma_k\|_{\max} > t, E_1) \le c_1 J^2 \Big(e^{-c_2 c N t^2/K} + e^{-c_3 c N/K}\Big)$$
    for some constants $c_1, c_2, c_3 > 0$.

    Hence, we can write 
    \begin{align*}
        \PP\Big(\max_{k=1}^K \|\hat{\bSigma}_k - \bSigma_k\|_{\max} > t\Big) &\le \PP\Big(\max_{k=1}^K \|\hat{\bSigma}_k^* - \bSigma_k\|_{\max} > t\Big) + \PP(E_2^c) \\
        &\le \sum_{k=1}^K \PP\Big(\|\hat{\bSigma}_k^* - \bSigma_k\|_{\max} > t, E_1\Big) + \PP(E_1^c) + \PP(E_2^c) \\
        &\le c_1 K J^2 \Big(e^{-c_2 N t^2/K} + e^{-c_3 N/K}\Big) + p_1,
    \end{align*}
    where the second line follows from a union bound, and $p_1$ dominates $\PP(E_1^c)$ in the last line. By taking $t := c \sqrt{\frac{K\log(J K)}{N}}$ with a large enough constant $c$ and using the assumptions $N/K \gtrsim \log(JK), p_1 = O(N^{-1})$, the RHS becomes $O\left((JK)^{-1} + N^{-1}\right)$.

\end{proof}

\begin{proof}[Proof of \cref{prop:block structure}]
    It suffices to show that the aggregated graph $\GG$ (defined by the binary matrix $\GG$) has no edges between distinct true blocks and contains the within-block connected components $\cup_{\ell}E_\ell$ with high probability. To see why, recall that \cref{assump:minimal strength} assumed every true block $V_\ell$ to be connected. The absence of cross-block edges implies that no connected component of $\GG$ can contain vertices from two different true blocks. Therefore, the connected components of $\GG$ are exactly $V_1,\ldots,V_L$.

    To show the claim, work on the high probability set $F := \left\{\max_{k=1}^K \|\hat{\bSigma}_k-\bSigma_k\|_{\max}\le \tau_1\right\}.$
    If $j$ and $j'$ belong to different true blocks, then $\sigma_{(j,j'),k}=0$ for all $k\in[K]$ by the block-diagonal structure of $\bSigma_k$. On $F$, this implies $|\hat{\sigma}_{(j,j'),k}|\le \tau_1$ for all $k$, and hence $g_{j,j'}=0$.     
    On the other hand, if $\max_{k\in[K]}|\sigma_{(j,j'),k}|>\epsilon_N \ge 2 \tau_1$, then there exists some $k\in[K]$ such that 
    $$|\hat{\sigma}_{(j,j'),k}|\ge |\sigma_{(j,j'),k}|-\tau_1 > \tau_1.$$
    Therefore, $\max_{k\in[K]}|\hat{\sigma}_{(j,j'),k}| > \tau_1$ and $g_{j,j'}=1$. This proves the desired recovery of off-block zeros and within-block connected components.
\end{proof}

\subsection{Proof of results from \cref{subsec:sparsity}}
\begin{proof}[Proof of \cref{prop:precision matrix}]
\begin{enumerate}[(a)]
    \item We work under the high probability event of correct block recovery given in \cref{prop:block structure}. The conclusion follows from equation (13) in \citep[Theorem 6,][]{cai2011constrained}. Here, we take $\rho = 0$ and note that no parameter space assumptions are required for eq. (13) therein.
    \item This is immediate by part (a) and the additional assumption that $\epsilon_N' \ge 8M_N' \lambda$.
\end{enumerate}
\end{proof}

\section{Proof of Lemmas}\label{supp:lemmas}
To prove \cref{lem:polychoric}, we use the following two lemmas. 
The first lemma is a restatement of the usual MLE consistency argument (see Theorem 5.7 in \cite{van2000asymptotic}).
    \begin{lemma}\label{lem:consistency}
        Fix any $\epsilon>0$. Let $\ell_N, \ell_\infty$ be functions and suppose there exists $\eta = \eta(\epsilon)$ such that
        $$\PP\Big(\sup_\rho |\ell_N(\rho)-\ell_\infty(\rho)| > \eta\Big) \le p(\eta), \quad \ell_\infty(\rho^*) > \ell_\infty(\rho) + 2 \eta$$
        for all $\rho$ with $ |\rho-\rho^*|\ge \epsilon$.
        Then, we have $|\hat{\rho} - \rho^*| \le \epsilon$ with probability $1-p(\eta)$.
    \end{lemma}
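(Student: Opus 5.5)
This is the standard argmax-consistency argument, so I would prove it deterministically. Take $\epsilon>0$ and any $\hat{\rho}$ with $|\hat{\rho}-\rho^*|>\epsilon$, and work on the event
$$\mathcal{A}:=\Big\{\sup_\rho |\ell_N(\rho)-\ell_\infty(\rho)|\le \eta\Big\},$$
which has probability at least $1-p(\eta)$ by the first hypothesis. The goal is to show that such a $\hat{\rho}$ cannot be a maximizer of $\ell_N$ on $\mathcal{A}$.

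\textbf{Step 1 (optimality of $\hat{\rho}$).} Since $\hat{\rho}$ maximizes $\ell_N$ over the feasible set, and $\rho^*$ lies in that set (here $|\rho^*|\le 1-\delta_1$), we have $\ell_N(\hat{\rho})\ge \ell_N(\rho^*)$.

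\textbf{Step 2 (transfer to the population loss).} On $\mathcal{A}$, move from $\ell_N$ to $\ell_\infty$ on both sides:
$$\ell_\infty(\hat{\rho}) \ge \ell_N(\hat{\rho})-\eta \ge \ell_N(\rho^*)-\eta \ge \ell_\infty(\rho^*)-2\eta.$$

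\textbf{Step 3 (contradiction with separation).} If $|\hat{\rho}-\rho^*|\ge\epsilon$, the separation hypothesis gives $\ell_\infty(\rho^*)>\ell_\infty(\hat{\rho})+2\eta$. This contradicts Step 2. Hence $|\hat{\rho}-\rho^*|<\epsilon\le\epsilon$ on $\mathcal{A}$, and the claim holds with probability at least $1-p(\eta)$.

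\textbf{Measurability and non-uniqueness.} If $\hat{\rho}$ is not unique, the argument applies to every maximizer, so no measurable-selection issue arises. The event $\mathcal{A}$ is measurable because the supremum can be taken over a countable dense set, using continuity of $\ell_N$ and $\ell_\infty$ in $\rho$. There is no substantive obstacle here; the only care needed is that the sup-deviation and separation conditions be stated on the same parameter set over which the argmax is taken, namely $|\rho|\le 1-\delta_1$.
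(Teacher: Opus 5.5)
Your proposal is correct and matches the paper's approach: the paper gives no separate proof and simply invokes the standard argmax-consistency argument (Theorem 5.7 of van der Vaart, 2000), whose deterministic core is your chain $\ell_\infty(\hat{\rho}) \ge \ell_N(\hat{\rho})-\eta \ge \ell_N(\rho^*)-\eta \ge \ell_\infty(\rho^*)-2\eta$ on the uniform-deviation event, which contradicts the separation condition. Your explicit caveats, that $\rho^*$ must lie in the feasible set $|\rho|\le 1-\delta_1$ and that the argument applies to every maximizer, are the right ones to state.
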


The second lemma is a technical result to ensure that the proportion parameter $\pi_{a,b}$ and its derivative are well-defined and bounded. Its proof is postponed to the end of the section.

\begin{lemma}\label{lem:lower bound of pi}
    Assume that $|\rho| \le 1-\delta_1$, $\max_{a,b} (|\Delta_{1,a}|,|\Delta_{2,b}|) \le M$, and $\min_{a,b}(|\Delta_{1,a} - \Delta_{1,a-1}|,|\Delta_{2,b} - \Delta_{2,b-1}|) \ge \delta_2$. Here, $a \in [C_1-1],b\in [C_2-1]$.
    \begin{enumerate}[(a)]
        \item For some constant $c = c(\delta_1,\delta_2,M) > 0$, we have $\pi_{a,b}(\rho;\bDelta) > c$.
        \item For some constant $L = L(\delta_1,\delta_2,M) > 0$, we have $0<\frac{\partial \log \pi_{a,b}(\rho;\bDelta)}{\partial \Delta_{1,a}} < L$. Similar bounds for other derivatives also hold.
    \end{enumerate}
\end{lemma}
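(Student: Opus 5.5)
The plan is direct: write $\pi_{a,b}$ as the integral of the bivariate normal density $\phi_2(x,y;\rho)$ over the rectangle $[\Delta_{1,a-1},\Delta_{1,a})\times[\Delta_{2,b-1},\Delta_{2,b})$. For (a) we bound this integral from below; for (b) we differentiate it explicitly and divide by the bound from (a). Throughout we use $1-\rho^2\ge 1-(1-\delta_1)^2=\delta_1(2-\delta_1)\ge\delta_1$.

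For part (a), the first step is to show that every cell contains a square of side $s:=\min(\delta_2,1)$ lying inside the box $[-M-1,M+1]^2$. Consider the $x$-coordinate; the $y$-coordinate is identical.
\begin{itemize}
\item If $2\le a\le C_1-1$, the interval $[\Delta_{1,a-1},\Delta_{1,a})\subset[-M,M]$ has length at least $\delta_2\ge s$.
\item If $a=1$, the interval is $(-\infty,\Delta_{1,1})$, which contains $[\Delta_{1,1}-1,\Delta_{1,1}]\subset[-M-1,M]$.
\item If $a=C_1$, the interval $[\Delta_{1,C_1-1},\infty)$ symmetrically contains $[\Delta_{1,C_1-1},\Delta_{1,C_1-1}+1]$.
\end{itemize}
In the binary case only the boundary cells occur, so $\delta_2$ plays no role and the side can be taken to be $1$.

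The second step is a lower bound on the density over the box. Since $x^2-2\rho xy+y^2\le 2(x^2+y^2)$, for $|x|,|y|\le M+1$ we get
\[
\phi_2(x,y;\rho)\ \ge\ \frac{1}{2\pi}\exp\!\Big(-\frac{2(M+1)^2}{\delta_1}\Big),
\]
where we also used $(1-\rho^2)^{-1/2}\ge 1$. Multiplying by the area $s^2$ of the square gives $\pi_{a,b}(\rho;\bDelta)\ge c(\delta_1,\delta_2,M)>0$.

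For part (b), fix $a\le C_1-1$, so that $\Delta_{1,a}$ is a finite coordinate. By the fundamental theorem of calculus together with the conditional law $X_2\mid X_1=x\sim N(\rho x,1-\rho^2)$, we have
\[
\frac{\partial \pi_{a,b}}{\partial \Delta_{1,a}}=\phi(\Delta_{1,a})\Big[\Phi\Big(\tfrac{\Delta_{2,b}-\rho\Delta_{1,a}}{\sqrt{1-\rho^2}}\Big)-\Phi\Big(\tfrac{\Delta_{2,b-1}-\rho\Delta_{1,a}}{\sqrt{1-\rho^2}}\Big)\Big].
\]
This expression is strictly positive because $\Phi$ is strictly increasing and $\Delta_{2,b}>\Delta_{2,b-1}$. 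It is at most $\phi(0)=1/\sqrt{2\pi}$. Dividing by $\pi_{a,b}\ge c$ from (a) yields
\[
0<\frac{\partial\log\pi_{a,b}}{\partial\Delta_{1,a}}<L:=\frac{1}{c\sqrt{2\pi}}.
\]

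The "similar bounds" for the other derivatives follow in the same way.
\begin{itemize}
\item The derivative with respect to $\Delta_{1,a-1}$ is the negative of an analogous edge integral, so its absolute value is bounded by the same $L$.
\item The derivatives with respect to $\Delta_{2,\cdot}$ follow by symmetry.
\item For $\rho$, Plackett's identity $\partial\Phi_2/\partial\rho=\phi_2$ expresses $\partial\pi_{a,b}/\partial\rho$ as a signed sum of $\phi_2$ evaluated at at most four finite corners. Each term is at most $(2\pi\sqrt{\delta_1})^{-1}$, so the derivative is bounded by $4(2\pi\sqrt{\delta_1})^{-1}/c$ after dividing by $\pi_{a,b}$.
\end{itemize}

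The only delicate point is treating the unbounded boundary cells uniformly with the interior ones in (a), which the unit-length sub-interval construction handles. The rest is routine bookkeeping of constants.
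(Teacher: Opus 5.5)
Your proposal is correct and follows essentially the same route as the paper. For (a) you use the same sub-square of side $\min(\delta_2,1)$ inside $[-M-1,M+1]^2$ with the density lower bound $\frac{1}{2\pi}e^{-2(M+1)^2/\delta_1}$; for (b) you use the same fundamental-theorem-of-calculus edge integral, bounded by $\phi(\Delta_{1,a})\le 1/\sqrt{2\pi}$ and divided by $c$, giving the same $L=1/(\sqrt{2\pi}\,c)$, and your Plackett-identity bound on the $\rho$-derivative is an extra the paper does not spell out.
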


\begin{proof}[Proof of \cref{lem:polychoric}]
    Throughout the proof, the lowercase $c$ will denote absolute constants whose exact values may vary, and we shall omit the dependence on the values $C_1,C_2,\delta_1,\delta_2,M$. Following the notation convention, let $\bDelta = (\bDelta_1^\top, \bDelta_2^\top)^\top$. Noting that $\pi_{a,b}$ is bounded away from zero (see \cref{lem:lower bound of pi}), both $\partial_\rho^3 \ell$ and $\partial_\rho^2 \partial_{\bDelta} \ell$ are uniformly bounded above by some constants $M_3 = M_3(\delta_1,\delta_2,M)$ and $M_{\bDelta} = M_{\bDelta}(\delta_1,\delta_2,M)$. Also, by direct computation, we get
    \begin{align}\label{eq:kappa def}
        -\ell_\infty''(\rho^*) = \sum_{a,b} \frac{\partial_\rho \pi_{a,b}(\rho^*; \bDelta_1^*, \bDelta_2^*)^2}{\pi_{a,b}(\rho^*; \bDelta_1^*, \bDelta_2^*)} \ge \frac{\phi_2(\Delta_{1,1}^*, \Delta_{2,1}^*; \rho^*)^2}{\pi_{1,1}(\rho^*; \bDelta_1^*, \bDelta_2^*)} := \kappa(\rho^*) > 0.
    \end{align}
    The inequality uses the summand corresponding to the index $a=b=1$, and $\kappa(\rho^*)$ is a positive constant that depends on the true parameters. Note that $\kappa(\rho^*)$ can be further lower bounded by $c(\delta_1,M)$ by a similar bound as in \cref{lem:lower bound of pi}(a).
    
    \paragraph{Step 0. Preliminary error bounds.}
    In the remainder of the proof, we will work on the event
    \begin{align}\label{eq:event_delta}
            E_1 &:= \left\{\|\hat{\bDelta}-\bDelta\|_\infty \le \frac{\kappa(\rho^*)}{6 M_{\bDelta}}\right\}, \quad E_2 := \left\{|\hat{\rho}-\rho^*| \le \frac{\kappa(\rho^*)}{6 M_3}\right\}.
    \end{align}
    We start by claiming that both events occur with exponentially high probability, that is, $\PP(E_1), \PP(E_2) \ge 1 - O(e^{-c N}).$
    
    We first show the claim for $E_1$. This follows by applying the second conclusion in \cref{prop:pi delta consistency} with the choices $K=1,J=2$ and $t=\frac{\kappa(\rho^*)}{6 M_{\bDelta}}$ as in \eqref{eq:event_delta}. Note that this choice of $t$ is a constant, and is absorbed into the generic constant $c$.

    Next, we move on to bounding $E_2$. 
    By applying \cref{lem:consistency} with the choices $\epsilon = \frac{\kappa(\rho^*)}{6 M_3}$ and $\eta = \eta(\epsilon)$, it suffices to show the uniform law with $p(\eta) = e^{-c(\eta) N}$. For this goal, define
    $$\ell_N^*(\rho) := \frac{1}{N} \sumin \log \pi_{R_{i,1},R_{i,2}} (\rho; \bDelta^*)$$
    and write
    \begin{align}
        \PP\Big(\sup_\rho |\ell_N(\rho)-\ell_\infty(\rho)| \ge \eta \Big) &\le \PP\Big(\sup_\rho |\ell_N(\rho)-\ell_N^*(\rho)| \ge \frac{\eta}{2} \Big) + \PP\Big(\sup_\rho |\ell_N^*(\rho)-\ell_\infty(\rho)| \ge \frac{\eta}{2} \Big) \notag \\
        &\le \PP\Big(\sup_\rho \sup_{a,b} |\log \pi_{a,b}(\rho; \hat{\bDelta}) - \log \pi_{a,b}(\rho; \bDelta^*)| \ge \frac{\eta}{2} \Big) + e^{-c N \eta^2} \label{eq:second_line}\\
        &\le \PP\Big(\|\hat{\bDelta} - \bDelta^*\| \ge \frac{\eta}{2L} \Big) + e^{-c N \eta^2} \lesssim e^{-c N \eta^2}. \notag
    \end{align}
    Here, the first line uses a triangle inequality. The second line follows by noting that
    \begin{align*}
        |\ell_N(\rho)-\ell_N^*(\rho)| &= \frac{1}{N} \Big|\sumin \sum_{a,b} \mathbb{I}(R_{i,1}=a, R_{i,2}=b) \Big(\log \pi_{a,b}(\rho;\hat{\bDelta}) - \log \pi_{a,b}(\rho;\bDelta^*)\Big) \Big| \\
        &\le \sup_{a,b} \Big|\log \pi_{a,b}(\rho;\hat{\bDelta}) - \log \pi_{a,b}(\rho;\bDelta^*)\Big|
    \end{align*}
    (for the first term in \eqref{eq:second_line}), and the uniform version of Hoeffding's inequality (for the second term in \eqref{eq:second_line}). The third line follows from noting that $\log \pi_{a,b}(\rho; \cdot)$ is $L$-Lipschitz where $L$ is uniform in $a,b,\rho$ (see part (b) of \cref{lem:lower bound of pi}), followed by the concentration result for $\hat{\bDelta}$ in \cref{prop:pi delta consistency}. %

    \paragraph{Step 1. Score equation.}
    Define the score function (the derivative of $\ell_N$) as $$S_N(\rho, \hat{\bDelta}) := \frac{1}{N} \sumin \partial_\rho \log \pi_{R_{i,1}, R_{i,2}}(\rho; \hat{\bDelta}_1, \hat{\bDelta}_2).$$
    By definition, we have $S_N(\hat{\rho}, \hat{\bDelta}) = 0$. By a Taylor expansion, we can write
    $$0 = S_N(\rho^*, \bDelta^*) + \partial_\rho S_N(\tilde{\rho}, \tilde{\bDelta}) (\hat{\rho} - \rho^*) + \partial_{\bDelta} S_N(\tilde{\rho}, \tilde{\bDelta})^\top (\hat{\bDelta} - \bDelta^*)$$
    for some $(\tilde{\rho}, \tilde{\bDelta}) \in (\hat{\rho}, \rho^*) \times (\hat{\bDelta}, \bDelta^*).$
    Rearranging the terms, we get
    \begin{align}\label{eq:score equation}
        \hat{\rho}-\rho^* = -[\partial_\rho S_N(\tilde{\rho}, \tilde{\bDelta})]^{-1} \big( S_N(\rho^*, \bDelta^*) + \partial_{\bDelta} S_N(\tilde{\rho}, \tilde{\bDelta})^\top (\hat{\bDelta} - \bDelta^*) \big).
    \end{align}
    In the remainder of the proof, we show that the denominator is bounded away from zero with probability $1-O(e^{-cN})$, and that the numerator exhibits exponential concentration.

    \paragraph{Step 2. Bounding the denominator of \eqref{eq:score equation}.}
    We claim that $\partial_\rho S_N(\tilde{\rho}, \tilde{\bDelta}) \le -\kappa(\rho^*)/3$ with probability greater than $1 - O(e^{-cN})$. By Step 0, we can work on the event $E_1, E_2$ (see \eqref{eq:event_delta}). For notational convenience, denote $J_N := \partial_\rho S_N$. As $J_N$ is a continuous function on a compact set, we can write
    $$|J_N(\tilde{\rho},\tilde{\bDelta}) - J_N(\rho^*,\bDelta^*)| \le M_3|\tilde{\rho}-\rho^*| + M_{\bDelta} \|\tilde{\bDelta} - \bDelta^*\| \le \frac{\kappa(\rho^*)}{3}.$$
    Here, $M_3, M_{\bDelta}$ denote the upper bounds of $\partial_{\rho}^3 \ell, \partial_\rho^2 \partial_{\bDelta} \ell$ as introduced in the beginning of the proof, and the final inequality follows as we are working under $E_1, E_2$.

    Also, noting that $\EE J_N(\rho^*,\bDelta^*) = \ell_\infty''(\rho^*) \le -\kappa(\rho^*)$ (see \eqref{eq:kappa def}), Hoeffding's inequality gives that
    $$\PP\Big(|J_N(\rho^*,\bDelta^*) - \ell_\infty''(\rho^*)|> t \Big) \lesssim e^{-c N t^2}.$$
    By taking $t = \kappa(\rho^*)/3$ and combining everything, we get
    $$J_N(\tilde{\rho},\tilde{\bDelta}) \le J_N(\rho^*,\bDelta^*) + \frac{\kappa}{3} \le \ell_\infty''(\rho^*) + \frac{2\kappa}{3} \le -\frac{\kappa}{3}$$
    with probability greater than $1-O(e^{-cN})$. The constant $c$ here can be made uniform in $\rho^*$ by lower bounding $\kappa = \kappa(\rho^*)$ with a constant that only depends on $\delta_1, M$.

    \paragraph{Step 3. Concentration for the numerator of \eqref{eq:score equation}.}
    We separately control the two terms in the numerator. For the first term, we use the following identity:
    $$\EE \partial_\rho \log \pi_{R_{i,1}, R_{i,2}}(\rho^*; \bDelta^*) = 0.$$
    This follows from using the definition of KL divergence to note that $\ell_\infty$ is maximized at $\rho^\star$, and re-writing $\partial_\rho \ell_\infty (\rho^\star) = 0$.
    Writing $S_N$ as the i.i.d. sum of bounded random variables:
    $$S_N(\rho^*,\bDelta^*) = \frac{1}{N} \sumin \partial_\rho \log \pi_{R_{i,1}, R_{i,2}}(\rho^*; \bDelta^*),$$
    Hoeffding's inequality gives
    $$\PP(|S_N(\rho^*,\bDelta^*)| > t) \lesssim e^{-cN t^2}.$$

    For the second term in the numerator of \eqref{eq:score equation}, we use the concentration of $\hat{\bDelta} - \bDelta$ in \cref{prop:pi delta consistency} and the fact that the derivatives $\partial_{\bDelta} S_N$ are bounded in the compact parameter space to get
    $$\PP\Big(|\partial_{\bDelta} S_N(\tilde{\rho}, \tilde{\bDelta})^\top (\hat{\bDelta} - \bDelta^*)| > t \Big) \le \PP \Big( \|\hat{\bDelta} - \bDelta^*\| > ct \Big) \lesssim e^{-cNt^2}.$$
    Note that we omit the dependence of $C_1+C_2$ in both inequalities.

    The final conclusion is immediate by combining steps 2 and 3.
\end{proof}

\begin{proof}[Proof of \cref{lem:lower bound of pi}]
    (a) Writing out the p.d.f. of $(X_1,X_2)$ as
    $$\phi_2(x_1,x_2;\rho) = \frac{1}{2\pi\sqrt{(1-\rho^2)}} e^{-\frac{x_1^2 - 2 \rho x_1 x_2 + x_2^2}{2(1-\rho^2)}} \ge \frac{1}{2\pi} e^{-\frac{x_1^2 - 2 \rho x_1 x_2 + x_2^2}{2(1-\rho^2)}},$$
    it suffices to lower bound the exponential term.

    For each category interval, choose a subinterval of length at least $\delta_*:=\min\{\delta_2,1\}>0$ that lies inside the category interval and inside $[-M-1,M+1]$. For interior categories, this follows from the separation condition; for the first and last categories, use the intervals adjacent to the finite boundary threshold. For any $(x_1,x_2)$ in the resulting rectangle, we have
    \begin{align*}
        x_1^2 - 2\rho x_1 x_2 + x_2^2&\le 4(M+1)^2, \quad 1-\rho^2 \ge 1 - (1-\delta_1)^2 \ge \delta_1,
    \end{align*}
    so $-\frac{x_1^2 - 2 \rho x_1 x_2 + x_2^2}{2(1-\rho^2)} \ge -\frac{2(M+1)^2}{\delta_1}$.
    Then, the lower bound of $\pi_{a,b}$ follows by integrating over this subrectangle:
    \begin{align*}
        \pi_{a,b}(\rho;\bDelta_1,\bDelta_2) = \int_{\Delta_{1,a-1}}^{\Delta_{1,a}} \int_{\Delta_{2,b-1}}^{\Delta_{2,b}} \phi_2(x_1,x_2;\rho) dx_2 dx_1 \ge \frac{\delta_*^2}{2\pi} e^{-\frac{2(M+1)^2}{\delta_1}} := c > 0.
    \end{align*}

    \noindent (b) The lower bound $0$ is immediate as $\pi_{a,b}$ (as a function of $\Delta_{1,a}$) is increasing in $\Delta_{1,a}$. By the chain rule and using the conclusion of part (a), we have
    \begin{align*}
        \frac{\partial \log \pi_{a,b}(\rho; \bDelta)}{\partial \Delta_{1,a}} &= \frac{1}{\pi_{a,b}(\rho; \bDelta)} \frac{\partial \pi_{a,b}(\rho; \bDelta)}{\partial \Delta_{1,a}} \le \frac{1}{c} \frac{\partial \pi_{a,b}(\rho; \bDelta)}{\partial \Delta_{1,a}} \\
        &= \frac{1}{c} \int_{\Delta_{2,b-1}}^{\Delta_{2,b}} \phi_2(\Delta_{1,a},x_2;\rho)dx_2 \le \frac{\phi(\Delta_{1,a})}{c} \le \frac{1}{\sqrt{2\pi} c}.
    \end{align*}
    Here, the equality in the second line used the fundamental theorem of calculus, and the inequalities follow from relaxing the integration range. This completes the proof as we can take $L = 1/(\sqrt{2\pi} c)$. For other derivatives (such as with respect to $\Delta_{1,a-1}$ when $a>1$), the partial derivative is still absolutely bounded by $L$.
\end{proof}

\section{Simulation Details}\label{supp:simulation-details}
\subsection{Data generation details}\label{supp:sim data generation}
To generate the data for each simulation run, we first construct the block-diagonal precision matrices $\bOmega_{k}$. For each block, let $\mathbf B$ be the adjacency matrix of the Erd\H{o}s-R\'enyi random graph with probability $s$. Set each block precision matrix as $\bOmega_{k}^{(\ell)} = 0.5 \mathbf B + \delta \II$, where $\delta$ is dynamically chosen such that the condition number of the matrix is exactly equal to the block dimension $M = J/L$. %
Finally, we invert $\bOmega_{k}^{(\ell)}$ to get the raw covariance matrix and rescale it so that $\bSigma_k$ is a correlation matrix with unit diagonals.

For each observation $i$, assume the class proportions are equal ($\pi_k = 1/K$), and generate a latent class $Z_i$ uniformly from $[K]$. Given a latent class assignment $Z_i = k$, we draw a continuous latent vector $\XX_i \sim \mathcal{N}(\mathbf 0, \bSigma_{k})$. We then discretize $\XX_i$ into the observed ordinal response vector $\RR_i$ using class-specific threshold parameters. For the binary items, we draw a single threshold $\Delta_{j,1,k} \sim U[-1, 1]$. For the four-category items, we draw three sorted thresholds $(\Delta_{j,1,k}, \Delta_{j,2,k}, \Delta_{j,3,k})$ uniformly from $[-2, 2]$.

\subsection{Implementation choices for Step 3}\label{supp:sim tuning parameter cross validation}

\paragraph{Precision matrix estimation}
We implement the graphical lasso to estimate the precision matrices. When the covariance matrix estimate $\hat{\bSigma}_k^{(\ell)}$ is not positive definite, we ensure this by appropriately increasing the diagonal elements.
Later, in \cref{supp:comparison clime}, we conduct additional simulations comparing the graphical lasso with CLIME. They have similar accuracy, whereas the graphical lasso is much faster.

\paragraph{Tuning parameter selection}
For each simulation setting, we run 10 preliminary simulations to select the tuning parameter $\lambda$ via cross-validation, as mentioned in \cref{subsec:additional choices}. We choose the most frequently selected value, which is often identical for all 10 trials. This selected $\lambda$ is fixed throughout the main trials (100 replications). Below, we elaborate on the tuning parameter selection.

We randomly partition the dataset into $F=5$ folds of roughly equal sizes. For a given candidate tuning parameter $\lambda$ from the grid $\Lambda = \{0.1, 0.5, 1, 2\} \times \sqrt{\frac{\log(JK)}{N}}$, we iteratively hold out each fold $f \in \{1, \dots, F\}$ as the test set and use the remaining data as the training set. Let $\hat{\bSigma}_{k}^{(f)}$ be the estimated covariance matrix (estimated via Step 2) using only the observations from the $f$-th fold, and let $\hat{\bOmega}_{k, \lambda}^{(-f)}$ denote the precision matrix estimate using the observations excluding the $f$-th fold (via Step 3).

We evaluate the performance of each candidate $\lambda$ using the predictive negative Gaussian log-likelihood. The optimal tuning parameter is chosen by minimizing this validation loss aggregated across all $K$ classes and $F$ test folds:
$$ \text{CV}(\lambda) = \sum_{f=1}^{F} \sum_{k=1}^K \hat{N}_k^{(f)} \left[ \text{tr}\left(\hat{\bSigma}_k^{(f)} \hat{\bOmega}_{k,\lambda}^{(-f)}\right) - \log \det\left(\hat{\bOmega}_{k,\lambda}^{(-f)}\right) \right], $$
where $\hat{N}_k^{(f)}$ denotes the number of observations assigned to the $k$-th cluster when holding out the $f$-th fold. After selecting the optimal tuning parameter $\hat{\lambda} = \arg\min_{\lambda \in \Lambda} \text{CV}(\lambda)$, we finally run Step 3 to estimate $\hat{\bOmega}$.

\subsection{Guideline for selecting $K$}\label{supp:select-K}
In the simulations, we assume that $K$ is correctly specified. In general, the number of latent classes, $K$, is often unknown. In spectral clustering, one standard approach is the ``elbow method,'' which uses a scree plot \citep{thorndike1953belongs,ketchen1996application}. This procedure involves evaluating the k-means objective function \eqref{eq:k-means} across a range of $K$ values and selecting the ``elbow'' value where the decrease in clustering error flattens. Similarly, one can observe the largest gap/ratio among the singular values of the flattened data matrix $\tilde{\RR}$. 
Alternative methods include using information criteria (such as BIC or Extended BIC) after fitting the entire model for each $K$, or evaluating pointwise cohesion and separation via the Silhouette score \citep{rousseeuw1987silhouettes}.

\section{Additional Comparisons and Visualizations}\label{supp:comparison-visualization}

\subsection{Improving clustering accuracy via likelihood refinement}\label{sec:em comparison} %
We discuss and evaluate a refined clustering method based on the learned block structures. Note that our original Step 1 is a simple spectral clustering procedure that does not take into account the dependence structure within the data. Hence, we propose a one-step likelihood refinement that uses the estimated threshold vector $\hat{\bDelta}$, the block partition $\hat{V}$, and the blockwise covariance matrices $\hat{\bSigma}_k^{(\ell)}$. The refinement updates the class assignment by maximizing the blockwise ordinal likelihood:
$$
\hat{Z}_i^{\,\mathrm{ref}}=\argmax_{1\le k\le K} \prod_{\ell=1}^{\hat{L}}\PP\left(\XX^{(\ell)} \in \prod_{j \in \hat{V}_\ell} \Bigl[\hat{\Delta}_{j,R_{i,j}-1,k}, \hat{\Delta}_{j,R_{i,j},k}\Bigr)
\,\middle|\,\XX^{(\ell)} \sim N\left(\mathbf 0,\hat{\bSigma}_k^{(\ell)}\right) \right),$$
where the product term denotes the rectangle corresponding to the observed ordinal response pattern $R_{i,j}$ in block $\ell$.

We fix $N=1000$ and $J=100$, and vary $K,L$, and $s$. Here $s$ denotes the connectivity probability used when generating the precision matrix. To create a more challenging clustering setting, we reduce the separation among the threshold parameters compared to the values in \cref{supp:sim data generation}. Specifically, after generating the original threshold vectors $\bDelta_k$ as in \cref{supp:sim data generation}, we rescale them toward their across-class mean $\bar{\bDelta}=K^{-1}\sum_{k=1}^K \bDelta_k$ by setting
\[
\bDelta_k' = \frac{1}{3} \bar{\bDelta} + \frac{2}{3} \bDelta_k.
\]
We compare this one-step likelihood refinement against our Step 1 (the spectral clustering method without the refinement) and the usual locally independent LCM fitted by EM. For fair comparison, both the EM algorithm and the k-means in Step 1 use 10 random starts and a maximum of 100 iterations.

\begin{table}[h!]
\centering
\begin{tabular}{cccc|ccc|ccc}
\toprule
\multirow{2}{*}{$J$}
& \multirow{2}{*}{$K$}
& \multirow{2}{*}{$L$}
& \multirow{2}{*}{$s$}
& \multicolumn{3}{c|}{Err($\hat{\ZZ}$)}
& \multicolumn{3}{c}{RMSE($\hat{\bpi}$)} \\
& & &
& Spec. & Refined & EM
& Spec. & Refined & EM \\
\midrule
\multirow{8}{*}{$100$}
& \multirow{4}{*}{$5$}
& \multirow{2}{*}{$5$}
& 0.6 & 4.2 & 0.9 & 1.5 & 0.013 & 0.013 & 0.013 \\
& & & 0.2 & 4.6 & 0.5 & 1.3 & 0.013 & 0.013 & 0.013\\
& & \multirow{2}{*}{$10$}
&     0.6 & 4.0 & 0.4 & 1.4 & 0.011 & 0.011 & 0.011 \\
& & & 0.2   & 4.2 & 0.3 & 1.6 & 0.013 & 0.013 & 0.013 \\
& \multirow{4}{*}{$10$}
& \multirow{2}{*}{$5$}
& 0.6 & 19.2 & 9.3 & 12.9 & 0.009 & 0.009 & 0.013 \\
& & & 0.2   & 23.0 & 6.9 & 11.9 & 0.009 & 0.009 & 0.013 \\
& & \multirow{2}{*}{$10$}
& 0.6 & 18.7 & 4.6 & 9.0 & 0.009 & 0.009 & 0.011 \\
& & & 0.2   & 18.3 & 2.9 & 11.6 & 0.009 & 0.009 & 0.013 \\
\bottomrule
\end{tabular}
\caption{Comparison of spectral clustering (Step 1), the one-step likelihood refinement, and the usual EM algorithm under local independence. %
}
\label{tab:lcm clustering comparison}
\end{table}

The results are reported in \cref{tab:lcm clustering comparison}. The one-step likelihood refinement substantially improves the initial spectral clustering labels across all considered settings. For instance, when $K=10$, where the spectral initialization has noticeably larger clustering error, the refinement reduces the error by more than half. This supports the use of the estimated blockwise covariance structure after Step 1. This advantage is less apparent in terms of estimating the proportion parameters, as the RMSE values do not decrease.

Another interesting observation is that the naive EM algorithm performs reasonably well for clustering, despite being misspecified. We believe this is because the coordinate-wise likelihood $\PP(R_{i,j} \mid Z_i, \bDelta_j)$ remains somewhat robust under moderate local dependence, and the misspecified marginal likelihood may be informative for estimating the threshold vector $\bDelta$. In fact, one may also view the spectral clustering objective as a misspecified independent Gaussian likelihood. A related phenomenon of a misspecified MLE having satisfactory performance under Gaussian mixtures has been investigated in \cite{lee2025inference}. We note that the one-step refinement is better than EM in the reported settings, suggesting that incorporating the estimated dependence structure is indeed more beneficial.

\subsection{Comparison of graphical model selection methods}\label{supp:comparison clime}
We conduct simulations by considering two procedures for the precision matrix in Step 3: graphical lasso and CLIME, while fixing $J = 100$. 
The results in \cref{tab:clime vs glasso} show that CLIME and graphical lasso lead to broadly comparable accuracy for estimating the sparse precision matrices. Across the eight settings, the support recovery error $\mathrm{Err}_0(\hat{\bOmega})$ and Frobenius error $\mathrm{Err}_F(\hat{\bOmega})$ are of similar magnitude. In several settings CLIME gives slightly smaller errors, especially when $L=10$. The ROC curves in \cref{fig:roc clime} also resemble \cref{fig:roc}, and we draw similar conclusions when Steps 1 or 2 are omitted.

The main difference is computational: CLIME is noticeably slower than graphical lasso. The computational gap is more evident when $L$ and $K$ increase. For example, when $(K,L)=(10,10)$, CLIME takes about $30$ seconds on average, whereas graphical lasso takes only about $4$ seconds. The computational burden of CLIME further increases in larger dimensions, say $J = 500$.

\begin{table}[h!]
\centering
\begin{tabular}{cccc|ccc|ccc}
\toprule
    \multirow{2}{*}{$J$} 
    & \multirow{2}{*}{$K$} 
    & \multirow{2}{*}{$L$} 
    & \multirow{2}{*}{$s$} 
    & \multicolumn{3}{c|}{glasso} 
    & \multicolumn{3}{c}{CLIME} \\
    & & & 
    & $\text{Err}_0(\hat{\bOmega})$ & $\text{Err}_F(\hat{\bOmega})$ & time (s)
    & $\text{Err}_0(\hat{\bOmega})$ & $\text{Err}_F(\hat{\bOmega})$ & time (s) \\
    \midrule
    \multirow{8}{*}{$100$}   & \multirow{4}{*}{$5$}  & \multirow{2}{*}{$5$}  & 0.6 & 0.084 & 0.104 & 2.4 & 0.088 & 0.111 & 9.8 \\
                            &                       &                       & 0.2 & 0.085 & 0.100 & 2.3 & 0.086 & 0.104 & 9.5 \\
                            &                       & \multirow{2}{*}{$10$} & 0.6 & 0.035 & 0.088 & 2.5 & 0.034 & 0.084 & 16.5 \\
                            &                       &                       & 0.2 & 0.038 & 0.086 & 2.3 & 0.037 & 0.081 & 16.0 \\
                            & \multirow{4}{*}{$10$} & \multirow{2}{*}{$5$}  & 0.6 & 0.090 & 0.112 & 4.1 & 0.091 & 0.108 & 18.6 \\
                            &                       &                       & 0.2 & 0.068 & 0.111 & 3.9 & 0.062 & 0.103 & 18.5 \\
                            &                       & \multirow{2}{*}{$10$} & 0.6 & 0.045 & 0.088 & 4.1 & 0.041 & 0.093 & 30.7 \\
                            &                       &                       & 0.2 & 0.038 & 0.096 & 4.1 & 0.034 & 0.090 & 30.2 \\
\bottomrule
\end{tabular}
\caption{Comparison of graphical lasso and CLIME for precision matrix estimation and computation time. Here, Steps 1 and 2 are identically implemented.}
\label{tab:clime vs glasso}
\end{table}

\begin{figure}[h!]
    \centering
    \includegraphics[width=\linewidth]{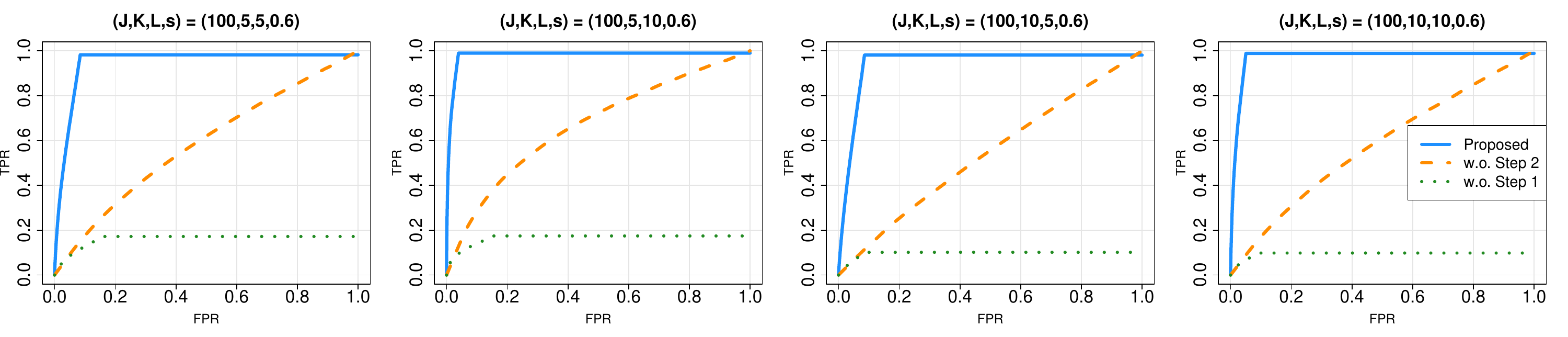}
    \includegraphics[width=\linewidth]{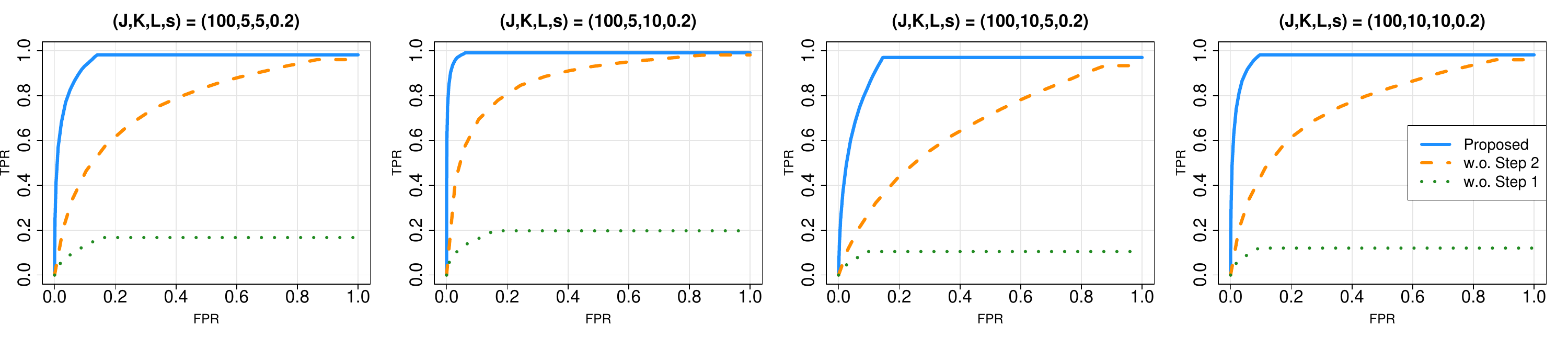}
    \caption{ROC curves for selecting $\bOmega$ under $J = 100$ when the CLIME estimator is used. The solid line corresponds to the proposed method, whereas the dashed/dotted line corresponds to variants where Steps 2/1 are omitted, respectively.}
    \label{fig:roc clime}
\end{figure}

\subsection{Illustration of incorrectly estimated blocks}\label{sec:robust}
We visualize a typical instance of incorrect block recovery in \cref{fig:block recovery wrong}. The figure is analogous to \cref{fig:block recovery}, but instead considers a more challenging simulation setting with lower signal and smaller block length. We see that $\hat{L} = 9$ is underestimated because two of the true blocks are merged. In general, we observed $\hat{L} \le L$ as discussed in the main text (see \cref{sec:sims}). The underestimation of blocks potentially leads to a higher FDR, but is better in terms of TPR as opposed to overestimating (i.e., further partitioning a correct block).

\begin{figure}[h!]
    \centering
    \includegraphics[width=0.6\linewidth]{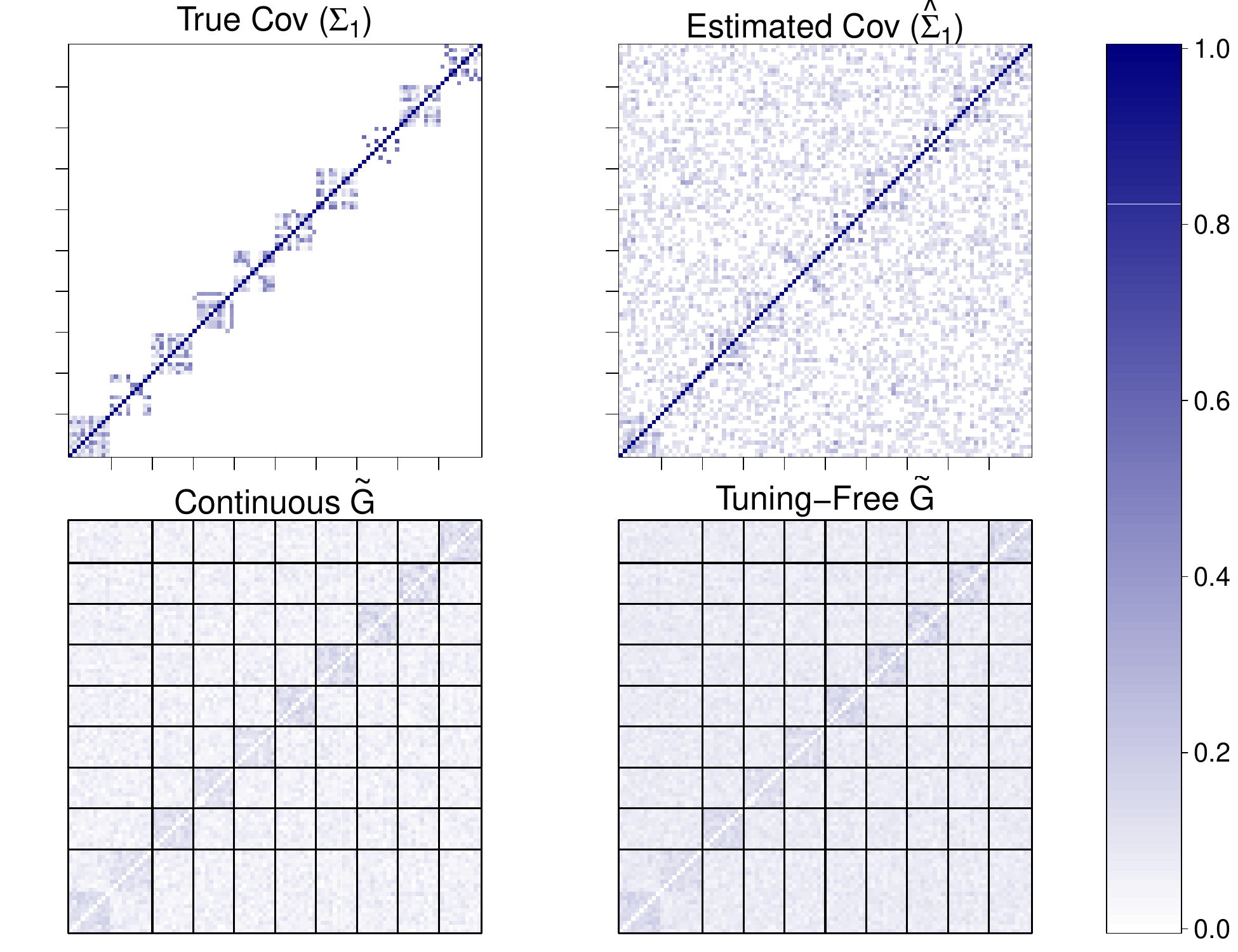}
    \caption{Analogue of \cref{fig:block recovery} under the challenging setting with parameters as in the 8th row of \cref{tab:sim result}. While the true number of blocks is $L = 10$, the boundary between the first and second block is unclear. Thus, the Leiden algorithm selects $\hat{L} = 9$ blocks where the true blocks indexed by $\ell = 1,2$ are merged.}
    \label{fig:block recovery wrong}
\end{figure}

\subsection{Tuning parameter sensitivity of our method and JGL}\label{supp:sensitivity}
\begin{figure}[h!]
    \centering
    \includegraphics[width=0.32\linewidth]{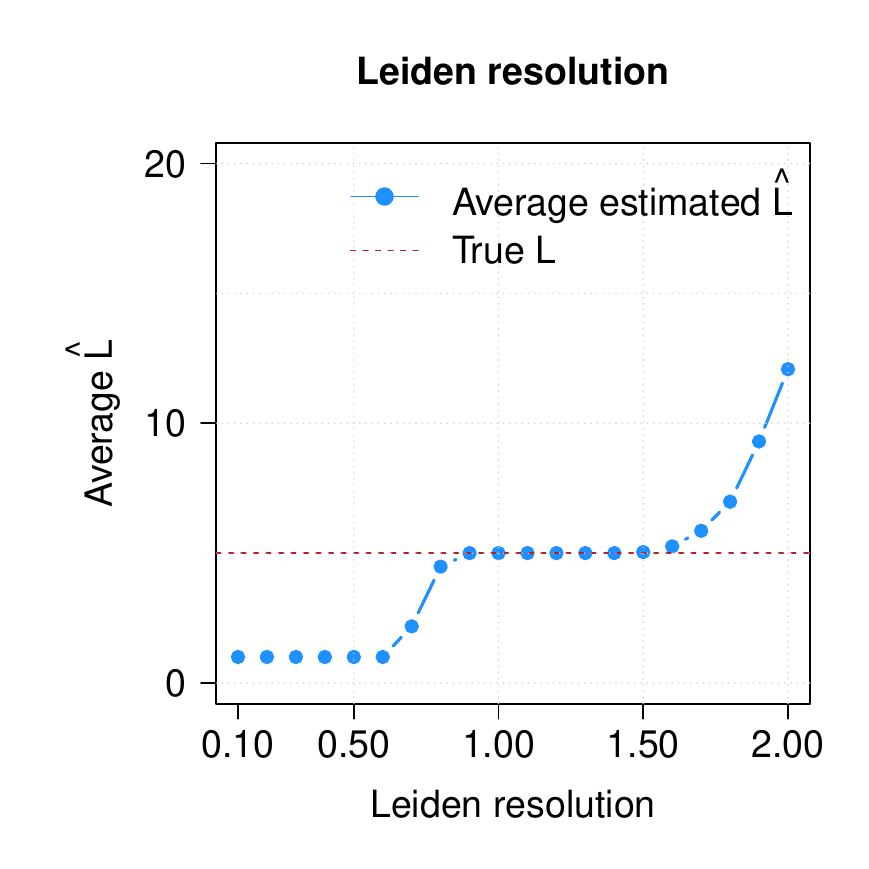}
    \includegraphics[width=0.32\linewidth]{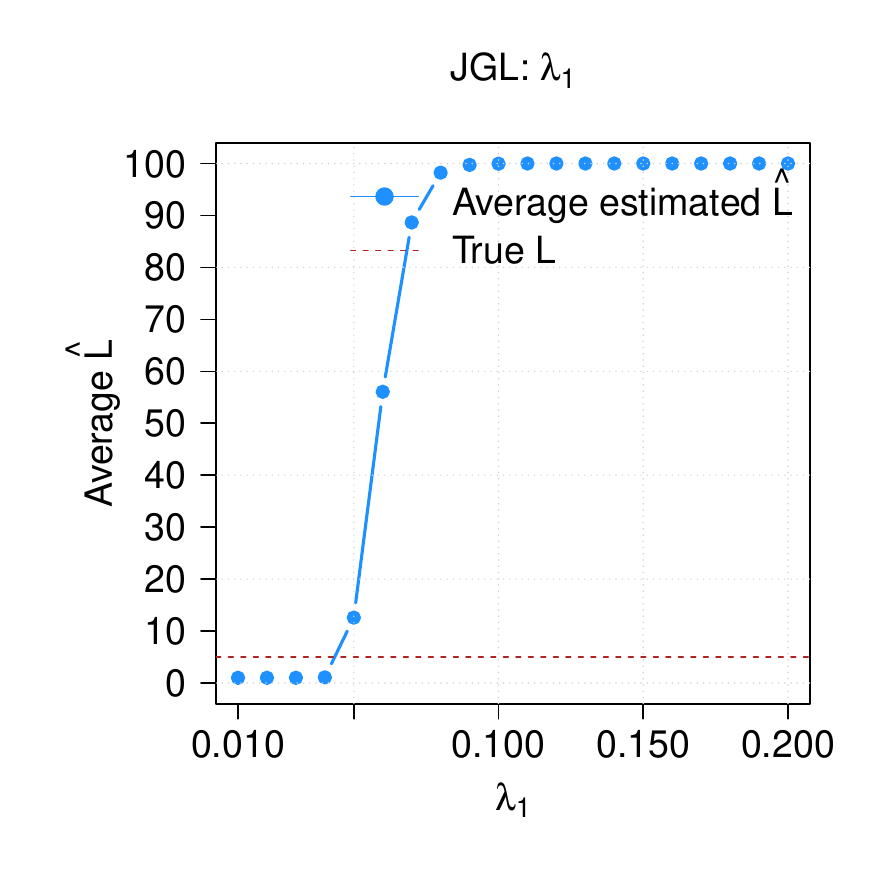}
    \includegraphics[width=0.32\linewidth]{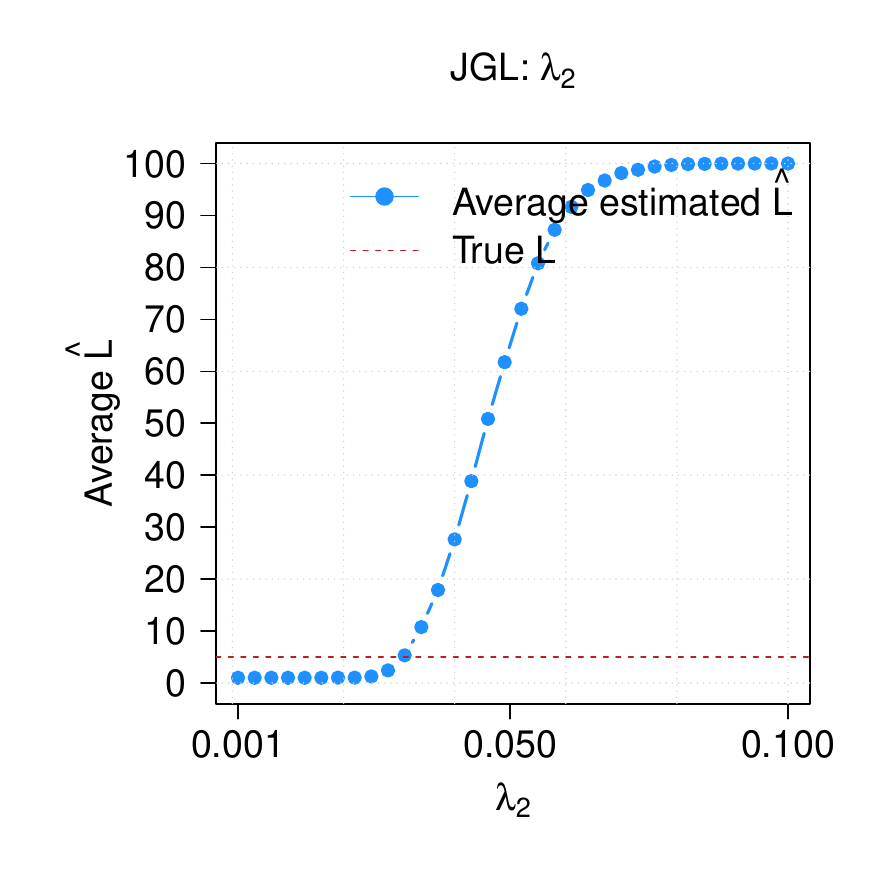}
    \caption{Sensitivity of the estimated number of blocks to tuning parameters. The left panel varies the Leiden resolution parameter, and the center/right panels vary the JGL parameters $\lambda_1$/$\lambda_2$ (with the other parameter fixed at $\lambda_2 = 0.001$ and $\lambda_1 = 0.02$, respectively). The horizontal dashed line marks the true number of blocks, $L=5$. JGL is highly sensitive to the tuning choice, whereas the proposed Leiden-based block recovery is more robust to its resolution parameter.}
    \label{fig:tuning}
\end{figure}

\noindent In \cref{fig:tuning}, we plot the average $\hat{L}$ with respect to the tuning parameter, for both our method and JGL. We set $K=L=5, ~s=0.6$. Our method is robust to the choice of resolution parameter, for a wide range of values from 0.8 to 1.5, which includes the default choice of 1. In other words, the proposed method performs well at a wide range of Leiden resolution parameters. In contrast, JGL is highly sensitive to its tuning parameters, and leads to uninformative block structures with $\hat{L} = 1$ or $J=100$ for most values. We believe this instability is because the block structure learned by JGL is highly sensitive to false discovery. To elaborate, off-block-diagonal entries $\sigma_{(j,j'),k}$ that are larger than the threshold in Theorem 2 in \cite{danaher2014joint} (which depends on tuning parameters) result in merging the corresponding blocks.

This sensitivity made fully data-driven JGL tuning difficult in our experiments. Cross-validation and information criteria selected extreme block structures: CV, AIC, and BIC selected $\hat{L}=1$, whereas EBIC selected $\hat{L}=J$. For example, CV selected $\lambda_1 = 0.02, \lambda_2 = 0.0017$ which resulted in $\hat{L} = 1$ blocks, which is why we additionally reported results under a more fine-tuned penalty in \cref{tab:sim comparison jgl}. Fixing $\lambda_2 = 0.001$ (larger values resulted in worse ARI), we chose $\lambda_1 = 0.047$ so that the average $\hat{L}$ is $5.1$.

\section{Real Data Analysis Details}\label{supp:real-data-details}
\subsection{Implementation details}
We provide further details on the implementation of each step of our method. 

\begin{figure}[h!]
    \centering
    \includegraphics[width=0.7\linewidth]{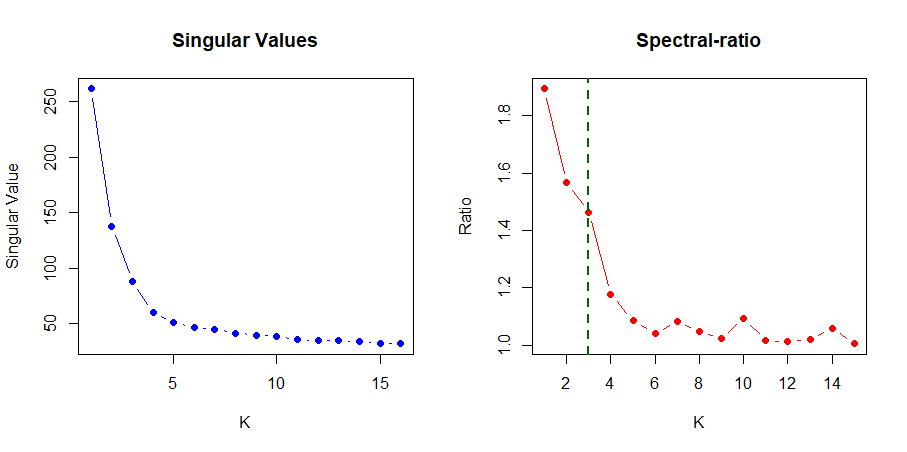}
    \caption{Visualization of the singular values and spectral ratios in the ANES data.}
    \label{fig:anes-spectral-gap}
\end{figure}

\paragraph{Preprocessing}
 We preprocess the ANES data following \cite{chen2024generalized}. We additionally remove one item that directly asks about the respondent's party. The preprocessing details for the HapMap data are given in the main text.

\paragraph{Step 1}
For the ANES data, we selected $K = 3$ mainly to match the interpretation of the three latent classes with political ideology. This is also supported by the spectral ratio, as $K=3$ corresponds to a rough elbow point in \cref{fig:anes-spectral-gap}. For the HapMap data, we selected $K = 4$ to match the four subpopulations considered (Utah residents with European ancestry, Mexican ancestry in Los Angeles, Gujarati Indians in Houston, and Yoruba in Ibadan, Nigeria). Due to the numeric encoding of the responses (0, 1, 2), we implemented the clustering in \cref{algo:spectral} directly on the raw data matrix $\RR$ without flattening. Note that we slightly abuse the notation, since the typical sample space for $R_j$ with 3 categories is $\{1,2,3\}$.

\paragraph{Step 2}
In Step 2, we used the Leiden algorithm on the tuning-free version of the aggregated weight matrix $\tilde{\GG}$. The high-resolution parameter value 1.5 is chosen based on the sensitivity analysis in \cref{fig:tuning}.

\paragraph{Step 3 (for the ANES data)}
In Step 3, we implemented graphical lasso. The tuning parameter $\lambda = 0.3 \sqrt{\log(JK)/N} = 0.019$ was selected by cross-validation, from the grid $\Lambda = \{0.1, 0.3, 0.5, 1, 2\} \times \sqrt{\log(JK)/N}$. We also tried selecting a separate $\lambda_k$ per latent class, and the above value was selected for all classes.

\subsection{Additional visualization for the ANES data}
We visualize the histogram of the number of categories $C_j$ for all items in the ANES data in \cref{fig:anes_histogram}. We see that $C_j = 5$ is the most common, which typically takes ordered response categories ``Not at all''-``A little''-``Somewhat''-``Very''-``Extremely'', or ``Extremely important''-``Very important''-``Moderately important''-``Slightly important''-``Not at all important''.
Items with $C_j = 7$ typically arise when the voters are asked to rate on a scale of 1-7 without specific categories for the values 2-6. We also provide the details of the baseline block structures in \cref{tab:anes true block}, which were use to compute the ARI values for the estimated blocks. The block names provided here were also used as a rough guideline for interpreting the estimated blocks in \cref{tab:cluster_hierarchy}.

\begin{figure}[h!]
    \centering
    \includegraphics[width=0.4\linewidth]{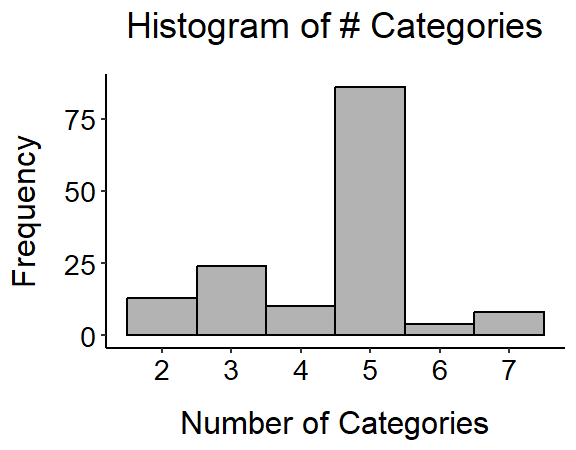}
    \caption{Histogram of the number of response categories $C_j$ in the ANES dataset.}
    \label{fig:anes_histogram}
\end{figure}

\begin{table}[h!]
    \centering
    \begin{tabular}{ccc|ccc}
    \toprule
        $\ell$ & $J_\ell = |V_\ell|$ & Tag & $\ell$ & $J_\ell = |V_\ell|$ & Tag \\
    \midrule
        1 & 1 & Follow politics & 17 & 3 & Guns and crime \\
        2 & 1 & Voter registration & 18 & 6 & Immigrant emotions \\
        3 & 1 & Turnout & 19 & 9 & Democratic attitudes \\
        4 & 7 & Participation & 20 & 12 & Electoral integrity \\
        5 & 6 & Global emotion & 21 & 1 & Political efficacy \\
        6 & 3 & Presidential approval & 22 & 1 & Racism \\
        7 & 3 & Economic performance & 23 & 1 & Feminist attitudes \\
        8 & 8 & Inflation & 24 & 3 & Political tolerance \\
        9 & 14 & Issue importance & 25 & 11 & Racial stereotypes \\
        10 & 12 & Issue ownership & 26 & 2 & Identities \\
        11 & 1 & Climate & 27 & 5 & Role of schools \\
        12 & 1 & Ukraine & 28 & 1 & Great replacement \\
        13 & 4 & Trust experts & 29 & 4 & Racial privilege \\
        14 & 2 & Political disagreement & 30 & 2 & Transgender attitudes \\
        15 & 9 & Abortion & 31 & 4 & Racial resentment \\
        16 & 6 & Abortion emotions & - & - & - \\
    \bottomrule
    \end{tabular}
    \caption{{The baseline block structure (with $L = 31$ blocks) in the ANES questionnaire.}}
    \label{tab:anes true block}
\end{table}

Next, in \cref{tab:anes item matching}, we provide the exact item description for the 11 nodes that appear in \cref{fig:precision_anes}. While plotting \cref{fig:precision_anes}, we omitted edges whose weights are too small (defined by $|\omega_{(j,j'),k}| < 3 \lambda$) for better visibility. Without this thresholding, each $\hat{\bOmega}_k$ has 41, 40, and 41 edges, of which 20 are shared. We also visualize the full $\hat{\bOmega}_k$ (the $144 \times 144$ matrix) in \cref{fig:anes_omega}, which illustrates the overall sparsity of the estimated precision matrices. Note that a negative precision matrix entry $\omega_{(j,j'),k} < 0$ (colored in red) implies a positive partial correlation $\rho_{(j,j'),k}$, which is why the majority of the off-diagonal entries are red.
\begin{figure}
    \centering
    \includegraphics[width=0.9\linewidth]{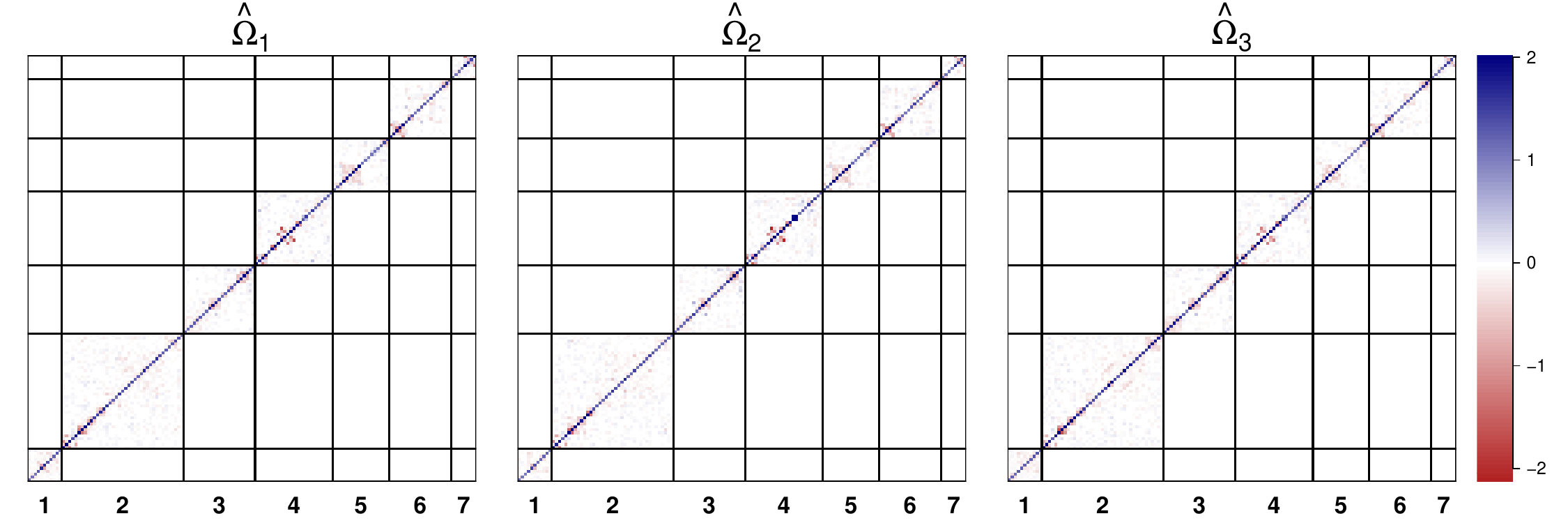}
    \caption{Visualization of the estimated precision matrices $\hat{\bOmega}_k$ for the ANES data. The elements are thresholded at $\pm 2$ for better illustration.}
    \label{fig:anes_omega}
\end{figure}

\begin{table}[h!]
\centering
\begin{tabular}{c l p{11cm}}
\toprule
\textbf{Index} & \textbf{Variable Name} & \textbf{Questionnaire item} \\
\midrule
1  & victimcrimebl  & How worried are you about black citizens being the victims of violent crime? \\
2  & victimpolicebl & How worried are you about black citizens being hurt or killed by the police? \\
3  & rprob          & In the United States, how serious a problem is racism? \\
4  & nonwhite3      & The demographic makeup of America is changing, with the U.S. Census Bureau estimating white people will become a minority. How do you feel? \\
5  & whpriv         & In American society, do you think that being White comes with advantages? \\
6  & blpriv         & In American society, do you think that being Black comes with advantages? \\
7  & hipriv         & In American society, do you think that being Hispanic comes with advantages? \\
8  & aspriv         & In American society, do you think that being Asian comes with advantages? \\
9  & rr1            & Irish, Italians, Jewish and many other minorities overcame prejudice and worked their way up. Blacks should do the same without any special favors. 
 \\
10 & rr2            & Generations of slavery and discrimination have created conditions that make it difficult for Blacks to work their way out of the lower class. \\
11 & rr3            & Over the past few years, Blacks have gotten less than they deserve. \\
\bottomrule
\end{tabular}
\caption{Item identifiers and details for the visualization in \cref{fig:precision_anes}.}
\label{tab:anes item matching}
\end{table}

\subsection{Additional discussion for the HapMap3 data}\label{supp:hapmap discussion}
We provide additional context for the problem of learning block structures in genetics. It is well understood that closely linked SNPs tend to exhibit stronger correlations \citep{gabriel2002structure}. Commonly known as linkage disequilibrium (LD), this phenomenon naturally partitions the human genome into ``haplotype blocks'' of dependent SNPs, and learning such blocks has been an active area of research \citep{berisa2015approximately,yoo2015clique,kim2018new}. 

A common empirical strategy is to cluster SNPs using pairwise Pearson correlations, $\text{Cor}(R_j, R_{j'})$. This can be problematic in the presence of latent classes, as the correlations induced by subpopulation differences may be mistaken for LD. Our analysis illustrates that the proposed method can learn block structure while accounting for unknown subpopulations. We note that some analyses estimate separate block structures for each subpopulation \citep{berisa2015approximately}. However, haplotype block boundaries are often similar across populations \citep{gabriel2002structure}, which motivates our use of a shared block structure. Finally, we note that our modeling does not enforce the Hardy-Weinberg equilibrium (i.e., it does not assume that $R_j \in \{0,1,2\}$ follows a binomial distribution), and allows for potential violations of this assumption.

\end{document}